\documentclass[11pt]{article}
\usepackage[margin=1in]{geometry}
\usepackage[T1]{fontenc}
\usepackage{lmodern}
\usepackage{microtype}
\usepackage{float}      % For H placement
\usepackage{placeins} 
\usepackage{booktabs}
\usepackage{amsmath,amsthm,amssymb}
\usepackage{graphicx}
\usepackage{color}
\usepackage{enumerate}
\usepackage{multicol}
\usepackage{tikz}
\usepackage[utf8]{inputenc}
\usepackage{mathrsfs}
\usepackage[
backend=biber,
style=numeric-comp,
sorting=nyt,
maxbibnames=99,
giveninits=true,
doi=true,
url=false,
isbn=false
]{biblatex}
\usepackage{titlesec}
\usepackage{enumitem}
\usepackage{longtable}
\titlespacing*{\section}{0pt}{2.5ex plus 0.8ex minus 0.2ex}{1.2ex}
\titlespacing*{\subsection}{0pt}{2.0ex plus 0.6ex minus 0.2ex}{0.8ex}

\setlist[enumerate]{topsep=0pt, itemsep=2pt, parsep=0pt, partopsep=0pt}
\setlist[itemize]{itemsep=1pt, topsep=0pt, parsep=0pt}
\allowdisplaybreaks[2]

\theoremstyle{definition}
\newtheorem{axiom}{Axiom}
\numberwithin{axiom}{section}

\AddToHook{env/incorrectaxiom/after}{\addtocounter{axiom}{-1}}
\newtheorem{definition}{Definition}
\theoremstyle{plain}
\newtheorem{theorem}{Theorem}
\newtheorem{lemma}{Lemma}
\newtheorem{proposition}{Proposition}

\theoremstyle{definition}

\theoremstyle{remark}

\DeclareMathOperator{\supp}{supp}

\usepackage[hidelinks]{hyperref} 
\usepackage{cleveref}                 % \Cref{thm:...} 自动输出 Theorem/Lemma...
\usepackage{environ}                  % 捕获环境 BODY
\usepackage{etoolbox}                 % 

\newif\ifdeferproofs
\deferproofstrue   
\newcounter{appproof}[section]
\renewcommand{\theappproof}{\thesection.\arabic{appproof}}

\newcommand{\AllDeferredProofs}{}

\NewEnviron{deferredproof}[1]{%
\ifdeferproofs
\par\noindent\emph{Proof.}%
\hyperref[prf:#1]{See the proof in Appendix~\ref*{prf:#1}}%
\par\medskip
\global\edef\AllDeferredProofs{%
\unexpanded\expandafter{\AllDeferredProofs}%
\unexpanded{%
\par\bigskip
\refstepcounter{appproof}\label{prf:#1}%
\subsection*{Proof~\theappproof\; (Proof of \Cref{#1})}%
\noindent\hyperref[#1]{Back to \Cref*{#1}.}\par\medskip
\begin{proof}
}
\expandafter\unexpanded\expandafter{\BODY}%
\unexpanded{%
\end{proof}
}%
}%
\else
\begin{proof}[Proof of \Cref{#1}]
\BODY
\end{proof}
\fi
}

\author{Wei ShuAng}

\title{A Proof for Generalized Egalitarianism}

\AtBeginDocument{%
\setlength{\abovedisplayskip}{6pt plus 2pt minus 2pt}%
\setlength{\belowdisplayskip}{6pt plus 2pt minus 2pt}%
\setlength{\abovedisplayshortskip}{4pt plus 1pt minus 1pt}%
\setlength{\belowdisplayshortskip}{4pt plus 1pt minus 1pt}%
}

\hypersetup{pdftitle={A Proof for Generalized Egalitarianism},pdfauthor={Wei ShuAng},pdfsubject={Second-variation revision of Lorenz-area rigidity}}

\begin{document}

\maketitle

\begin{abstract}
This paper proves a rigidity theorem characterizing the Gini-type Lorenz-area ordering under general decision-theoretic constraints together with weak Lorenz dominance, thereby resolving the characterization problem pursued in this line of research on egalitarianism. This does not, of course, bring the broader inquiry to an end. In a certain sense, it may instead mark a step beyond the infancy of our attempt to connect social choice theory, welfare economics, and formal moral philosophy through a common mathematical language.
\end{abstract}

\clearpage
\setcounter{tocdepth}{2}
\tableofcontents
\clearpage

\section{Introduction and the main theorem}

This paper studies the formal representation of egalitarian evaluation. It will be shown that, under the constraints imposed by structural axioms of decision theory, with convex curves as the objects of ordering, a generalized Gini functional can be deduced as an affine representation, unique up to positive affine transformation. It's important to note that this paper does not claim to derive anything normative from non-moral premises: the egalitarian moral judgment is already contained in the simple weak dominance axiom. What is interesting is the fact that together with the remaining axioms, this moral premise yields, through mathematical deduction, a Gini-type representation that appears considerably stronger. 

Indeed, the Lorenz curve has been used as a standard device for representing relative distributions and measuring inequality \parencite{arnoldSarabia2018}. Like many useful formulas throughout history, Lorenz curves and Gini coefficients have an intuitive appeal that does not by itself provide an axiomatic justification. However, due to their intuitive usefulness---they do seem to capture many profound features of inequality---these two concepts have gained widespread acceptance, even though, from a somewhat pedantic perspective, their intuitive appeal alone does not explain why they seem so natural, and the choice of relevant properties could be deemed arbitrary. Two broad approaches to this problem can be distinguished. The first
one, developed in work by Kolm and by Blackorby and Donaldson,
starts with axioms that encode explicit normative evaluations and
seeks representations of inequality and aggregate social welfare functions
\parencite{kolm1976a,blackorbyDonaldson1978,donaldsonWeymark1980,weymark1981}.
The other line of research, instead, tries to integrate the study
of distributive ethics with the classical framework of
decision-theoretic rationality with minimal normative input \parencite{aaberge2001,yaari1988}.
To provide a more vivid picture, under the current framing, an abstract domain of curves corresponding to patterns of distribution of a relevant quantity, such as money or individual utility, will be the basic space of investigation. Indeed, the basic idea behind the following proof is to model egalitarian evaluation as a choice among normalized Lorenz curves rather than among outcomes or lotteries in the ordinary EUT framework.

This line of ingenious reasoning, which treats objects other than abstractions of first-order outcomes or lotteries as objects of investigation using decision-theoretic tools, has an analytical parallel in Arrow–Pratt's much-celebrated approach to quantifying risk aversion. Indeed, in the pioneering works \parencite{pratt1964,arrow1965}, the relevant results investigate the risk-bearing properties of individual preferences through the curvature of their utility functions. More specifically, this approach formulates risk attitudes directly in terms of preferences over non-certain monetary gambles, assuming expected utility. For \(u,v\in C^2(I)\) with \(u',v'>0\) on a common open wealth interval \(I\), the inequality \(-v''/v'\ge -u''/u'\) throughout \(I\) is equivalent to \(v(x)=f(u(x))\) for an increasing concave function \(f\). Equivalently, \(v\) assigns a weakly lower certainty equivalent to every finitely supported lottery taking values in \(I\). This line of thought has been systematically developed in modern quantitative finance; see \textcite[Chapter~2]{gollier2001} and \textcite[Section~2]{follmerSchied2016}. The curvature comparison is invariant under positive affine changes of either utility index.
The research approach most directly relevant here, treating Lorenz curves as the basic objects of analysis, can be found in Aaberge~\cite{aaberge2001}. He adopted the standard decision-theoretic requirements of completeness, mixture independence, continuity, weak dominance, and nontriviality \parencite{hersteinMilnor1953,fishburn1970}, together with a dual independence condition defined through mixtures in the inverse chart as axiomatic conditions for the representation results.

Aaberge's proposed route to the proof of Lorenz area ordering proceeds through two separate one-sided primal and dual integral-representation theorems. Unfortunately, in our earlier paper \textit{Measure of Morality} \parencite{wei2026measuremoralitymathematicaltheory}, we provided two counterexamples, one to each of the one-sided representation theorems proposed by Aaberge in his original papers. Each one-sided axiom package admits endpoint singularities. The formation of singularities block the existence of finite integral-kernel representations assumed in the original proof strategy. Therefore they invalidate the proposed proof strategies in \cite{aaberge2001}.

Our earlier paper provided a new proof of the representation by introducing a strengthened regularity condition. However, repairing the integral representation by introducing the unmotivated metric language of an additional axiom makes the repair look very ad hoc in contrast with the rest of the hypotheses. This alien character gives us some uneasiness.

As a result, the present paper will establish the Lorenz-ordering representation without the additional quantitative Archimedean hypotheses. This paper is structured as follows: the rest of this section introduces the background information regarding the axioms, the domain, and the auxiliary functions that will be used.

Section two investigates a specific attempt to construct an ordering that resembles the construction of a one-sided counterexample under the combined constraints of both primal and dual independence, and then explains why such a construction is impossible. The third and fourth sections are devoted to the proof of the main theorem itself.

\subsection{Domain, axioms, and the main theorem}
\label{sec:axioms-main-theorem}

\begin{definition}[Lorenz domain and endpoint conditions]
\label{def:lorenz-domain}
Equip \(C([0,1])\) with \(\|f\|_\infty:=\max_{0\le x\le1}|f(x)|\). Define
\[
\begin{aligned}
\mathcal{L}:=\{L\in C([0,1]):\;&L(0)=0,\quad L(1)=1,\\
&L(x)\le L(y)\quad(0\le x\le y\le1),\\
&L(\lambda x+(1-\lambda)y)\le\lambda L(x)+(1-\lambda)L(y)\\[-2pt]
&\hspace{7em}(x,y\in[0,1],\ \lambda\in[0,1])\},\\
\mathcal{L}^\uparrow:=\{L\in\mathcal{L}:\;&L(x)<L(y)\quad(0\le x<y\le1)\}.
\end{aligned}
\]

The equality curve is \(E(x):=x\). The area and its Gini normalization are
\(A(L):=\int_0^1 L(x)\,dx\) and \(J(L):=1-2A(L)\).
\end{definition}

\begin{definition}[Inverse chart and the two mixtures]
\label{def:mixtures}
Let
\[
Y:=\{F\in C([0,1]):F(0)=0,\ F(1)=1,\ F\text{ is strictly increasing and concave}\}.
\]

Inversion means the ordinary inverse of a continuous, strictly increasing bijection of \([0,1]\).
For \(L,M\in\mathcal{L}\), their primal mixture is
\(\lambda L+(1-\lambda)M\), where \(\lambda\in[0,1]\).
For \(L,M\in\mathcal{L}^\uparrow\), their dual mixture is
\[
L\oplus_\lambda M:=\bigl(\lambda L^{-1}+(1-\lambda)M^{-1}\bigr)^{-1}
\qquad(\lambda\in[0,1]).
\]

Both operations take values in their stated domains; inversion identifies \(\mathcal{L}^\uparrow\) with \(Y\), as proved in Lemma~\ref{lem:domain-inversion}.
No dual mixture involving a noninvertible curve is used.
\end{definition}

For a binary relation \(\succsim\), write \(L\sim M\) when both \(L\succsim M\) and \(M\succsim L\),
and write \(L\succ M\) when \(L\succsim M\) but not \(M\succsim L\).
A real-valued function \(V\) represents this relation when
\(L\succsim M\iff V(L)\ge V(M)\).
An affine function on a convex domain \(X\) is a function \(V\) satisfying
\(V(\lambda x+(1-\lambda)y)=\lambda V(x)+(1-\lambda)V(y)\)
for every \(x,y\in X\) and \(\lambda\in[0,1]\).

\begin{axiom}[Complete preorder]
\label{ax:weak-order}
For every \(L,M\in\mathcal{L}\), either \(L\succsim M\) or \(M\succsim L\).
For every \(L,M,N\in\mathcal{L}\),
\(L\succsim M\) and \(M\succsim N\) imply \(L\succsim N\).
\end{axiom}

\begin{axiom}[Primal mixture independence]
\label{ax:primal-independence}
For every \(L,M,N\in\mathcal{L}\) and \(\lambda\in(0,1)\),
\[
L\succsim M\quad\Longleftrightarrow\quad
\lambda L+(1-\lambda)N\succsim\lambda M+(1-\lambda)N.
\]
\end{axiom}

\begin{axiom}[Continuity in the uniform topology]
\label{ax:continuity}
The graph of \(\succsim\) is closed in \(\mathcal{L}\times\mathcal{L}\), with the product of the relative uniform topologies.
Explicitly, if \(L_n,M_n,L,M\in\mathcal{L}\), \(L_n\succsim M_n\) for every \(n\), and
\(\|L_n-L\|_\infty+\|M_n-M\|_\infty\to0\), then \(L\succsim M\).
This is the continuity hypothesis used below, not merely continuity along each mixture segment.
\end{axiom}

\begin{axiom}[Weak Lorenz dominance]
\label{ax:dominance}
For every \(L,M\in\mathcal{L}\),
\(L(x)\ge M(x)\) for all \(x\in[0,1]\) implies \(L\succsim M\).
\end{axiom}

\begin{axiom}[Nontriviality on the stated domains]
\label{ax:nontriviality}
There exist \(L,M\in\mathcal{L}\) with \(L\succ M\).
The restriction to \(\mathcal{L}^\uparrow\) is also nontrivial: there exist \(L^+,M^+\in\mathcal{L}^\uparrow\) with \(L^+\succ M^+\).
\end{axiom}

\begin{axiom}[Dual mixture independence]
\label{ax:dual-independence}
For every \(L,M,N\in\mathcal{L}^\uparrow\) and \(\lambda\in(0,1)\),
\[
L\succsim M\quad\Longleftrightarrow\quad
L\oplus_\lambda N\succsim M\oplus_\lambda N.
\]
\end{axiom}

\begin{theorem}[Rigid characterization of Lorenz-curve ordering from primal and dual mixture structures]
\label{thm:direct-gini}
Let \(\succsim\) be a binary relation on the domain of Definition~\ref{def:lorenz-domain} satisfying Axioms~\ref{ax:weak-order}--\ref{ax:dual-independence}, with the mixtures and continuity understood as defined above.
Then there exist constants \(c_0\in\mathbb R\) and \(c_1>0\) such that
\[
L_1\succsim L_2
\quad\Longleftrightarrow\quad
c_0+c_1A(L_1)\ge c_0+c_1A(L_2)
\qquad(L_1,L_2\in\mathcal{L}).
\]

Equivalently, \(\succsim\) is the Lorenz-area order, or the order represented with the opposite inequality by \(J(L)=1-2A(L)\).
\end{theorem}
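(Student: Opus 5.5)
The plan is to build two affine representations, one for each mixture structure, and then show that the requirement that they define the same order pins both down as Lebesgue area. The expected key identity is \(A(L)+A(L^{-1})=1\) for \(L\in\mathcal{L}^\uparrow\), which holds because the graphs of \(L\) and \(L^{-1}\) split the unit square.

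\emph{Step 1 (primal representation).} \(\mathcal{L}\) is convex and Axioms~\ref{ax:weak-order}--\ref{ax:continuity} hold. The Herstein--Milnor mixture-space theorem then gives an affine \(V:\mathcal{L}\to\mathbb R\) representing \(\succsim\), unique up to positive affine transformation, and uniform continuity of \(\succsim\) makes \(V\) continuous. By Axiom~\ref{ax:dominance}, \(V\) is monotone for the pointwise order.

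\emph{Step 2 (dual representation).} Transport \(\succsim|_{\mathcal{L}^\uparrow}\) to \(Y\) through inversion, using Lemma~\ref{lem:domain-inversion}. Dual mixtures become ordinary convex combinations in \(Y\), so Axiom~\ref{ax:dual-independence} is primal independence on the convex set \(Y\). Nontriviality on \(\mathcal{L}^\uparrow\) and continuity then give an affine \(U:Y\to\mathbb R\) such that \(L\mapsto U(L^{-1})\) represents \(\succsim\) on \(\mathcal{L}^\uparrow\). Two points need care here.
\begin{itemize}
\item I must check that inversion is continuous between the relevant uniform topologies on strictly increasing curves. This holds because inversion of homeomorphisms of \([0,1]\) is uniformly continuous in \(\|\cdot\|_\infty\).
\item Since \(L\ge M\) pointwise iff \(L^{-1}\le M^{-1}\), the function \(U\) is antitone.
\end{itemize}

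\emph{Step 3 (local kernels).} Both \(V\) and \(U\circ\mathrm{inv}\) represent the same order on \(\mathcal{L}^\uparrow\), so \(U(L^{-1})=\varphi(V(L))\) for a strictly increasing \(\varphi\), which is continuous on the interval \(V(\mathcal{L}^\uparrow)\).
\begin{itemize}
\item I avoid global Riesz extensions, since these are exactly where the earlier endpoint singularities live. Instead I work at interior base points \(L\in\mathcal{L}^\uparrow\) that are smooth with \(L''>0\).
\item At such an \(L\), take perturbations \(L+\varepsilon h\) with \(h\) smooth and compactly supported in \((0,1)\), small enough to stay in \(\mathcal{L}^\uparrow\).
\item On such a local patch \(V\) is affine and monotone, so \(V(L+\varepsilon h)-V(L)=\varepsilon\int h\,d\mu\) for a nonnegative Radon measure \(\mu\) on \((0,1)\), obtained by Riesz on compact subintervals.
\item Similarly \(U(F+\varepsilon g)-U(F)=-\varepsilon\int g\,d\nu\) with \(\nu\ge0\).
\item To first order, \((L+\varepsilon h)^{-1}=L^{-1}-\varepsilon\,(h/L')\circ L^{-1}+O(\varepsilon^2)\).
\end{itemize}

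\emph{Step 4 (the two variations).}
\begin{itemize}
\item \emph{Affinity of \(\varphi\).} On the primal segment \(\varepsilon\mapsto L+\varepsilon h\), \(V\) is affine. The dual side \(U((L+\varepsilon h)^{-1})\) has the nonzero quadratic term from the second-order expansion of inversion. Using primal segments in which the second-order term can be cancelled, for instance segments through a pair of curves and their dual mixture, I expect to force \(\varphi\) to be affine.
\item \emph{Kernel identity.} With \(\varphi\) affine, matching first variations gives, for all admissible \(h\) and all such \(L\),
\[
\int h\,d\mu \;=\; c\int \frac{h(L^{-1}(y))}{L'(L^{-1}(y))}\,d\nu(y),\qquad c>0.
\]
Substituting \(y=L(x)\) turns the right side into \(c\int h(x)\,\frac{d(\nu\circ L)(x)}{L'(x)}\). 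Hence \(\mu=c\,(L^{*}\nu)/L'\) as measures on \((0,1)\).
\item \emph{Rigidity.} Here \(\mu\) and \(\nu\) come from the fixed global \(V\) and \(U\) and may depend on \(L\) only through their affine extension. Independence of the choice of \(L\) should therefore force \(\mu\) and \(\nu\) to be constant multiples of Lebesgue measure. The computation is to vary \(L\) among smooth strictly convex curves agreeing near a point but with different slopes and compare.
\item Consistency with \(A(L)+A(L^{-1})=1\) shows that the Lebesgue pair is indeed compatible, with \(c=1\).
\end{itemize}

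\emph{Step 5 (extension and conclusion).}
\begin{itemize}
\item Since \(\mu\) is Lebesgue, \(V=c_0+c_1A\) on the dense set of smooth strictly convex curves, with \(c_1>0\) by nontriviality.
\item Both sides are uniformly continuous on \(\mathcal{L}\), and \(A\) is \(\|\cdot\|_\infty\)-continuous. So the equality extends to all of \(\mathcal{L}\), first to \(\mathcal{L}^\uparrow\) and then to noninvertible curves by uniform approximation, e.g.\ by \((1-t)L+tE\).
\item The statement about \(J\) is immediate.
\end{itemize}

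The main obstacle is Step 4. It needs two things: a rigorous argument that \(\varphi\) is affine, which needs a good choice of test segments compatible with both mixture structures, and a proof that the kernel identity forces Lebesgue measure without assuming \(\mu\) or \(\nu\) is finite near the endpoints. I would first try to keep all test perturbations supported in compact subintervals of \((0,1)\). Endpoint mass of \(\mu\) would then be excluded a posteriori by the kernel identity, since any atom or density blowup at \(0\) or \(1\) would have to be matched by \(\nu\) rescaled by \(L'\) at the endpoints. As \(L\) ranges over \(\mathcal{L}^\uparrow\), \(L'\) at the endpoints takes arbitrarily different values, which this matching cannot accommodate.
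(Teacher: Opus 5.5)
There is a genuine gap, and it lies at the endpoints, which your plan deliberately avoids. Test your argument against the left-endpoint-slope order. On smooth curves with positive endpoint slopes, take your primal representative to be $V(L)=L'(0)$, and in the inverse chart take $F\mapsto -F'(0)=-1/L'(0)$. The first map is affine under primal mixtures and the second under dual mixtures. Both are monotone in the required directions and they induce the same order, but the transfer map is $\varphi(v)=-1/v$, which is not affine.

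Compactly supported perturbations do not change $L'(0)$. So $\mu=\nu=0$ on $(0,1)$ and your kernel identity holds vacuously. Your endpoint-matching heuristic is also satisfied exactly, because $(L^{-1})'(0)=1/L'(0)$ and the non-affine $\varphi$ absorbs the dependence on $L'(0)$. That heuristic therefore presupposes the affinity of $\varphi$. But no analysis that uses only affinity and monotonicity on such smooth curves can force affinity of $\varphi$, and ``segments through a pair of curves and their dual mixture'' is not an actual mechanism. In the paper these endpoint functionals are genuine solutions of every algebraic constraint (Lemma~\ref{lem:functional-classification}), and only uniform continuity removes them (Proposition~\ref{prop:endpoint-exclusion}).

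The same blind spot breaks Step~5. ``$\mu$ is Lebesgue, hence $V=c_0+c_1A$ on smooth strictly convex curves'' does not follow. For smooth curves with different endpoint slopes, the difference is not compactly supported in $(0,1)$. A positive linear part such as $h\mapsto c\int_0^1h+\alpha h'(0)-\beta h'(1)$ with $\alpha,\beta\ge0$ agrees with $c\,dx$ on every interior test.

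Two ingredients are missing.
\begin{enumerate}
\item You need the linear part of the primal representative on all smooth endpoint-vanishing directions, not only on $C_c^\infty((0,1))$. The paper builds it from admissible segments through $E\pm\tau q$ and proves $|T_0(h)|\le\|h/q\|_\infty T_0(q)$ (Proposition~\ref{prop:positive-directions}). Since $h/q$ extends continuously with endpoint values $h'(0)/q'(0)$ and $h'(1)/q'(1)$, this bound confines whatever interior tests cannot see to multiples of $h'(0)$ and $h'(1)$.
\item You need uniform continuity at $E$, applied to curves $E-\eta b_r$ whose endpoint slope deviates by a fixed $\eta$ while $\|b_r\|_\infty\to0$.
\end{enumerate}

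The paper's route ties these together in four steps:
\begin{enumerate}
\item Compare $E-tg$ with $E-tT(g)q$; they have equal primal gap, hence equal dual gap.
\item Expand the inverse to second order with a $C^1$ remainder, giving $m=T$ and $T((hk)')=cT(h)T(k)$.
\item Classify the solutions as area ($c=0$, affine transfer) or an endpoint slope ($c=2q'(0)$ or $2q'(1)$, non-affine transfer).
\item Exclude the endpoint slopes with the boundary layers.
\end{enumerate}

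So affinity of $\varphi$ is not a separable preliminary lemma; it is equivalent to excluding the endpoint alternatives. Interior base points can at best pin down $\mu$ on $(0,1)$. One way is the disjoint-support second-variation argument of Section~\ref{sec:weighted-obstruction}, which gives local affinity of $\varphi$ when $\mu$ charges two disjoint subintervals. The endpoint components, and in particular the case $\mu|_{(0,1)}=0$, still have to be handled with directions that reach the endpoints and with continuity at $E$.
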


The proof of the main Theorem~\ref{thm:direct-gini} is deferred to Sections~\ref{sec:proof-main} and~\ref{sec:proof-classification}.
Before proceeding to the proof proper, we examine a natural attempt to construct a counterexample and explain why it fails. The construction succeeds in satisfying the primal requirements, but the additional dual independence condition forces the ordering back to the area order. Understanding this failure explains why comparing the two mixture structures is useful.

\section{Singular weighted-integral obstruction}
\label{sec:weighted-obstruction}

This section will investigate a specific attempt to construct an ordering that
resembles the one-sided counterexample. The attempt starts with a positive weight that is nonintegrable near
zero, while \(\int_0^1u\,w(u)\,du\) remains finite.
We will verify that its weighted functional satisfies all the primal
requirements, but has no global Lipschitz bound in the supremum norm.
Affine uniqueness then rules out the Lorenz-area ordering, whose
affine representatives necessarily satisfy such a bound. The question is whether this failure of the area
representation can persist when dual independence is imposed as well.
The calculation will show that this additional requirement forces the
weight to be constant and therefore rules out the singular candidates.
\subsection{Singular weights satisfying the primal axioms}
Singular positive weights can define orderings that satisfy all the
primal axioms while differing from the Lorenz-area order.

Consider the following family:
\[
w_p(u):=u^{-p},\qquad 1\le p<2,
\]
These weights satisfy
\[
\int_0^1w_p(u)\,du=\infty,\qquad
\int_0^1u\,w_p(u)\,du=\frac1{2-p}<\infty.
\]
In this subsection, fix a weight satisfying
\[
w\in C^1((0,1)),\qquad w(u)>0\quad(0<u<1),
\]
and
\[
\int_0^1w(u)\,du=\infty,\qquad
\int_0^1u\,w(u)\,du<\infty.
\]
Define
\[
U_w(L):=\int_0^1w(u)L(u)\,du,
\qquad
L\succsim M\quad\Longleftrightarrow\quad U_w(L)\ge U_w(M).
\]
These conditions allow \(w\) to grow without bound near zero.
Every Lorenz curve satisfies \(0\le L(u)\le u\), so the integrable
function \(u\,w(u)\) controls the product \(w(u)L(u)\).
This makes the evaluation finite although \(w\) itself is not
integrable. After verifying the primal axioms, we show that
nonintegrability of \(w\) prevents \(U_w\) from satisfying a global
Lipschitz bound. This will rule out the area representation.

It's easy to verify that the induced ordering satisfies the one-sided primal axioms.
Put \(C_w=\int_0^1u\,w(u)\,du<\infty\).
Since \(0\le L(u)\le u\), one has \(0\le U_w(L)\le C_w\).
For \(L,M\in\mathcal{L}\), put \(d=\|L-M\|_\infty\).
The pointwise bounds give
\[
\begin{aligned}
|L(u)-M(u)|&\le\min\{u,d\},\\
|U_w(L)-U_w(M)|&\le\omega_w(d),\qquad
\omega_w(d):=\int_0^1\min\{u,d\}w(u)\,du.
\end{aligned}
\]
Here \(0\le\min\{u,d\}w(u)\le u\,w(u)\), and the upper
bound is integrable. Dominated convergence therefore gives
\(\omega_w(d)\to0\) as \(d\downarrow0\).
Thus \(U_w\) is continuous on the stated domain.
Linearity of the integral makes \(U_w\) affine under primal mixtures.
In particular, for \(L,M,N\in\mathcal{L}\) and \(0<\lambda<1\),
\[
\begin{aligned}
&U_w(\lambda L+(1-\lambda)N)
-U_w(\lambda M+(1-\lambda)N)\\
&\qquad=\lambda\bigl(U_w(L)-U_w(M)\bigr).
\end{aligned}
\]
Since \(\lambda>0\), the two differences have the same sign. This
is precisely primal mixture independence. If \(L\ge M\) pointwise,
positivity of \(w\) also gives
\[
U_w(L)-U_w(M)=\int_0^1w(u)(L(u)-M(u))\,du\ge0,
\]
which proves weak Lorenz dominance.
For nontriviality, take any \(Q\in\mathcal{L}^\uparrow\) with \(Q\ne E\).
For example, \(Q(u)=u^2\) has these properties.
The continuous gap \(E-Q\) is nonnegative and positive somewhere,
so there is a compact interval \([a,b]\subset(0,1)\), with \(a<b\),
on which it is strictly positive. Hence
\[
\begin{aligned}
U_w(E)-U_w(Q)
&=\int_0^1w(u)(u-Q(u))\,du\\
&\ge\left(\min_{u\in[a,b]}w(u)\right)
\int_a^b(u-Q(u))\,du>0.
\end{aligned}
\]
Therefore the induced preorder is nontrivial on both domains.
For any continuous real-valued representative, its weak-order graph is closed because the defining difference is a continuous function; this verifies Axiom~\ref{ax:continuity} here.
The numerical ordering is automatically complete and transitive, so
all of Axioms~\ref{ax:weak-order}--\ref{ax:nontriviality} have now
been verified.

If \(U_w\) and \(A\) represented the same ordering, their
primal affinity and nonconstancy would give, by affine uniqueness,
\[
U_w(L)=c_0+c_1A(L)
\qquad(L\in\mathcal{L}),\qquad c_1>0.
\]
It would follow that
\[
\begin{aligned}
|U_w(L)-U_w(M)|
&=c_1|A(L)-A(M)|\\
&\le c_1\int_0^1|L(u)-M(u)|\,du\\
&\le c_1\|L-M\|_\infty.
\end{aligned}
\]

For \(0<\varepsilon<1\), let
\[
L_\varepsilon(u):=\frac{\max\{u-\varepsilon,0\}}{1-\varepsilon}.
\]
Its slopes are \(0\) and \(1/(1-\varepsilon)\), and its endpoint
values are zero and one, so \(L_\varepsilon\in\mathcal{L}\).
Subtracting from \(E(u)=u\) gives
\[
E(u)-L_\varepsilon(u)=
\begin{cases}
u,&0\le u\le\varepsilon,\\[3pt]
\dfrac{\varepsilon(1-u)}{1-\varepsilon},
&\varepsilon\le u\le1.
\end{cases}
\]
The first piece increases to \(\varepsilon\), and the second
decreases from \(\varepsilon\). Therefore
\(\|E-L_\varepsilon\|_\infty=\varepsilon\).
Recall that \(C_w=\int_0^1u\,w(u)\,du<\infty\).
Every tail integral is finite, since
\(\int_\varepsilon^1w(u)\,du\le C_w/\varepsilon\).
The corresponding difference quotient satisfies
\[
\begin{aligned}
\frac{U_w(E)-U_w(L_\varepsilon)}{\|E-L_\varepsilon\|_\infty}
&=\frac1\varepsilon\int_0^\varepsilon u\,w(u)\,du
+\frac1{1-\varepsilon}\int_\varepsilon^1(1-u)w(u)\,du\\
&\ge\int_\varepsilon^1(1-u)w(u)\,du\\
&=\int_\varepsilon^1w(u)\,du
-\int_\varepsilon^1u\,w(u)\,du\\
&\ge\int_\varepsilon^1w(u)\,du-C_w
\longrightarrow+\infty
\qquad(\varepsilon\downarrow0).
\end{aligned}
\]

Therefore, it shows every singular weight satisfying the stated conditions gives a continuous
primal-affine representative satisfying
Axioms~\ref{ax:weak-order}--\ref{ax:nontriviality}, but without a
global Lipschitz bound. Thus these nonintegrable weighted orderings cannot be represented by the Lorenz-area functional.

We now examine what happens when dual mixture independence is also imposed.

\subsection{Testing the same ordering in inverse coordinates}

The primal-axiom verification above and the inverse-coordinate
argument below apply more generally to any positive
\(w\in C^1((0,1))\) with \(\int_0^1u\,w(u)\,du<\infty\).

For strictly increasing \(L\), let \(F=L^{-1}\).
Dual mixtures of \(L\)'s become ordinary mixtures of their inverses,
so we will rewrite the same numerical evaluation in terms of \(F\).
At height \(t\), the part of the interval where \(L(u)\ge t\) is
\([F(t),1]\). Its total weight is described by the tail integral
\(H(x):=\int_x^1w(u)\,du\) for \(0<x\le1\).
Integrating these weighted horizontal slices will recover \(U_w(L)\).

For every \(x>0\), \(0\le H(x)\le C_w/x<\infty\).
For integration only, set \(w(0)=w(1)=H(0)=0\). These are values on null sets, not assertions about endpoint limits.
These assignments do not affect the integrals: a strictly increasing \(F\in Y\) takes the value zero only at \(t=0\).
Also \(H(1)=0\), and
\[
0\le H(x)\le\frac1x\int_x^1u\,w(u)\,du\longrightarrow0
\qquad(x\uparrow1).
\]
Thus \(H\) is continuous from the left at \(1\). Since \(w\in C^1((0,1))\), one has \(H\in C^2((0,1)), \allowbreak\; H'(x)=-w(x), \allowbreak\; H''(x)=-w'(x)\).

We now verify the horizontal-slice identity for
\(L\in\mathcal{L}^\uparrow\) and \(F=L^{-1}\):
\[
U_w(L)=\int_0^1 H(F(t))\,dt.
\tag{1}
\]
Indeed, since \(L(u)=\int_0^{L(u)}dt\), Tonelli's theorem gives
\begin{align*}
U_w(L)
&=
\int_0^1 w(u)L(u)\,du\\
&=
\int_0^1\int_0^{L(u)}w(u)\,dt\,du\\
&=
\int_0^1\int_{\{u:\,t\le L(u)\}}w(u)\,du\,dt.
\end{align*}
For \(0<t<1\), because \(L\) is strictly increasing, \(t\le L(u) \iff L^{-1}(t)\le u \iff F(t)\le u\). Therefore
\[
\int_{\{u:\,t\le L(u)\}}w(u)\,du
=
\int_{F(t)}^1w(u)\,du
=
H(F(t)).
\]
Substituting this into the preceding display proves (1).

Thus, on the inverse domain
\[
\begin{aligned}
Y := \Bigl\{ F:[0,1]\to[0,1]: &F \text{ is continuous, strictly increasing, concave},\\
&\ F(0)=0,\ F(1)=1 \Bigr\},
\end{aligned}
\]
the same preorder is represented by
\[
J_H(F):=\int_0^1 H(F(t))\,dt.
\tag{2}
\]

Now suppose the preorder also satisfies the dual package on \(\mathcal{L}^\uparrow\). When we transport the
preorder to \(Y\) by inversion, \(F_1\succsim_Y F_2 \iff F_1^{-1}\succsim F_2^{-1}\), dual independence becomes ordinary mixture independence on the convex set \(Y\) \parencite{yaari1987}. Lemma~\ref{lem:domain-inversion} transports closed-graph continuity to \(Y\), and weak dominance makes \(E\) a greatest element of the transported preorder.
The continuous affine representation result stated in
Subsection~\ref{subsec:affine-setup} therefore supplies a continuous
affine representative \(\mathscr A:Y\to\mathbb R\).
The two functionals \(J_H\) and \(\mathscr A\) represent the same preorder on \(Y\). Therefore,
whenever \(\mathscr A(F)=\mathscr A(G)\), one has \(F\sim G\) under the transported preorder, and hence
\[
J_H(F)=J_H(G).
\tag{3}
\]
This equality will be the constraint we use for the following part, whenever the affine
representative regards two inverse curves as indifferent, the
integral functional \(J_H\) must assign them the same value.

\subsection{Two Local Perturbations}
The above constraint will be applied on a auxiliary family with two parameters.
On that family the affine representative depends on a single affine
coordinate, so fibre constancy will force the integral to have a
Hessian of rank at most one. If \(H''\) is nonzero somewhere, two
perturbations with disjoint supports will instead give two nonzero
diagonal entries and a zero mixed entry, hence rank two.
The following construction makes both calculations legitimate.

A two-direction test will be essential to this argument because curvature along one
direction could be explained by a nonlinear change of numerical scale.
Using two directions with disjoint supports lets us test whether
the two curvatures could arise from dependence on just one affine
coordinate.

Consider a smooth reference function \(F_0\) with
\[
F_0(0)=0,\qquad F_0(1)=1,\qquad
F_0'(t)>0,\qquad F_0''(t)<0\quad(0\le t\le1).
\]
It traverses the whole interval of possible inverse values and has
strictly negative curvature, leaving room for small perturbations
supported away from the endpoints.
A general construction starts from a smooth slope profile
\(v\in C^\infty([0,1])\) with \(v>0\) and \(v'<0\), and sets
\[
F_0(t):=\frac{\int_0^t v(r)\,dr}{\int_0^1 v(r)\,dr}.
\]
The normalization gives \(F_0(0)=0\) and \(F_0(1)=1\), while
\[
F_0'(t)=\frac{v(t)}{\int_0^1v(r)\,dr}>0,
\qquad
F_0''(t)=\frac{v'(t)}{\int_0^1v(r)\,dr}<0.
\]
Every reference curve with the stated properties can be obtained
this way by taking \(v=F_0'\).
Thus \(F_0\in Y\). 
For the sake of argument, suppose that \(H''\not\equiv 0 \allowbreak\;\text{on }(0,1)\). Then there exists \(x_0\in(0,1)\) such that \(H''(x_0)\neq0\). Since \(F_0\) maps \((0,1)\) onto \((0,1)\), choose \(t_0\in(0,1)\) such that \(F_0(t_0)=x_0\). Choose a compact interval \(I=[a,b]\subset(0,1)\) with \(a<t_0<b\), small enough that
\[
H''(F_0(r))H''(F_0(t_0))>0\qquad(r\in I).
\]
Set \(c:=\min_{r\in I}|H''(F_0(r))|>0\).
Then either \(H''(F_0(r))\ge c\) throughout \(I\), or \(H''(F_0(r))\le-c\) throughout \(I\).
\begin{equation*}
|H''(F_0(r))|\ge c\quad(r\in I),\qquad
r\mapsto H''(F_0(r))\text{ has a constant sign on }I.
\tag{4}
\end{equation*}
We place both perturbations inside this same interval \(I\), where
\(H''(F_0(r))\) has a fixed nonzero sign. This prevents cancellation
within either diagonal second derivative.
Their supports will be disjoint so that the mixed derivative vanishes.
Choose any two intervals with
\(a<u_1<v_1<u_2<v_2<b\), and nonzero functions
\(\eta_i\in C_c^\infty((a,b))\) supported in \([u_i,v_i]\).
For an explicit construction, recall the standard smooth function
\(
f(x)=
\begin{cases}
e^{-1/x},&x>0,\\
0,&x\le0.
\end{cases}
\)
The classical limit \(x^{-m}e^{-1/x}\to0\) as \(x\downarrow0\),
for every integer \(m\ge0\), gives
\(f\in C^\infty(\mathbb R)\) with \(f^{(n)}(0)=0\) for all \(n\ge0\).
Now set
\[
\eta_i(r):=f\bigl((r-u_i)(v_i-r)\bigr),
\qquad i=1,2.
\]
Each \(\eta_i\) is smooth as a composition of smooth functions.
Since \((r-u_i)(v_i-r)>0\) exactly when \(u_i<r<v_i\),
we have \(\eta_i(r)>0\) on that interval and \(\eta_i(r)=0\) outside it.
Thus \(\supp\eta_i=[u_i,v_i]\subset(a,b)\), so
\(\eta_i\in C_c^\infty((a,b))\), \(\eta_i\not\equiv0\), and
\[
\supp\eta_1\cap\supp\eta_2=\varnothing.
\tag{5}
\]
The following construction will ensure that both perturbation parameters can vary
independently in an open square while the curve stays in
\(Y\). They control three things: positive slope on \(I\),
negative curvature there, and inverse values bounded away from
zero and one. The last condition will also keep the differentiation
away from any endpoint singularity of \(H\).
Write
\[
B_j(r):=|\eta_1^{(j)}(r)|+|\eta_2^{(j)}(r)|
\qquad(r\in I,\ j=0,1,2).
\]
For each fixed \(r\in I\) and \(\delta>0\),
\[
\max_{|s|,|t|\le\delta}
|s\eta_1^{(j)}(r)+t\eta_2^{(j)}(r)|=\delta B_j(r).
\]
The four pointwise margins therefore determine the parameter threshold
\[
\begin{aligned}
\delta_{\rm adm}^{-1}
:=\max_{r\in I}\max\left\{
\frac{B_1(r)}{F_0'(r)},\frac{B_2(r)}{-F_0''(r)},
\frac{B_0(r)}{F_0(r)},\frac{B_0(r)}{1-F_0(r)}
\right\}\in(0,\infty).
\end{aligned}
\]
Fix any \(0<\delta<\delta_{\rm adm}\), and define
\(F_{s,t}=F_0+s\eta_1+t\eta_2\) for \(|s|,|t|<\delta\).
For \(r\in I\),
\[
\begin{aligned}
F_{s,t}'(r)
&\ge F_0'(r)-\delta B_1(r)
\ge(1-\delta/\delta_{\rm adm})F_0'(r)>0,\\
-F_{s,t}''(r)
&\ge-F_0''(r)-\delta B_2(r)
\ge(1-\delta/\delta_{\rm adm})(-F_0''(r))>0,\\
F_{s,t}(r)
&\ge F_0(r)-\delta B_0(r)
\ge(1-\delta/\delta_{\rm adm})F_0(r)>0,\\
1-F_{s,t}(r)
&\ge1-F_0(r)-\delta B_0(r)
\ge(1-\delta/\delta_{\rm adm})(1-F_0(r))>0.
\end{aligned}
\]
In particular, the common bounds are now established:
\[
\begin{aligned}
R_-&:=\min_{r\in I}\bigl(F_0(r)-\delta B_0(r)\bigr)>0,\\
R_+&:=\max_{r\in I}\bigl(F_0(r)+\delta B_0(r)\bigr)<1,\\
F_{s,t}(I)&\subset[R_-,R_+]\subset(0,1).
\end{aligned}
\]
Outside \(I\), \(F_{s,t}=F_0\).
In particular, \(F_{s,t}'(r)>0\) for \(r<1\), \(F_{s,t}''<0\), and its endpoint values are zero and one.
For \(0\le x<y\le1\), integrating its positive derivative over \((x,y)\) gives \(F_{s,t}(y)>F_{s,t}(x)\).
Therefore \(F_{s,t}\in Y\).

Define
\[
G(s,t):=J_H(F_{s,t})=\int_0^1H(F_{s,t}(r))\,dr.
\tag{6}
\]
The integral is finite by (1), and
\[
G(s,t)=G(0,0)+\int_I\bigl(H(F_{s,t}(r))-H(F_0(r))\bigr)\,dr.
\]
The part of the integral outside \(I\) does not change with \(s,t\).
All derivatives therefore come from the displayed integral over \(I\),
where the arguments of \(H\) remain in a fixed compact subinterval
of \((0,1)\). We can consequently differentiate with respect to the
two parameters since the following estimates justify that differentiation.
Let \(K_j=\max_{R_-\le x\le R_+}|H^{(j)}(x)|<\infty\), for \(j=1,2\).
With \(p=(s,t)\) and \(e_i\) denoting a coordinate direction, any nonzero increment \(h\) whose segment remains in \((-\delta,\delta)^2\) satisfies
\[
\begin{aligned}
\frac{H(F_{p+he_i}(r))-H(F_p(r))}{h}
&=\eta_i(r)\int_0^1H'(F_p(r)+\tau h\eta_i(r))\,d\tau,\\
\frac{H'(F_{p+he_j}(r))\eta_i(r)-H'(F_p(r))\eta_i(r)}{h}
&=\eta_i(r)\eta_j(r)\int_0^1H''(F_p(r)+\tau h\eta_j(r))\,d\tau.
\end{aligned}
\]
Their absolute values are bounded respectively by the integrable functions
\(K_1|\eta_i|\) and \(K_2|\eta_i\eta_j|\) on \(I\).
Dominated convergence gives the first and second parameter derivatives under the integral \parencite{folland1999}.
The same bounds and continuity of \(H'\), \(H''\) give continuity of those derivatives.
Hence \(G\in C^2((-\delta,\delta)^2)\).

\subsection{The incompatible second-order constraints}

We will evaluate what the shared indifference classes require.
Since \(\delta>0\), the parameter square \((-\delta,\delta)^2\)
is a nonempty, open, convex subset of \(\mathbb R^2\).
Recall that \(F_{s,t}=F_0+s\eta_1+t\eta_2\).
For \((s,t),(s',t')\) in this square and \(\lambda\in[0,1]\),
their convex combination remains in the square, and
\[
F_{\lambda s+(1-\lambda)s',\,\lambda t+(1-\lambda)t'}
=\lambda F_{s,t}+(1-\lambda)F_{s',t'}.
\]
All these curves belong to \(Y\), as verified above. Affinity
of \(\mathscr A\) on \(Y\) therefore gives
\[
\mathscr A\!\left(F_{\lambda s+(1-\lambda)s',\,\lambda t+(1-\lambda)t'}\right)
=\lambda\mathscr A(F_{s,t})+(1-\lambda)\mathscr A(F_{s',t'}).
\]
Thus \((s,t)\mapsto\mathscr A(F_{s,t})\) is a real-valued affine
function on that open convex square. Therefore we could apply Lemma~\ref{lem:finite-affine} with \(m=2\), so it has the form
\[
\mathscr A(F_{s,t})=a_0+a_1s+a_2t
\tag{7}
\]
for some constants \(a_0,a_1,a_2\in\mathbb R\). By (3), equality of the affine expression \(a_0+a_1s+a_2t\) forces equality of \(G(s,t)\). Thus \(G\) is constant on the affine level sets of
\[
(s,t)\longmapsto a_0+a_1s+a_2t.
\tag{8}
\] 

If \(a_1=a_2=0\), then \(\mathscr A(F_{s,t})\) is constant on the parameter square \((-\delta,\delta)^2\). By (3), \(G\) is also
constant on that square, so the Hessian of \(G\) is zero. In particular,
\[
\det D^2G(0,0)=0.
\tag{9}
\]
If \((a_1,a_2)\ne(0,0)\), first suppose \(a_1\ne0\).
On the line \(t=0\), the equation \(a_0+a_1s=x\) gives
\(s=(x-a_0)/a_1\). Accordingly, set
\[
\delta_1:=\frac{|a_1|\delta}{|a_1|+|a_2|}\in(0,\delta],\qquad
I_1:=(a_0-|a_1|\delta,a_0+|a_1|\delta),\qquad
\varphi(x):=G\!\left(\frac{x-a_0}{a_1},0\right)\quad(x\in I_1).
\]
For \(x\in I_1\), the defining point lies in the parameter square
and depends affinely on \(x\). Since \(G\in C^2\), this proves
\(\varphi\in C^2(I_1)\).
The smaller radius \(\delta_1\) lets us compare a nearby point
\((s,t)\) with a point on \(t=0\) inside the same square:
if \(|s|,|t|<\delta_1\),
\[
\left|s+\frac{a_2}{a_1}t\right|
\le |s|+\frac{|a_2|}{|a_1|}|t|
<\delta_1\frac{|a_1|+|a_2|}{|a_1|}=\delta.
\]
Thus both \((s,t)\) and \((s+(a_2/a_1)t,0)\) lie in the original parameter square, and they have the same value of \(a_0+a_1s+a_2t\).
By (3), these equal affine coordinates give equal values of \(G\), so
\[
G(s,t)=G\!\left(s+\frac{a_2}{a_1}t,0\right)=\varphi(a_0+a_1s+a_2t)
\qquad(|s|,|t|<\delta_1).
\]
If \(a_1=0\), then \(a_2\ne0\).
In this case define \(\varphi(x)=G(0,(x-a_0)/a_2)\) for \(|x-a_0|<|a_2|\delta\);
\((s,t)\) and \((0,t)\) are both in the original parameter square
and have the same affine coordinate. Thus they give the same value
of \(G\), and the same factorization holds. Here \(\varphi\) is \(C^2\)
because it is defined from \(G\) along the line \(s=0\).
In either nonconstant case, the local identity
\(G(s,t)=\varphi(a_0+a_1s+a_2t)\) gives, by the chain rule,
\[
\begin{aligned}
\frac{\partial G}{\partial s}(s,t)
&=a_1\varphi'(a_0+a_1s+a_2t),\\
\frac{\partial G}{\partial t}(s,t)
&=a_2\varphi'(a_0+a_1s+a_2t).
\end{aligned}
\]
These formulas hold on the neighborhood where the factorization
was established. Since \(a_1,a_2\) are constants, differentiating
once more and evaluating at \((0,0)\) gives
\[
\begin{aligned}
\frac{\partial^2G}{\partial s^2}(0,0)&=a_1^2\varphi''(a_0),\\
\frac{\partial^2G}{\partial s\partial t}(0,0)&=a_1a_2\varphi''(a_0),\\
\frac{\partial^2G}{\partial t^2}(0,0)&=a_2^2\varphi''(a_0).
\end{aligned}
\]
Consequently,
\[
D^2G(0,0)=\varphi''(a_0)
\begin{pmatrix}a_1^2&a_1a_2\\a_1a_2&a_2^2\end{pmatrix},
\]
and its determinant is
\[
\begin{aligned}
\det D^2G(0,0)
&=(a_1^2\varphi''(a_0))(a_2^2\varphi''(a_0))
-(a_1a_2\varphi''(a_0))^2\\
&=\varphi''(a_0)^2(a_1^2a_2^2-a_1^2a_2^2)=0.
\end{aligned}
\]
Together with the constant case, this proves
\[
\det D^2G(0,0)=0.
\tag{10}
\]

Now calculate the same Hessian terms directly from the integral in (6). 
The diagonal entries represent the curvature produced by each bump
separately; the off diagonal mixed entry represents their interaction.
The fixed sign of \(H''(F_0(r))\) will make each diagonal entry
nonzero, while disjoint supports eliminate the interaction.

Both perturbations vanish outside \(I\), and \(F_{s,t}(I)\) stays
in \([R_-,R_+]\subset(0,1)\). Thus all parameter dependence occurs
where \(H'\) and \(H''\) are bounded, and the estimates already
established justify differentiation under the integral.
Recall that \(F_{s,t}=F_0+s\eta_1+t\eta_2\), so the chain rule gives
\[
\begin{aligned}
\frac{\partial G}{\partial s}(s,t)
&=\int_I H'(F_{s,t}(r))\,\eta_1(r)\,dr,\\
\frac{\partial G}{\partial t}(s,t)
&=\int_I H'(F_{s,t}(r))\,\eta_2(r)\,dr.
\end{aligned}
\]
The functions \(\eta_1,\eta_2\) do not depend on \(s,t\).
A second differentiation therefore introduces one more perturbation
factor. Evaluating at \((s,t)=(0,0)\), where \(F_{s,t}=F_0\), gives
\begin{align*}
\frac{\partial^2G}{\partial s^2}(0,0)
&=\int_I H''(F_0(r))\,\eta_1(r)^2\,dr, \tag{11}\\
\frac{\partial^2G}{\partial s\partial t}(0,0)
&=\int_I H''(F_0(r))\,\eta_1(r)\eta_2(r)\,dr, \tag{12}\\
\frac{\partial^2G}{\partial t^2}(0,0)
&=\int_I H''(F_0(r))\,\eta_2(r)^2\,dr. \tag{13}
\end{align*}

For the mixed entry, the two test functions have disjoint supports,
so \(\eta_1(r)\eta_2(r)=0\) for every \(r\) by (5).
The integral in (12) is therefore identically zero:
\[
\frac{\partial^2G}{\partial s\partial t}(0,0)=0.
\tag{14}
\]

For the diagonal entries, recall from (4) that \(H''(F_0(r))\)
has a fixed nonzero sign on \(I\), with \(|H''(F_0(r))|\ge c>0\).
There is consequently no cancellation in either integral:
for \(i=1,2\),
\[
\begin{aligned}
\left|\int_I H''(F_0(r))\,\eta_i(r)^2\,dr\right|
&=\int_I |H''(F_0(r))|\,\eta_i(r)^2\,dr\\
&\ge c\int_I\eta_i(r)^2\,dr>0.
\end{aligned}
\]
The last inequality holds because each \(\eta_i\) is continuous
and nonzero, so its square has a strictly positive integral.
Thus both diagonal entries have the same sign as \(H''(F_0)\), and
\[
\frac{\partial^2G}{\partial s^2}(0,0)\neq0,
\qquad
\frac{\partial^2G}{\partial t^2}(0,0)\neq0.
\tag{15}
\]

Together, (14) and (15) make the Hessian diagonal with two nonzero
entries of the same sign:
\[
D^2G(0,0)=
\begin{pmatrix}
\displaystyle \int_IH''(F_0(r))\eta_1(r)^2\,dr & 0\\[12pt]
0 & \displaystyle \int_IH''(F_0(r))\eta_2(r)^2\,dr
\end{pmatrix}.
\]
Its determinant is therefore
\[
\det D^2G(0,0)
=
\frac{\partial^2G}{\partial s^2}(0,0)
\frac{\partial^2G}{\partial t^2}(0,0)>0,
\]
contradicting (10), which requires this determinant to be zero.
The contradiction concerns two properties of the very same function
\(G\): its indifference classes force a singular Hessian, whereas
its integral formula gives an invertible Hessian.
Thus no affine representative can share the ordering of \(J_H\)
when \(H''\) is nonzero somewhere.

\subsection{Singular weights and the Lorenz-area representation}
Therefore, the additional constraint imposed by the axiom of dual independence forces
\(H''\equiv0\) on \((0,1)\). Now let's translate this constraint back
to the weight \(w\).

First, for \(0<x<y<1\), the fundamental theorem of calculus gives
\[
H'(y)-H'(x)=\int_x^yH''(t)\,dt=0.
\]
Thus \(H'\) is a constant \(b\). Integrating once more gives
\(H(x)=a+bx\) on \((0,1)\), for some \(a\in\mathbb R\).

Next, recall that \(H(1)=0\) and that left continuity at \(1\)
has already been proved. Hence
\[
a+b=\lim_{x\uparrow1}(a+bx)
=\lim_{x\uparrow1}H(x)=H(1)=0.
\]
Writing \(c:=-b=a\), we obtain
\[
H(x)=c(1-x),
\qquad
w(x)=-H'(x)=c
\qquad(0<x<1).
\]
The weight is therefore constant, and its assumed positivity gives
\(c>0\). Consequently, for every \(L\in\mathcal{L}\),
\[
U_w(L)=\int_0^1 w(u)L(u)\,du
=c\int_0^1 L(u)\,du.
\]
Within the class of weights considered here, dual independence
therefore permits only positive constant weights, which give the
Lorenz-area order. Every nonconstant candidate consequently fails
dual independence, including a singular weight that admits a finite
continuous monotone kernel. The uniform Lipschitz bound also follows
for the surviving constant weights:
\[
|U_w(L)-U_w(M)|
=c|A(L)-A(M)|\le c\|L-M\|_\infty
\qquad(L,M\in\mathcal{L}).
\]
Which contradicts the construction of the singular kernel weight as it cannot satisfy dual independence.

For every member of the singular family \(w_p(u)=u^{-p}\), the earlier calculations verified all the primal
requirements. Its tail integral \(H_p(x):=\int_x^1u^{-p}\,du\), however,
satisfies
\[
H_p'(x)=-x^{-p},
\qquad
H_p''(x)=p\,x^{-p-1}>0
\qquad(0<x<1).
\]
This is precisely the nonzero curvature ruled out above, so none
of these singular weights can satisfy dual independence.

The above construction is intended to provide a self-explanatory account of why the original singular weighted-integral route fails to produce a counterexample to the combined theorem. To disprove the representation theorem, it is natural to construct a weighted-integral example that induces singular behavior at the endpoint which sabotaged the area integral representation. Unexpectedly, however, once the dual mixture structure is imposed simultaneously, the singular weighted-integral ansatz comes under a rigidity condition and loses its non-affine freedom. The combined effects of primal and dual independence collapse every admissible weight back to the form of integral Lorenz-area ordering representation. Of course, the sheer failure of one natural possible route to disprove a theorem does not itself constitute an affirmative proof of its truth. On the other hand, these calculations motivate the idea that the combined mixture structure indeed forces a more rigid constraint on the final representation.

In the remaining paragraphs of section 2, we will provide a detailed description of the basic idea underlying the main proof project  before diving into actual mathematical arguments. The main proof starts with applying the classical affine-representation theorem in the two mixture structures, giving continuous affine representatives. Note that the representation here is a general affine functional rather than a more restricted integral form. However, it's easy to notice that the Lorenz domain is not very well behaved, as it is closed under neither addition nor arbitrary scalar multiplication, therefore it's hard to make any progress by the usual analytic or algebraic arguments. As a result, we need to move temporarily to another domain in order to build the desired auxiliary tool that will induce the rigid characterization. More specifically, the proof will pass from affine representations on curve domains to linear functionals on a space of smooth endpoint-vanishing directions, using parametrized smooth perturbations.

We will work on the direction space \(V\), which consists of smooth functions that vanish at both endpoints. Such a choice is motivated by the property of the perfect equality curve \(E\). So for an arbitrary Lorenz curve, the constructed gap \(E-L\)  will have vanishing endpoint values, which satisfies the requirement for being a member of direction space \(V\). Meanwhile, this space is closed under addition and scalar multiplication.  A reference gap \(q\in V\) will be constructed so the objects \(E-sq\) and \(E+sq\) have positive slopes and
strictly signed curvatures in their respective domains, leaving
room for perturbations in any fixed smooth direction. With the help of a perturbation argument, we can apply identities
derived on the space of smooth directions to admissible Lorenz
curves, even when their derivation involves forbidden
operations which the Lorenz domain is not closed.

Affine property along admissible segments will give rise to linear functionals $T$  and $m$ on the direction space. They recover the gap functionals in the two representations on both the primal and dual sides. We choose two admissible comparison curves with the same primal evaluation. Since the two representatives describe the same ordering, their dual evaluations must also agree. Comparing the inverse expansions first shows that the two functionals agree on the comparison directions and then gives a quadratic identity. Applying perturbation argument and functional-analytic techniques allows us to pass from the quadratic identity to a bilinear identity for the linear functional $T$ .

Subsequently, we will established the classification result that the final representation of linear functional $T$  can only take one of two forms: an area integral representation or one of two endpoint derivative representations. Given that all representations in the form of endpoint derivatives contradict the continuity of the original representing function $U$ at $E$, the endpoint cases are excluded, the form of T can only be the normalized area integral. Using the relationship formula between $T$ and $U$, we then complete the proof of the final representation theorem.

\section{Proof of the main theorem: affine representations and the bilinear identity}
\label{sec:proof-main}
\setcounter{equation}{15}

\subsection{Affine-representation setup}
\label{subsec:affine-setup}

Fix an arbitrary relation \(\succsim\) satisfying
Axioms~\ref{ax:weak-order}--\ref{ax:dual-independence}.
We first obtain continuous affine representatives on the primal and inverse
domains. These representatives remain fixed throughout the proof,
including when we later restrict attention to smooth curves.

We use the continuous affine-representation result for Lorenz curves proved in
\textit{Measure of Morality}
\parencite[Proposition~1]{wei2026measuremoralitymathematicaltheory}.
Its classical antecedent is the mixture-space representation theorem
of \textcite[Theorem~8]{hersteinMilnor1953}; see also
\textcite[Section~8.4]{fishburn1970}.
On \(\mathcal{L}\), it gives a nonconstant affine representative,
continuous in the relative uniform topology, under completeness,
transitivity, mixture independence, nontriviality, and closed-graph
continuity. Its proof uses only convexity of the underlying curve
domain, so the same construction applies to the convex inverse
domain \(Y\).
Note that this result is about functional representation rather than integral representation.
The one-sided integral-representation construction in the previous paper
uses an additional quantitative Archimedean condition.

By Lemma~\ref{lem:domain-inversion} in the appendix, \(\mathcal{L}\) and
\(Y\) are convex, and inversion
\(\iota:\mathcal{L}^\uparrow\to Y\), \(\iota(L)=L^{-1}\),
is a homeomorphism for the relative uniform topologies.
Transport the restriction of the ordering to \(Y\) by
\[
F\succsim_Y G
\quad\Longleftrightarrow\quad F^{-1}\succsim G^{-1}
\qquad(F,G\in Y).
\]
Completeness and transitivity pass to this relation, and its graph
is closed by the inversion homeomorphism.
Nontriviality on \(Y\) follows from
Axiom~\ref{ax:nontriviality}.
For \(F,H\in Y\) and \(\lambda\in[0,1]\),
\[
(\lambda F+(1-\lambda)H)^{-1}
=F^{-1}\oplus_\lambda H^{-1}.
\]
Thus dual independence becomes ordinary mixture independence
on \(Y\). Applying that construction on the two domains
therefore supplies nonconstant continuous affine functions
\(U:\mathcal{L}\to\mathbb R\) and
\(W_Y:Y\to\mathbb R\) satisfying
\[
\begin{aligned}
L\succsim M&\Longleftrightarrow U(L)\ge U(M)
&&(L,M\in\mathcal{L}),\\
F\succsim_Y G&\Longleftrightarrow
W_Y(F)\ge W_Y(G)
&&(F,G\in Y).
\end{aligned}
\]

For later use, note that continuity in the uniform topology means that for either
\((X,V)=(\mathcal{L},U)\) or \((Y,W_Y)\),
every \(x\in X\), and every
\(\varepsilon>0\), there exists \(\delta>0\) such that
\begin{equation}
y\in X,\quad \|y-x\|_\infty<\delta
\quad\Longrightarrow\quad |V(y)-V(x)|<\varepsilon.
\label{eq:representative-continuity}
\end{equation}

\smallskip
\noindent\emph{Comparison on the common domain.}
Define
\[
\begin{aligned}
W(L)&:=W_Y(L^{-1})&&(L\in\mathcal{L}^\uparrow),\\
D_U(L)&:=U(E)-U(L)&&(L\in\mathcal{L}),\\
D_W(L)&:=W(E)-W(L)&&(L\in\mathcal{L}^\uparrow).
\end{aligned}
\]
Continuity of inversion makes \(W\) continuous on
\(\mathcal{L}^\uparrow\), and affinity of \(W_Y\) gives
\[
W(L\oplus_\lambda M)=\lambda W(L)+(1-\lambda)W(M)
\qquad(L,M\in\mathcal{L}^\uparrow,\ \lambda\in[0,1]).
\]
Since every Lorenz curve lies below \(E\), weak dominance gives
\(D_U\ge0\) and \(D_W\ge0\) on their respective domains.
The two representatives encode the same ordering on
\(\mathcal{L}^\uparrow\). For \(L,M\in\mathcal{L}^\uparrow\), this gives
\begin{equation}
\begin{aligned}
D_U(L)=D_U(M)&\Longleftrightarrow L\sim M
\Longleftrightarrow D_W(L)=D_W(M),\\
D_U(L)<D_U(M)&\Longleftrightarrow L\succ M
\Longleftrightarrow D_W(L)<D_W(M).
\end{aligned}
\label{eq:common-deficiency-order}
\end{equation}
Strict increase is needed for the ordinary inverse to exist;
it does not imply smoothness. The inverse calculations below use
smooth test curves with strictly positive derivatives.
The final approximation argument concerns \(U\) on \(\mathcal{L}\),
so its conclusion also covers curves with horizontal segments.

\subsection{Smooth directions and shape-preserving approximation}
\label{subsec:smooth-directions}

The next two lemmas prepare the test directions and approximation
tools used in the core arguments. Smoothness is required properties of these auxiliary
functions, shape-preserving density will let us recover every curve
in the original domain at the end.

All derivatives at \(0,1\) are one-sided. Let
\[
\begin{aligned}
V&:=\{h\in C^\infty([0,1]):h(0)=h(1)=0\},\\
A(f)&:=\int_0^1f(x)\,dx&&(f\in C([0,1])),\\
\|f\|_{C^j}&:=\max_{0\le i\le j}\|f^{(i)}\|_\infty
&&(f\in C^j([0,1])).
\end{aligned}
\]
Here \(C^\infty([0,1])\) consists of the functions whose derivatives
of every order on \((0,1)\) extend continuously to \([0,1]\).

For the rest of paper, fix an arbitrary reference auxiliary function \(q\in V\) with
\(q''(x)<0\) for all \(x\in[0,1]\). Let's rescale it so that \(\|q'\|_\infty=1\); this is possible because
\(q'\) is not identically zero. The continuous positive function
\(-q''\) has a positive minimum and a finite maximum on \([0,1]\).
Set
\[
a_q:=\min_{[0,1]}(-q'')>0,\qquad
b_q:=\max_{[0,1]}(-q'')<\infty,\qquad
C_q:=\frac1{2q(1/2)}.
\]
The normalization fixes only the scale, the shape of \(q\) remains
arbitrary so the lower curvature bound could be constructed in a way that for the purpose of  perturbations argument that would be employed in the rest of this paper. 

For completeness, \(w(x):=q(x)-(a_q/2)x(1-x)\) satisfies
\(w''=q''+a_q\le0\) and \(w(0)=w(1)=0\). Hence \(w\) is concave
and nonnegative; the elementary curvature criterion is verified
in \eqref{eq:curvature-shape} in Appendix~\ref{app:bernstein-density}. Indeed, a concave function
lies above the chord joining its endpoints, which here is zero.
Therefore
\[
q(x)\ge\frac{a_q}{2}x(1-x)>0\qquad(0<x<1).
\]
In particular, \(q(1/2)>0\), so \(C_q\) is finite and positive.
This lower bound also controls the endpoint slopes. Divide it by
\(x\) near zero and by \(1-x\) near one, then take the corresponding
one-sided limits:
\[
\begin{aligned}
q'(0)&=\lim_{x\downarrow0}\frac{q(x)}x\ge\frac{a_q}{2}>0,\\
-q'(1)&=\lim_{x\uparrow1}\frac{q(x)}{1-x}\ge\frac{a_q}{2}>0.
\end{aligned}
\]
Integrating the same lower bound over the whole interval gives
\[
A(q)\ge\frac{a_q}{2}\int_0^1x(1-x)\,dx=\frac{a_q}{12}>0.
\]
In particular, the two endpoint zeros are simple. These properties will be used in the classification.

Several constructions below divide by endpoint factors or express
inversion errors as parameter integrals. Lemma~\ref{lem:parameter-integration}
in Appendix~\ref{app:parameter-integration} justifies differentiating
those integrals while retaining regularity at the endpoints.

The endpoint zeros of gap function \(q\) are set in such a way that we will have a common  scale for all directions in \(V\). Indeed each such direction is bounded  by
a finite multiple of \(q\). The next lemma proves this comparison
and identifies the endpoint values of the quotient.
Positivity will later turn it into a bound on the linear functionals.

\begin{lemma}[Division by the reference gap]
\label{lem:quadratic-division}
For \(h\in V\), \(h/q\) has a unique extension in
\(C^\infty([0,1])\), and
\begin{equation}
(h/q)(0)=\frac{h'(0)}{q'(0)},\qquad
(h/q)(1)=\frac{h'(1)}{q'(1)},\qquad
\|h/q\|_\infty\le C_q\|h'\|_\infty.
\label{eq:quadratic-division-bound}
\end{equation}
The constant \(C_q\) is optimal.
\end{lemma}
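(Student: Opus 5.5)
The plan is to treat the three assertions separately: smoothness with the endpoint values, the sup bound, and optimality of \(C_q\). All three rest on the strict interior positivity \(q(x)\ge (a_q/2)x(1-x)\) and the simple endpoint zeros \(q'(0)>0>q'(1)\) established just before the lemma.

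\emph{Smooth extension and endpoint values.} On \((0,1)\) the quotient \(h/q\) is already \(C^\infty\), since \(q>0\) there. Near \(0\), I write
\[
h(x)=x\,\tilde h(x),\quad \tilde h(x):=\int_0^1 h'(sx)\,ds,\qquad q(x)=x\,\tilde q(x),\quad \tilde q(x):=\int_0^1 q'(sx)\,ds.
\]
Lemma~\ref{lem:parameter-integration} gives \(\tilde h,\tilde q\in C^\infty([0,1])\) up to the endpoint. Also \(\tilde q(0)=q'(0)>0\), and \(\tilde q(x)=q(x)/x>0\) for \(x>0\). Hence \(\tilde h/\tilde q\) is \(C^\infty\) on a neighbourhood \([0,\epsilon)\), agrees with \(h/q\) for \(x>0\), and takes the value \(h'(0)/q'(0)\) at \(0\). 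The same factorization with \(1-x\), namely \(h(x)=(1-x)\int_0^1(-h')(1-s(1-x))\,ds\), handles the endpoint \(1\). There the denominator tends to \(-q'(1)>0\), and the quotient tends to \(h'(1)/q'(1)\). Gluing the three pieces gives a \(C^\infty([0,1])\) function. It is unique because any continuous extension is determined by its values on the dense set \((0,1)\).

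\emph{The bound.} The mean value theorem and \(h(0)=h(1)=0\) give \(|h(x)|\le\|h'\|_\infty\min\{x,1-x\}\). It therefore suffices to show \(q(x)\ge 2q(1/2)\min\{x,1-x\}\). This follows from concavity of \(q\) (recall \(q''<0\)). On \([0,1/2]\), \(q\) lies above the chord from \((0,0)\) to \((1/2,q(1/2))\), which is \(2q(1/2)x\). On \([1/2,1]\), it lies above the chord \(2q(1/2)(1-x)\). Dividing gives \(|h(x)/q(x)|\le\|h'\|_\infty/(2q(1/2))=C_q\|h'\|_\infty\) on \((0,1)\), and the bound passes to the endpoints by continuity.

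\emph{Optimality.} This is the only step needing a construction, and I expect it to be the main (if mild) obstacle. I would approximate the tent \(\tau(x)=\min\{x,1-x\}\) by functions \(h_n\in V\) with \(\|h_n'\|_\infty\le1\) and \(h_n(1/2)\to1/2\). To do this, smooth the slope \(\operatorname{sgn}(1/2-x)\) into an odd-about-\(1/2\) smooth function \(g_n\) with \(|g_n|\le1\) that equals \(\pm1\) outside \((1/2-1/n,1/2+1/n)\), and set \(h_n(x)=\int_0^x g_n\). Symmetry gives \(h_n(1)=0\), so \(h_n\in V\), and \(h_n(1/2)\ge 1/2-1/n\). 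Then
\[
\|h_n/q\|_\infty\ \ge\ \frac{h_n(1/2)}{q(1/2)}\ \ge\ \frac{1/2-1/n}{q(1/2)}\ \longrightarrow\ C_q.
\]
Hence no smaller constant works, and \(C_q\) is the best constant as a supremum; it need not be attained.
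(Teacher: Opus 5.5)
Your proposal is correct and follows the paper's proof essentially step for step. For the smooth extension and the endpoint values you use the same factorizations $h(x)=x\int_0^1h'(sx)\,ds=(1-x)\int_0^1(-h')(1-s(1-x))\,ds$ as the paper's $A_f,B_f$, and for the bound you use the same combination of $|h(x)|\le\min\{x,1-x\}\|h'\|_\infty$ with the concavity chord bound $q(x)\ge 2q(1/2)\min\{x,1-x\}$. The only difference is the optimality witness: you integrate a smoothed sign function, whereas the paper uses the explicit smoothed tent $h_\varepsilon(x)=\sqrt{1/4+\varepsilon^2}-\sqrt{(x-1/2)^2+\varepsilon^2}$, and either works. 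Your bound $h_n(1/2)\ge 1/2-1/n$ tacitly uses $g_n\ge0$ on $[0,1/2]$, but $|g_n|\le1$ alone already gives $h_n(1/2)\ge 1/2-2/n$, which suffices.
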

\begin{proof}
For \(f=h\) or \(f=q\), define
\[
A_f(x):=\int_0^1f'(rx)\,dr,\qquad
B_f(x):=-\int_0^1f'(x+r(1-x))\,dr.
\]
The fundamental theorem of calculus and the endpoint values give
\[
f(x)=xA_f(x)=(1-x)B_f(x).
\]
These identities factor out the zero at the left and right
endpoint, respectively. The remaining factors stay smooth:
Lemma~\ref{lem:parameter-integration} shows that \(A_f,B_f\)
are smooth up to both endpoints. Explicitly,
\[
\begin{aligned}
A_f^{(j)}(x)&=\int_0^1r^jf^{(j+1)}(rx)\,dr,\\
B_f^{(j)}(x)&=-\int_0^1(1-r)^j
f^{(j+1)}(x+r(1-x))\,dr
\qquad(j\ge0).
\end{aligned}
\]
Since \(q>0\) on \((0,1)\), \(q'(0)>0\), and \(q'(1)<0\),
\[
\begin{aligned}
A_q(x)&=q(x)/x>0&&(0<x<1),&
A_q(0)&=q'(0)>0,\\
B_q(x)&=q(x)/(1-x)>0&&(0<x<1),&
B_q(1)&=-q'(1)>0.
\end{aligned}
\]
Consequently \(A_h/A_q\) on \([0,1)\) and \(B_h/B_q\) on
\((0,1]\) are smooth and agree with \(h/q\) in the interior.
The first quotient extends through zero and the second through one;
their denominators are positive at the respective endpoints.
Since the two formulas agree on their overlap, they define the
required extension, with
\[
\frac{A_h(0)}{A_q(0)}=\frac{h'(0)}{q'(0)},\qquad
\frac{B_h(1)}{B_q(1)}=\frac{h'(1)}{q'(1)}.
\]
Uniqueness follows from continuity and agreement on \((0,1)\).

Finally, the same endpoint integral formulas give two bounds
for \(|h(x)|\). Integration from zero gives \(x\|h'\|_\infty\),
while integration from one gives \((1-x)\|h'\|_\infty\).
Taking the smaller of these bounds yields
\[
|h(x)|\le\min\{x,1-x\}\|h'\|_\infty.
\]
Concavity compares \(q\) with the chords joining its midpoint to
the endpoints. On the left half of the interval the chord has
height \(2xq(1/2)\); on the right half it has height
\(2(1-x)q(1/2)\). Therefore
\[
q(x)\ge2q(1/2)\min\{x,1-x\}\qquad(0\le x\le1).
\]
Dividing by \(q(x)>0\) in the interior gives an upper bound
for the quotient below, and equality holds at the midpoint:
\[
\max_{0<x<1}\frac{\min\{x,1-x\}}{q(x)}
=\frac1{2q(1/2)}=C_q.
\]
Thus, for \(0<x<1\),
\[
\frac{|h(x)|}{q(x)}
\le\frac{\min\{x,1-x\}}{q(x)}\|h'\|_\infty
\le C_q\|h'\|_\infty.
\]
The endpoint inequalities follow by continuity.

To verify optimality within the smooth space \(V\), for
\(\varepsilon>0\) set
\[
h_\varepsilon(x):=
\sqrt{\tfrac14+\varepsilon^2}
-\sqrt{(x-\tfrac12)^2+\varepsilon^2}.
\]
For each \(\varepsilon>0\), this function is smooth and vanishes
at both endpoints. Differentiating it gives
\[
h_\varepsilon\in V,\qquad
\|h_\varepsilon'\|_\infty
=\frac{1/2}{\sqrt{1/4+\varepsilon^2}}.
\]
Apply the proved bound to \(h_\varepsilon\), and bound the
supremum from below by its value at the midpoint:
\[
C_q\ge\frac{\|h_\varepsilon/q\|_\infty}
{\|h_\varepsilon'\|_\infty}
\ge\frac{h_\varepsilon(1/2)}
{q(1/2)\|h_\varepsilon'\|_\infty}
\xrightarrow[\varepsilon\downarrow0]{}\frac1{2q(1/2)}=C_q.
\]
Thus any smaller constant would fail for sufficiently small
\(\varepsilon\), proving optimality within \(V\).
\end{proof}

\subsection{The positive linear parts of the two representatives}
\label{subsec:positive-linear-parts}

Affinity on a convex curve domain does not yet give a linear map on
all smooth endpoint-vanishing directions. The next proposition
constructs that map from admissible segments. Dominance then gives
positivity, and comparison with \(q\) supplies the estimate needed
to control the inversion remainder.

\begin{proposition}[Directional linearity and normalization]
\label{prop:positive-directions}
There exist unique linear maps \(T_0,m_0:V\to\mathbb R\)
such that
\begin{align}
U(L)-U(E)&=T_0(L-E)
&&(L\in\mathcal{L}\cap C^\infty([0,1])),
\label{eq:original-primal-direction}\\
W_Y(E)-W_Y(F)&=m_0(F-E)
&&(F\in Y\cap C^\infty([0,1])).
\label{eq:original-dual-direction}
\end{align}
Moreover,
\[
h\in V,\quad h\ge0
\quad\Longrightarrow\quad T_0(h)\ge0,\quad m_0(h)\ge0.
\]
The values at \(q\) give strictly positive scales:
\[
\sigma_U:=T_0(q)>0,\qquad \sigma_W:=m_0(q)>0.
\]
Define the normalized functionals by
\[
T:=\frac{1}{\sigma_U}T_0,\qquad
m:=\frac{1}{\sigma_W}m_0.
\]
Then \(T(q)=m(q)=1\)
and
\begin{equation}
\max\{|T(h)|,|m(h)|\}
\le\|h/q\|_\infty\le C_q\|h'\|_\infty
\qquad(h\in V).
\label{eq:positive-weighted-bound}
\end{equation}
\end{proposition}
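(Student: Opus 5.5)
The plan is to build \(T_0\) from segments through \(E\), doing the primal side in detail; the dual side is the same argument on \(Y\), whose equality curve is also \(E\) since \(E^{-1}=E\).

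\emph{Step 1: a cone of admissible directions.} For \(h\in V\) and \(s>0\) small, consider \(E-s(q+\epsilon h)\). Any \(g\in V\) with \(g''<0\) on \([0,1]\) and \(\|g'\|_\infty<1\) gives \(E-g\in\mathcal{L}\cap C^\infty\). Indeed, \((E-g)''=-g''>0\) gives convexity, \((E-g)'=1-g'>0\) gives monotonicity, and the endpoint values are right. Since \(q''\le-a_q<0\) and \(\|q'\|_\infty=1\), for fixed \(h\) we have \((q+\epsilon h)''<0\) once \(\epsilon\|h''\|_\infty<a_q\). Then \(s(q+\epsilon h)\) is admissible for all \(s\in(0,s_0)\).

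\emph{Step 2: define \(T_0\).} Affinity of \(U\) on the convex set \(\mathcal{L}\) gives \(U(E-sg)=U(E)-s\,\tau(g)\) for \(0\le s\le s_0(g)\) and each admissible \(g\). This is linear in \(s\) along a segment through \(E\). The quantity \(\tau(g)\) is independent of the scale. It is also additive on admissible pairs: \(E-\tfrac s2(g_1+g_2)\) is the midpoint of \(E-sg_1\) and \(E-sg_2\). Hence \(\tau\) is positively homogeneous and additive on the open convex cone \(\mathcal{K}=\{g\in V:g''<0\}\). Every \(h\in V\) is a difference of elements of \(\mathcal{K}\), namely \(h=(h+Nq)-Nq\) with \(N\) large. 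So \(\tau\) extends uniquely to a linear map on \(V\) by \(\tau(g_1-g_2)=\tau(g_1)-\tau(g_2)\), which is well defined by additivity. Set \(T_0:=-\tau\) in the sign convention of \eqref{eq:original-primal-direction}.

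To reach every smooth \(L\in\mathcal{L}\), note that \(L=E-g\) with \(g=E-L\in V\). Mix \(L\) with \(E-sq\): the curve \(\lambda L+(1-\lambda)(E-sq)=E-(\lambda g+(1-\lambda)sq)\). For \(\lambda<1\), the perturbing direction \(\lambda g+(1-\lambda)sq\) has strictly negative second derivative, because \(g''=-L''\le0\). Applying affinity of \(U\) to this mixture, and linearity of \(T_0\) on the cone, gives the identity for \(L\) after cancelling \(1-\lambda\). Uniqueness holds because \(V\) is spanned by \(\mathcal{K}\).

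\emph{Step 3: positivity.} For \(h\ge0\) in \(V\), write \(h=(h+Nq)-Nq\). Then \(T_0(h)\) equals a difference of values at two nearby curves, \(E-s(h+Nq)\) and \(E-sNq\), the first lying pointwise below the second. Dominance gives \(U(E-s(h+Nq))\le U(E-sNq)\), so \(\tau(h)\ge0\). This needs \(h+Nq\in\mathcal{K}\) and the smallness of \(s\), both available from Step 1.

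\emph{Step 4: strict positivity and the bound.} Positivity gives monotonicity of \(T_0\). From \(-\|h/q\|_\infty q\le h\le\|h/q\|_\infty q\), which is available by Lemma~\ref{lem:quadratic-division}, we get \(|T_0(h)|\le\|h/q\|_\infty T_0(q)\). If \(T_0(q)=0\), then \(T_0\equiv0\) on \(V\), so \(U\) is constant on smooth Lorenz curves. The main obstacle is to pass from this to constancy on all of \(\mathcal{L}\). The route is to density: smooth convex curves, for instance Bernstein approximants (Appendix~\ref{app:bernstein-density}), are uniformly dense in \(\mathcal{L}\). Continuity of \(U\) then makes \(U\) constant, contradicting nonconstancy. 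On the dual side, use nontriviality on \(\mathcal{L}^\uparrow\), hence on \(Y\), together with density of smooth elements of \(Y\). Dividing by \(\sigma_U\) and \(\sigma_W\) gives \(T(q)=m(q)=1\), and \eqref{eq:positive-weighted-bound} follows from the monotonicity estimate combined with Lemma~\ref{lem:quadratic-division}.
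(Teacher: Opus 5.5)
Your proof is correct apart from small slips, but it takes a different route from the paper's.

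\textbf{Primal construction.} The paper never perturbs at \(E\) directly. It moves to the interior reference curves \(B_{\mp1}=E\mp\tau q\), which have uniform slope and curvature margins, so every \(h\in V\) is admissible with both signs. Linearity then follows from affinity alone: rescaling, the symmetric pair \(B_\nu\pm th\), and midpoint additivity. The identity at an arbitrary smooth \(P\) is recovered along the segment from \(B_\nu\) to \(P\).

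You keep \(E\) as the base point instead. There only the one-sided cone \(\mathcal{K}=\{g\in V:g''<0\}\) is admissible, so you get an additive, positively homogeneous \(\tau\) on \(\mathcal{K}\). You then extend it linearly using \(\mathcal{K}-\mathcal{K}=V\), with well-definedness from \(g_1+g_4=g_3+g_2\). Mixing a smooth \(L\) with \(E-sq\) plays the role of the paper's segment from \(B_\nu\). Your version reads the gap off directly as \(U(E)-U(E-g)=\tau(g)\) and avoids the two-sided step bookkeeping. The price is the cone-extension step, which the paper's interior base point makes unnecessary.

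\textbf{Dual scale.} The paper proves \(\sigma_W>0\) by transferring the strict primal comparison \(D_U(R_t)=t\sigma_U>0\) through the common ordering. This requires verifying that \(R_t^{-1}\) is smooth. You instead combine nonconstancy of \(W_Y\), its continuity, and uniform density of \(Y\cap C^\infty([0,1])\) in \(Y\). That density is not stated in the paper, but Lemma~\ref{lem:bernstein-density} adapts at once. Bernstein polynomials preserve concavity, and \((B_nF)'\) is a convex combination of strictly positive difference quotients, so strict increase is preserved too. Your route treats both sides symmetrically. The paper's reuses its existing lemma and ties \(\sigma_W\) to \(\sigma_U\) through the common ordering.

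\textbf{Repairs needed.}
\begin{enumerate}
\item With your convention \(U(E-sg)=U(E)-s\tau(g)\), the identity \(U(L)-U(E)=T_0(L-E)\) forces \(T_0:=\tau\), not \(-\tau\). As written, your Step 3 (\(\tau(h)\ge 0\)) would contradict positivity of \(T_0\).
\item In the mixing step, the \((1-\lambda)\)-terms cancel and you divide by \(\lambda\in(0,1)\), not by \(1-\lambda\).
\item On the dual side, "the same argument" still needs one stated fact: inversion reverses pointwise order, so \(F\ge G\) in \(Y\) gives \(F^{-1}\le G^{-1}\), hence \(W_Y(G)\ge W_Y(F)\). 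This is what makes \(m_0(h)=\bigl(W_Y(E+sNq)-W_Y(E+s(h+Nq))\bigr)/s\ge 0\) for \(h\ge 0\), and it is exactly the sign absorbed by the convention \(W_Y(E)-W_Y(F)=m_0(F-E)\).
\end{enumerate}
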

\begin{proof}
\noindent\emph{(i) Two positive affine maps.}
Inversion reverses pointwise order. Taking \(-W_Y\) in the
inverse coordinates puts both representatives in the same convention:
a pointwise increase produces a nonnegative change in value.

Set
\[
\begin{array}{c|cc}
\nu&X_\nu&V_\nu\\ \hline
-1&\mathcal{L}\cap C^\infty([0,1])&U|_{X_{-1}}\\
+1&Y\cap C^\infty([0,1])&-W_Y|_{X_{+1}}.
\end{array}
\]
For \(F,G\in Y\), Lemma~\ref{lem:domain-inversion} gives
\(F^{-1},G^{-1}\in\mathcal{L}^\uparrow\), so the preference
comparison below is on the domain represented by \(W\).
For \(F\ge G\) and \(y\in[0,1]\),
\[
\begin{aligned}
x:=F^{-1}(y)&\ \Longrightarrow\ G(x)\le F(x)=y\\
&\ \Longrightarrow\ x\le G^{-1}(y),\\
F^{-1}\le G^{-1}
&\ \Longrightarrow\ G^{-1}\succsim F^{-1}\\
&\ \Longrightarrow\ W_Y(G)\ge W_Y(F).
\end{aligned}
\]
Together with primal dominance, this proves, for every
\(\nu\in\{-1,+1\}\) and \(P,Q\in X_\nu\),
\begin{equation}
P\ge Q\quad\Longrightarrow\quad V_\nu(P)\ge V_\nu(Q).
\label{eq:two-chart-positivity}
\end{equation}
Both \(V_\nu\) are affine on their convex domains.

\noindent\emph{(ii) Admissible signed directions.}
Given the nature of equality curve, it has no built-in curvature margin, so arbitrary signed perturbations of Lorenz curve that corresponds to perfect equality will not be automatically admissible under our machinery. Therefore we need to control all the analytical estimates manually to ensure that curves \(B_\nu\)
chosen below fall within the corresponding admissible margins when it come to both slope and curvature.

Indeed, let's fix \(\tau\in(0,1)\). Since \(\|q'\|_\infty=1\) and
\(a_q=\min_{[0,1]}(-q'')\), the common margins of the two
reference curves are
\[
\begin{aligned}
B_\nu:=E+\nu\tau q,\qquad
B_\nu'&=1+\nu\tau q',\qquad
-\nu B_\nu''=-\tau q'',\\
\min_{\nu\in\{-1,+1\}}\min_{x\in[0,1]}B_\nu'(x)
&=1-\tau>0,\\
\min_{x\in[0,1]}(-\nu B_\nu''(x))&=\tau a_q>0.
\end{aligned}
\]
For an arbitrary fixed \(h\in V\), define the symmetric
interval allowed by these uniform bounds:
\[
I_\tau(h):=
\{t\in\mathbb R:|t|\|h'\|_\infty<1-\tau,
\ |t|\|h''\|_\infty<\tau a_q\}.
\]
With the convention \(a/0:=+\infty\) for \(a>0\), define its radius by
\[
r_\tau(h):=
\min\left\{\frac{1-\tau}{\|h'\|_\infty},
\frac{\tau a_q}{\|h''\|_\infty}\right\}\in(0,\infty].
\]
Since \(h\) is smooth, both derivative norms are finite, so
\(r_\tau(h)>0\); for \(h=0\), the convention gives
\(r_\tau(0)=+\infty\). The radius is chosen for each direction,
and thus every \(h\in V\) has admissible nonzero steps of both signs.
Then \(I_\tau(h)=(-r_\tau(h),r_\tau(h))\), and
\[
\begin{aligned}
(B_\nu+th)'&\ge1-\tau-|t|\|h'\|_\infty>0,\\
-\nu(B_\nu+th)''&\ge\tau a_q-|t|\|h''\|_\infty>0,\\
(B_\nu+th)(0)&=0,\qquad (B_\nu+th)(1)=1
\qquad(t\in I_\tau(h)).
\end{aligned}
\]
Equation~\eqref{eq:curvature-shape} therefore gives
\[
B_{-1}+th\in\mathcal{L}^\uparrow\cap C^\infty([0,1]),\qquad
B_{+1}+th\in Y\cap C^\infty([0,1])
\qquad(t\in I_\tau(h)).
\]

\noindent\emph{(iii) Definition and parameter independence.}
The value change along each admissible segment will be evaluated. As the affine property makes the difference quotient independent of the step
size, it will yield an exact directional slope.

Fix \(h\in V\), keeping \(\nu\) and \(\tau\) fixed as well;
only the step size will vary in this part.
Part~(ii) ensures that \(B_\nu+th\in X_\nu\)
for every \(t\in I_\tau(h)\). For \(t\ne0\) in this interval, define
\[
Q_\nu(h;t):=
\frac{V_\nu(B_\nu+th)-V_\nu(B_\nu)}t.
\]
First compare two positive steps \(0<s<t<r_\tau(h)\).
Since \(s/t\in(0,1)\), the shorter step lies on the segment from
\(B_\nu\) to \(B_\nu+th\), and affinity gives
\[
\begin{aligned}
B_\nu+sh&=(1-s/t)B_\nu+(s/t)(B_\nu+th),\\
V_\nu(B_\nu+sh)
&=(1-s/t)V_\nu(B_\nu)+(s/t)V_\nu(B_\nu+th),\\
V_\nu(B_\nu+sh)-V_\nu(B_\nu)
&=\frac st\bigl(V_\nu(B_\nu+th)-V_\nu(B_\nu)\bigr).
\end{aligned}
\]
Dividing by \(s>0\), we obtain
\[
\begin{aligned}
Q_\nu(h;s)
&=\frac{V_\nu(B_\nu+sh)-V_\nu(B_\nu)}s
&&\text{(definition)}\\
&=\frac1s\frac st\bigl(V_\nu(B_\nu+th)-V_\nu(B_\nu)\bigr)
&&\text{(affinity above)}\\
&=\frac{V_\nu(B_\nu+th)-V_\nu(B_\nu)}t
&&\text{(cancel \(s\ne0\))}\\
&=Q_\nu(h;t)
&&\text{(definition)}.
\end{aligned}
\]
Thus any two positive admissible steps give the same quotient,
interchanging them if necessary.

To compare opposite signs, use the symmetry of \(I_\tau(h)\).
For \(0<t<r_\tau(h)\), both \(B_\nu+th\) and \(B_\nu-th\)
belong to \(X_\nu\), so affinity at their midpoint gives
\[
\begin{aligned}
B_\nu&=\tfrac12(B_\nu+th)+\tfrac12(B_\nu-th),\\
2V_\nu(B_\nu)&=V_\nu(B_\nu+th)+V_\nu(B_\nu-th),\\
V_\nu(B_\nu-th)-V_\nu(B_\nu)
&=-\bigl(V_\nu(B_\nu+th)-V_\nu(B_\nu)\bigr).
\end{aligned}
\]
Here both the value increment and the step change sign. Hence
\[
\begin{aligned}
Q_\nu(h;-t)
&=\frac{V_\nu(B_\nu-th)-V_\nu(B_\nu)}{-t}\\
&=\frac{-\bigl(V_\nu(B_\nu+th)-V_\nu(B_\nu)\bigr)}{-t}
=Q_\nu(h;t).
\end{aligned}
\]
Combining the two comparisons, for arbitrary nonzero
\(s,t\in I_\tau(h)\), the numbers \(|s|,|t|\) are positive
admissible steps, and:
\[
Q_\nu(h;s)=Q_\nu(h;|s|)=Q_\nu(h;|t|)=Q_\nu(h;t)
\]
Define \(\Lambda_\nu(h)\) to be this common value. Equivalently,
\[
V_\nu(B_\nu+th)-V_\nu(B_\nu)=t\Lambda_\nu(h)
\qquad(t\in I_\tau(h)),
\]
where the equality at \(t=0\) holds directly.

\noindent\emph{(iv) Linearity.}
We now compare different directions using this exact increment
formula. For the zero direction, \(B_\nu+t0=B_\nu\), so
\(\Lambda_\nu(0)=0\).

For homogeneity, let \(a\ne0\). Since
\(\|(ah)^{(j)}\|_\infty=|a|\|h^{(j)}\|_\infty\) for \(j=1,2\),
\[
r_\tau(ah)=\frac{r_\tau(h)}{|a|},\qquad
t\in I_\tau(ah)\ \Longleftrightarrow\ at\in I_\tau(h).
\]
Choose \(0<|t|<r_\tau(h)/|a|\). Both \(t\) and \(at\) are
nonzero admissible steps for their respective directions. Thus
\[
\begin{aligned}
\Lambda_\nu(ah)
&=Q_\nu(ah;t)
=\frac{V_\nu(B_\nu+ath)-V_\nu(B_\nu)}t\\
&=a\,\frac{V_\nu(B_\nu+(at)h)-V_\nu(B_\nu)}{at}\\
&=a\,Q_\nu(h;at)=a\Lambda_\nu(h).
\end{aligned}
\]
The case \(a=0\) follows from \(\Lambda_\nu(0)=0\).

For additivity, we will express the step in direction \(h+k\)
as the midpoint of steps in directions \(h\) and \(k\).
Given \(h,k\in V\), choose
\[
0<t<\frac12\min\{r_\tau(h),r_\tau(k)\}.
\]
This ensures \(2t\in I_\tau(h)\cap I_\tau(k)\). In particular,
each direction uses less than half of each admissibility margin:
\[
t\|v'\|_\infty<\frac{1-\tau}{2},\qquad
t\|v''\|_\infty<\frac{\tau a_q}{2}
\qquad(v\in\{h,k\}).
\]
The triangle inequality then gives
\[
\begin{aligned}
t\|(h+k)'\|_\infty
&\le t\|h'\|_\infty+t\|k'\|_\infty
<\frac{1-\tau}{2}+\frac{1-\tau}{2}=1-\tau,\\
t\|(h+k)''\|_\infty
&\le t\|h''\|_\infty+t\|k''\|_\infty
<\frac{\tau a_q}{2}+\frac{\tau a_q}{2}=\tau a_q.
\end{aligned}
\]
Thus \(t\in I_\tau(h+k)\), and all three perturbed curves below
belong to \(X_\nu\). Affinity gives
\[
\begin{aligned}
B_\nu+t(h+k)
&=\tfrac12(B_\nu+2th)+\tfrac12(B_\nu+2tk),\\
V_\nu(B_\nu+t(h+k))
&=\tfrac12V_\nu(B_\nu+2th)+\tfrac12V_\nu(B_\nu+2tk).
\end{aligned}
\]
Subtract \(V_\nu(B_\nu)\) and use the increment formula from
part~(iii) in each direction:
\[
\begin{aligned}
t\Lambda_\nu(h+k)
&=\tfrac12\bigl(V_\nu(B_\nu+2th)-V_\nu(B_\nu)\bigr)\\
&\quad+\tfrac12\bigl(V_\nu(B_\nu+2tk)-V_\nu(B_\nu)\bigr)\\
&=\tfrac12(2t\Lambda_\nu(h))+\tfrac12(2t\Lambda_\nu(k))\\
&=t\bigl(\Lambda_\nu(h)+\Lambda_\nu(k)\bigr).
\end{aligned}
\]
Dividing by \(t>0\) proves
\(\Lambda_\nu(h+k)=\Lambda_\nu(h)+\Lambda_\nu(k)\).
Together with homogeneity, this proves linearity on \(V\).

\noindent\emph{(v) Recovery of the representatives and uniqueness.}
For \(P\in X_\nu\), the difference \(P-B_\nu\) is smooth
and vanishes at both endpoints, hence belongs to \(V\). Take
\(0<t<\min\{1,r_\tau(P-B_\nu)\}\). Then
\[
B_\nu+t(P-B_\nu)=(1-t)B_\nu+tP.
\]
The choice of \(t\) permits both the increment formula of part~(iii)
and affinity on the segment joining \(B_\nu\) to \(P\).
Only the intermediate point needs the strict margins of part~(ii);
\(P\) is any member of \(X_\nu\). In particular, on the primal
side this includes smooth Lorenz curves with horizontal segments.
Thus
\[
\begin{aligned}
t\Lambda_\nu(P-B_\nu)
&=V_\nu(B_\nu+t(P-B_\nu))-V_\nu(B_\nu)\\
&=(1-t)V_\nu(B_\nu)+tV_\nu(P)-V_\nu(B_\nu)\\
&=t\bigl(V_\nu(P)-V_\nu(B_\nu)\bigr).
\end{aligned}
\]
Division by \(t>0\) gives
\(\Lambda_\nu(P-B_\nu)=V_\nu(P)-V_\nu(B_\nu)\).
Since \(E\in X_\nu\), subtraction at \(P\) and \(E\), using
linearity from part~(iv), gives
\begin{equation}
\begin{aligned}
\Lambda_\nu(P-E)
&=\Lambda_\nu\bigl((P-B_\nu)-(E-B_\nu)\bigr)\\
&=\Lambda_\nu(P-B_\nu)-\Lambda_\nu(E-B_\nu)\\
&=\bigl(V_\nu(P)-V_\nu(B_\nu)\bigr)
-\bigl(V_\nu(E)-V_\nu(B_\nu)\bigr)\\
&=V_\nu(P)-V_\nu(E)\qquad(P\in X_\nu).
\end{aligned}
\label{eq:generic-linear-part}
\end{equation}
If a linear \(\Gamma_\nu\) also satisfies this identity,
then, for \(0\ne t\in I_\tau(h)\),
\[
\begin{aligned}
t\Gamma_\nu(h)
&=\Gamma_\nu(B_\nu+th-E)
-\Gamma_\nu(B_\nu-E)\\
&=V_\nu(B_\nu+th)-V_\nu(B_\nu)
=t\Lambda_\nu(h).
\end{aligned}
\]
Since \(t\ne0\), division by \(t\) gives
\(\Gamma_\nu(h)=\Lambda_\nu(h)\) for every \(h\in V\).
Hence \(\Gamma_\nu=\Lambda_\nu\). Set
\[
T_0:=\Lambda_{-1},\qquad m_0:=\Lambda_{+1}.
\]
For \(F\in X_{+1}\), the sign in \(V_{+1}=-W_Y\) gives
\[
m_0(F-E)=V_{+1}(F)-V_{+1}(E)=W_Y(E)-W_Y(F).
\]
Thus equation~\eqref{eq:generic-linear-part} is exactly
\eqref{eq:original-primal-direction}--\eqref{eq:original-dual-direction}.
Uniqueness also implies independence of the choice of \(\tau\):
every such construction satisfies \eqref{eq:generic-linear-part},
whose statement does not involve \(\tau\).

\noindent\emph{(vi) Positivity and the weighted bounds.}
For \(h\in V\) with \(h\ge0\), choose \(0<t<r_\tau(h)\).
Both \(B_\nu+th\)
and \(B_\nu\) belong to \(X_\nu\) by part~(ii), and
\(B_\nu+th\ge B_\nu\). Thus \eqref{eq:two-chart-positivity} gives
\[
\Lambda_\nu(h)
=\frac{V_\nu(B_\nu+th)-V_\nu(B_\nu)}t\ge0.
\]
Now let \(h\in V\) be arbitrary. Lemma~\ref{lem:quadratic-division} gives
\[
M_h:=\|h/q\|_\infty<\infty,\qquad
M_hq-h\ge0,\qquad M_hq+h\ge0.
\]
Here \(h/q\) denotes its smooth extension at the endpoints.
The two inequalities follow from \(|h|\le M_hq\) on \((0,1)\)
and from \(h=q=0\) at the endpoints. Both \(M_hq-h\) and
\(M_hq+h\) belong to \(V\).
Applying positivity to \(M_hq-h\) and \(M_hq+h\), and then
linearity, gives
\[
0\le M_h\Lambda_\nu(q)-\Lambda_\nu(h),\qquad
0\le M_h\Lambda_\nu(q)+\Lambda_\nu(h).
\]
The two inequalities provide the upper and lower bounds
\[
-M_h\Lambda_\nu(q)\le\Lambda_\nu(h)\le M_h\Lambda_\nu(q).
\]
Taking \(\nu=-1\) and \(\nu=+1\), respectively, yields
\begin{equation}
|T_0(h)|\le M_hT_0(q),\qquad
|m_0(h)|\le M_hm_0(q).
\label{eq:unnormalized-weighted-bound}
\end{equation}

\noindent\emph{(vii) Strict positivity of the two scales.}
To normalize at \(q\), its value must be strictly positive for both
linear parts. A zero value would force the entire corresponding
linear part to vanish by the weighted bound. Smooth density rules
this out on the primal side; a smooth inverse pair then transfers
a strict comparison to the dual side.

Since \(q\ge0\), part~(vi) gives \(T_0(q),m_0(q)\ge0\),
so it suffices to exclude zero in each case.
If \(T_0(q)=0\), then \eqref{eq:unnormalized-weighted-bound}
gives \(T_0=0\). For \(L\in\mathcal{L}\), put
\(L_n=B_nL\). Lemma~\ref{lem:bernstein-density} gives
\(L_n\in X_{-1}\), so the recovered primal identity applies
to every approximant. Equations~\eqref{eq:original-primal-direction}
and~\eqref{eq:bernstein-uniform-bound} give
\[
U(L_n)=U(E),\qquad \|L_n-L\|_\infty\longrightarrow0.
\]
For each \(\varepsilon>0\), choose \(\delta\) from
\eqref{eq:representative-continuity} at \(L\), and then
\(n\) with \(\|L_n-L\|_\infty<\delta\). Thus
\[
|U(L)-U(E)|=|U(L)-U(L_n)|<\varepsilon.
\]
It follows that \(U(L)=U(E)\) for all \(L\), contradicting
nonconstancy. Hence \(\sigma_U>0\).

If \(m_0(q)=0\), then \(m_0=0\). Fix \(t\in(0,1)\), and put
\(R_t:=E-tq\). We have
\[
R_t(0)=0,\quad R_t(1)=1,\quad
R_t'=1-tq'\ge1-t>0,\quad R_t''=-tq''\ge ta_q>0.
\]
Thus \(R_t\in\mathcal{L}^\uparrow\cap C^\infty([0,1])\), and
\(G_t:=R_t^{-1}\in Y\) by Lemma~\ref{lem:domain-inversion}.

To apply the dual linear-part identity, we also verify
\(G_t\in C^\infty([0,1])\). For \(v\ne0\) with \(y,y+v\in[0,1]\),
write \(x=G_t(y)\), \(x_v=G_t(y+v)\). The inverse identity
and the fundamental theorem of calculus give
\[
v=R_t(x_v)-R_t(x)
=(x_v-x)\int_0^1R_t'(x+r(x_v-x))\,dr.
\]
The integral is at least \(1-t>0\). Dividing this equality gives
\[
\frac{x_v-x}{v}
=\frac{1}{\displaystyle\int_0^1R_t'(x+r(x_v-x))\,dr}
\longrightarrow\frac1{R_t'(G_t(y))}.
\]
The limit follows from continuity and \(R_t'\ge1-t>0\);
it also applies one-sidedly at the endpoints. Hence
\(G_t'(y)=1/R_t'(G_t(y))\).
Define smooth functions recursively by
\[
p_0(x):=x,\qquad p_{j+1}(x):=\frac{p_j'(x)}{R_t'(x)}.
\]
The denominator is bounded away from zero. Induction by the chain
rule gives \(G_t^{(j)}(y)=p_j(G_t(y))\) for every \(j\ge0\),
with continuous one-sided extensions at both endpoints.
Therefore \(G_t\in Y\cap C^\infty([0,1])\).

Equations~\eqref{eq:original-primal-direction}--\eqref{eq:original-dual-direction}
then imply
\[
D_U(R_t)=t\sigma_U>0=D_U(E),\qquad
D_W(R_t)=m_0(G_t-E)=0=D_W(E).
\]
Since \(E,R_t\in\mathcal{L}^\uparrow\), the strict primal
inequality would force \(D_W(E)<D_W(R_t)\) by
\eqref{eq:common-deficiency-order}, contrary to their equality.
Thus \(\sigma_W>0\). Recall that
\(T=T_0/\sigma_U\), \(m=m_0/\sigma_W\), and
\(\sigma_U=T_0(q)\), \(\sigma_W=m_0(q)\).
For every \(h\in V\), division of
\eqref{eq:unnormalized-weighted-bound} gives
\[
\begin{aligned}
|T(h)|&=\frac{|T_0(h)|}{\sigma_U}
\le\|h/q\|_\infty\frac{T_0(q)}{\sigma_U}
=\|h/q\|_\infty,\\
|m(h)|&=\frac{|m_0(h)|}{\sigma_W}
\le\|h/q\|_\infty\frac{m_0(q)}{\sigma_W}
=\|h/q\|_\infty.
\end{aligned}
\]
Both absolute values have the same upper bound. Taking their
maximum and using
\(\|h/q\|_\infty\le C_q\|h'\|_\infty\) from
\eqref{eq:quadratic-division-bound} proves
\eqref{eq:positive-weighted-bound}.
\end{proof}

Retain \(U,W_Y\) without rescaling. The constants
\(\sigma_U=T_0(q)\) and \(\sigma_W=m_0(q)\) are already fixed
by these representatives and the reference direction \(q\).
Their expressions in terms of reference curves follow from the
identities just proved. For any \(0<\tau<1\), the admissibility
checks in part~(ii) give
\(E-\tau q\in X_{-1}\) and \(E+\tau q\in X_{+1}\).
Equations~\eqref{eq:original-primal-direction}--\eqref{eq:original-dual-direction}
therefore give
\[
\begin{aligned}
U(E-\tau q)-U(E)&=T_0(-\tau q)=-\tau\sigma_U,\\
W_Y(E)-W_Y(E+\tau q)&=m_0(\tau q)=\tau\sigma_W.
\end{aligned}
\]
Rearranging and dividing by \(\tau>0\) yields
\[
\sigma_U=\frac{U(E)-U(E-\tau q)}{\tau},\qquad
\sigma_W=\frac{W_Y(E)-W_Y(E+\tau q)}{\tau}.
\]
Each quotient is independent of \(\tau\): it equals the
corresponding fixed constant \(\sigma_U\) or \(\sigma_W\).

Use these two positive constants to define the normalized gaps:
\[
\begin{aligned}
d_U(L)&:=\frac{D_U(L)}{\sigma_U}
&&(L\in\mathcal{L}),\\
d_W(L)&:=\frac{D_W(L)}{\sigma_W}
&&(L\in\mathcal{L}^\uparrow).
\end{aligned}
\]
These definitions apply on the full domains displayed above.
To express the gaps through \(T\) and \(m\), we use the smoothness
conditions in the recovered linear identities.
For \(L\in\mathcal{L}\cap C^\infty([0,1])\), the direction
\(E-L\) belongs to \(V\), and
\[
D_U(L)=U(E)-U(L)=-T_0(L-E)=T_0(E-L).
\]
For \(L\in\mathcal{L}^\uparrow\) with smooth inverse,
\(L^{-1}-E\in V\). Since \(W(L)=W_Y(L^{-1})\) and
\(E^{-1}=E\), the dual identity gives
\[
D_W(L)=W_Y(E)-W_Y(L^{-1})=m_0(L^{-1}-E).
\]
Division by \(\sigma_U\) and \(\sigma_W\), respectively, now yields
\begin{equation}
\begin{aligned}
d_U(L)&=T(E-L)&&(L\in\mathcal{L}\cap C^\infty([0,1])),\\
d_W(L)&=m(L^{-1}-E)
&&(L\in\mathcal{L}^\uparrow,\ L^{-1}\in C^\infty([0,1])).
\end{aligned}
\label{eq:normalized-directional-gaps}
\end{equation}

The normalized gaps also retain the common ordering on
\(\mathcal{L}^\uparrow\). For any \(L,M\) in that domain,
\[
\begin{aligned}
d_U(L)-d_U(M)&=\frac{D_U(L)-D_U(M)}{\sigma_U},\\
d_W(L)-d_W(M)&=\frac{D_W(L)-D_W(M)}{\sigma_W}.
\end{aligned}
\]
Since the denominators of the above two equations are both positive, each difference has the same sign
as its numerator. Apply \eqref{eq:common-deficiency-order} therefore yields:
\begin{equation}
\begin{aligned}
d_U(L)=d_U(M)&\Longleftrightarrow d_W(L)=d_W(M),\\
d_U(L)<d_U(M)&\Longleftrightarrow d_W(L)<d_W(M).
\end{aligned}
\label{eq:normalized-fibres}
\end{equation}
Thus the two normalized gaps have the same level sets and order
those sets in the same way. This conclusion holds for all
\(L,M\in\mathcal{L}^\uparrow\), by the definitions of the gaps.

Finally, \eqref{eq:positive-weighted-bound} controls the effect of
an error \(h\in V\) on either linear functional through
\(C_q\|h'\|_\infty\). The next subsection therefore estimates
the first spatial derivative of the inversion remainder along
with the remainder itself.

\subsection{Inversion with integral remainder}
\label{subsec:inverse-variation}

The purpose of this subsection is to obtain the second-order
Taylor expansion needed for the subsequent coefficient comparisons.
For a fixed \(g\in V\), choose \(\rho>0\) with
\(\rho\|g'\|_\infty<1\). We aim to prove
\[
\begin{aligned}
(E-tg)^{-1}&=E+tg+t^2gg'+R_g(t,\cdot),\\
\|R_g(t,\cdot)\|_{C^1}&\le C(g,\rho)|t|^3
\qquad(|t|\le\rho).
\end{aligned}
\]
Here \((E-tg)^{-1}\) denotes the ordinary inverse function.

The proof below is constituted of arguments that are routine symbol pushing, they are included so this paper will be self-contained. We will first establish the desired smooth invertibility, then evaluate the two coefficients at the zero point and the boundedness property of the integral remainder on a fixed parameter interval. This calculation concerns smooth increasing bijections. The next subsection checks the additional Lorenz shape conditions before making preference comparisons.

\begin{lemma}[Second-order inversion with a uniform \(C^1\) remainder]
\label{lem:inverse-variation}
Fix \(g\in V\), so \(g\) is smooth and \(g(0)=g(1)=0\).
The derivative of \(x\mapsto x-tg(x)\) determines the admissible
parameter interval. Define
\[
\Delta_g(t,x):=1-tg'(x),\qquad
\mu_g(t):=\min_{x\in[0,1]}\Delta_g(t,x),\qquad
I_g:=\{t\in\mathbb R:\mu_g(t)>0\}.
\]
Then \(I_g\) is an open interval containing \(0\).
For each \(t\in I_g\), the map \(x\mapsto x-tg(x)\) is a strictly
increasing bijection of \([0,1]\). Write
\(F_g(t,\cdot):=(E-tg)^{-1}\) for its ordinary inverse function.
The function \(F_g(t,y)\) is smooth jointly in \((t,y)\) on
\(I_g\times[0,1]\), with one-sided spatial derivatives at the endpoints.
The following expansion is in the parameter \(t\) at zero,
with \(y\) fixed. For every \(t\in I_g\),
\begin{equation}
F_g(t,\cdot)=E+tg+t^2gg'+R_g(t,\cdot),
\qquad R_g(t,\cdot)\in V,
\label{eq:inverse-taylor}
\end{equation}
with the exact remainder
\begin{equation}
R_g(t,y)
=\frac{t^3}{2}\int_0^1(1-r)^2\partial_t^3F_g(rt,y)\,dr.
\label{eq:inverse-integral-remainder}
\end{equation}

For \(f\in C^1([0,1])\), use the norm
\[
\|f\|_{C^1}:=\max\{\|f\|_\infty,\|f'\|_\infty\}.
\]
A bound for this maximum bounds both the function and its first
derivative. In \(\|R_g(t,\cdot)\|_{C^1}\), the parameter \(t\)
is held fixed: both suprema and the derivative are taken in \(y\).
For any compact interval \(K\) with \(0\in K\subset I_g\),
define
\begin{equation}
C(g,K):=\frac16
\max_{\substack{s\in K\\y\in[0,1]}}
\max\{|\partial_t^3F_g(s,y)|,\,
|\partial_y\partial_t^3F_g(s,y)|\}<\infty.
\label{eq:inverse-remainder-constant}
\end{equation}
Then the same constant works for every \(t\in K\):
\begin{align}
\|R_g(t,\cdot)\|_{C^1}
&\le C(g,K)|t|^3,
\label{eq:inverse-C1-remainder}\\
\max\{|T(R_g(t,\cdot))|,
|m(R_g(t,\cdot))|\}
&\le C_q C(g,K)|t|^3.
\label{eq:inverse-scalar-remainder}
\end{align}
In particular, if \(\rho>0\) and \(\rho\|g'\|_\infty<1\), we
write \(C(g,\rho):=C(g,[-\rho,\rho])\). The estimates then hold
uniformly for \(|t|\le\rho\).
\end{lemma}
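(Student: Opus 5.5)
The plan is to prove the statement in four stages: the interval and bijection claims, joint smoothness of the inverse, the Taylor coefficients with integral remainder, and finally the uniform bounds. \emph{Interval.} Since \(\Delta_g(t,x)=1-tg'(x)\) is affine in \(t\) for each fixed \(x\), the function \(\mu_g\) is a minimum of affine functions. It is therefore concave and continuous, so the superlevel set \(I_g=\{\mu_g>0\}\) is convex, i.e.\ an interval. It is open by continuity, and it contains \(0\) because \(\mu_g(0)=1\). Explicitly, \(I_g=(-1/\max g'_-,\,1/\max g'_+)\) with the convention \(1/0=\infty\).

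\emph{Bijection and smoothness.} For \(t\in I_g\), the map \(\phi_t(x)=x-tg(x)\) has derivative at least \(\mu_g(t)>0\) and fixes \(0\) and \(1\), because \(g(0)=g(1)=0\). Hence it is a strictly increasing bijection of \([0,1]\). For joint smoothness I would extend \(g\) smoothly to a neighbourhood of \([0,1]\), using a Seeley/Borel-type extension or simply a \(C^\infty\) extension of the one-sided jets. I would then apply the implicit function theorem to \(\Phi(t,x,y)=x-tg(x)-y\), whose \(x\)-derivative is \(\Delta_g>0\). This gives a smooth local solution \(x=F_g(t,y)\), and local uniqueness together with global injectivity glues these into the global inverse. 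An alternative that avoids extensions is the recursive argument already used for \(G_t\) in part~(vii) of Proposition~\ref{prop:positive-directions}, now carried out jointly in \((t,y)\). There one uses \(\partial_yF=1/\Delta_g(t,F)\) and \(\partial_tF=g(F)/\Delta_g(t,F)\), obtained by differentiating \(F-tg(F)=y\), and an induction shows that every mixed partial is a smooth expression in \((t,F)\) divided by a power of \(\Delta_g(t,F)\), with one-sided continuous extensions at \(y\in\{0,1\}\).

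\emph{Coefficients and remainder.} At \(t=0\) we have \(F_g(0,y)=y\), and \(\partial_tF(0,y)=g(y)\). Differentiating \(\partial_tF=g(F)/(1-tg'(F))\) once more in \(t\) and evaluating at \(t=0\) gives
\[
\partial_t^2F(0,y)=g'(y)g(y)+g'(y)g(y)=2gg'.
\]
Taylor's formula with integral remainder in \(t\) for fixed \(y\) then produces \eqref{eq:inverse-taylor}, with remainder \eqref{eq:inverse-integral-remainder}, after the substitution \(s=rt\). The remainder lies in \(V\): it is smooth in \(y\) by joint smoothness, and it vanishes at the endpoints because \(F_g(t,0)=0\), \(F_g(t,1)=1\), and \(g=gg'=0\) there.

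\emph{Uniform bounds.} The quantity \(C(g,K)\) is finite because \(\partial_t^3F\) and \(\partial_y\partial_t^3F\) are continuous on the compact set \(K\times[0,1]\). For \(t\in K\) we have \(rt\in K\) (since \(K\) is an interval containing \(0\)), and \(\int_0^1(1-r)^2dr=1/3\). This gives \(|R_g|\le C(g,K)|t|^3\). Differentiating under the integral sign, justified by Lemma~\ref{lem:parameter-integration}, gives the same bound for \(\partial_yR_g\), which yields \eqref{eq:inverse-C1-remainder}. Finally, \eqref{eq:inverse-scalar-remainder} follows by applying \eqref{eq:positive-weighted-bound} to \(h=R_g(t,\cdot)\in V\), using \(\|h'\|_\infty\le\|h\|_{C^1}\). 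The case \(K=[-\rho,\rho]\) is legitimate because \(\rho\|g'\|_\infty<1\) forces \(\mu_g(t)\ge1-\rho\|g'\|_\infty>0\) for \(|t|\le\rho\).

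The main obstacle is the joint smoothness step up to the spatial endpoints, including the one-sided derivatives. I would handle it by the explicit recursive-quotient induction, which keeps every denominator a power of \(\Delta_g\ge\mu_g(t)>0\), locally uniformly in \(t\).
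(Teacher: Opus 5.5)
Your proposal is correct and follows the paper's proof essentially step for step. The steps you use are the paper's: the recursive-quotient induction with denominators that are powers of $\Delta_g$ (including the continuous one-sided extensions at $y\in\{0,1\}$), the coefficient computation $\partial_t^2F_g(0,y)=2gg'$, Taylor's formula with the substitution $s=rt$, Lemma~\ref{lem:parameter-integration} for $\partial_yR_g$, and \eqref{eq:positive-weighted-bound} for the scalar bound; your concavity argument for $I_g$ is a harmless variant of the paper's Lipschitz estimate $|\mu_g(t)-\mu_g(s)|\le|t-s|\|g'\|_\infty$. The one point to make explicit is that the first-derivative formulas must come from difference quotients once joint continuity of $F_g$ is known, not from differentiating $F-tg(F)=y$ outright; the paper obtains that continuity from $|F_g(t,y)-F_g(s,z)|\le(|y-z|+|t-s|\|g\|_\infty)/\mu_g(t)$.
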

\begin{proof}
\noindent\emph{(i) A positive slope gives an inverse and joint continuity.}
For \(s,t\in\mathbb R\), the difference between the two minima
is at most the largest pointwise difference. Hence
\[
\begin{aligned}
|\mu_g(t)-\mu_g(s)|
&\le\max_{x\in[0,1]}|\Delta_g(t,x)-\Delta_g(s,x)|\\
&=|t-s|\|g'\|_\infty,\qquad \mu_g(0)=1.
\end{aligned}
\]
Thus \(I_g\) is open and contains zero. For
\(s,t\in I_g\) and \(\lambda\in[0,1]\),
\[
\begin{aligned}
\Delta_g((1-\lambda)s+\lambda t,x)
&=(1-\lambda)\Delta_g(s,x)+\lambda\Delta_g(t,x)\\
&\ge(1-\lambda)\mu_g(s)+\lambda\mu_g(t)>0.
\end{aligned}
\]
Hence \(I_g\) is an interval. Its endpoints can be read directly
from the extreme slopes. Put
\[
m_g:=\min_{[0,1]}g',\qquad M_g:=\max_{[0,1]}g'.
\]
Since \(\int_0^1g'=g(1)-g(0)=0\), either \(g=0\), or
\(m_g<0<M_g\). Indeed, a continuous derivative of only one sign
with zero integral must vanish identically, and the endpoint
condition would then give \(g=0\). Therefore
\begin{equation}
\begin{aligned}
\mu_g(t)&=
\begin{cases}
1-tM_g,&t\ge0,\\
1-tm_g,&t\le0,
\end{cases}\\
I_g&=
\begin{cases}
\mathbb R,&g=0,\\
(1/m_g,1/M_g),&g\ne0.
\end{cases}
\end{aligned}
\label{eq:inverse-exact-parameter-domain}
\end{equation}
Thus \(I_g\) is precisely the parameter interval on which the
slope is strictly positive everywhere. If \(t\in I_g\),
the map \(f_t:=E-tg\) fixes both endpoints and satisfies
\[
f_t(b)-f_t(a)=\int_a^b\Delta_g(t,x)\,dx
\ge\mu_g(t)(b-a)>0\qquad(0\le a<b\le1).
\]
Continuity and the endpoint values therefore make \(f_t\) an
increasing bijection of \([0,1]\). Its inverse \(F_g(t,\cdot)\)
is characterized by \((E-tg)(F_g(t,y))=y\), or equivalently,
\begin{equation}
F_g(t,y)-t g(F_g(t,y))=y,\qquad
F_g(t,0)=0,\quad F_g(t,1)=1.
\label{eq:inverse-defining-identity}
\end{equation}

We will differentiate this identity only after establishing
continuity of the inverse in both variables. Write
\(a=F_g(t,y)\), \(b=F_g(s,z)\). The two inverse relations give
\[
\begin{aligned}
a-tg(a)&=y,\qquad b-sg(b)=z,\\
(a-b)-t\bigl(g(a)-g(b)\bigr)&=y-z+(t-s)g(b).
\end{aligned}
\]
The fundamental theorem of calculus along the segment from \(b\)
to \(a\) gives
\[
g(a)-g(b)=(a-b)\int_0^1g'(b+r(a-b))\,dr,
\]
including when \(a=b\). Substitution yields
\begin{equation}
(a-b)\left(1-t\int_0^1g'(b+r(a-b))\,dr\right)
=y-z+(t-s)g(b).
\label{eq:inverse-difference-identity}
\end{equation}
The factor in parentheses is at least \(\mu_g(t)>0\), since
the segment from \(b\) to \(a\) lies in \([0,1]\). Consequently,
\begin{equation}
|F_g(t,y)-F_g(s,z)|
\le\frac{|y-z|+|t-s|\|g\|_\infty}{\mu_g(t)}.
\label{eq:inverse-joint-modulus}
\end{equation}
With \((t,y)\) fixed, this bound tends to zero as \((s,z)\to(t,y)\);
it proves joint continuity, including at the spatial endpoints.

\smallskip
\noindent\emph{(ii) Differentiability, including at the endpoints.}
Fix \((t,y)\in I_g\times[0,1]\), abbreviate \(F=F_g(t,y)\),
and put \(d=\Delta_g(t,F)>0\). Choose nonzero increments \(v,u\)
with \(y+v\in[0,1]\) and \(t+u\in I_g\). Let
\(F_v=F_g(t,y+v)\), \(F_u=F_g(t+u,y)\).
Apply \eqref{eq:inverse-difference-identity} first to
\((t,y+v),(t,y)\), and then to \((t+u,y),(t,y)\).
Dividing the resulting identities by \(v\) and \(u\), respectively,
gives
\[
\begin{aligned}
\frac{F_v-F}{v}
&=\frac{1}{1-t\int_0^1g'(F+r(F_v-F))\,dr},\\
\frac{F_u-F}{u}
&=\frac{g(F)}
{1-(t+u)\int_0^1g'(F+r(F_u-F))\,dr}.
\end{aligned}
\]
Joint continuity gives \(F_v,F_u\to F\).
Uniform continuity of \(g'\) on \([0,1]\) makes both denominators
converge to \(d>0\). Thus
\begin{equation}
\partial_yF_g(t,y)=\frac1{\Delta_g(t,F_g(t,y))},\qquad
\partial_tF_g(t,y)=\frac{g(F_g(t,y))}
{\Delta_g(t,F_g(t,y))}.
\label{eq:inverse-first-derivatives}
\end{equation}
These derivatives are continuous. At \(y=0,1\), the spatial
quotients use increments inside \([0,1]\), so the same formulas
give the one-sided derivatives.

To justify subsequent differentiations and their uniform bounds,
we record the structure of all higher derivatives. Suppress the
subscript on \(\Delta_g\). The index \(w\) records a finite sequence
of differentiations: for example, \(w=ttty\) means three parameter
derivatives followed by one spatial derivative.
The auxiliary functions \(P_w(t,x)\) are calculated with \(t,x\)
independent; we will then substitute \(x=F_g(t,y)\). Define
\begin{equation}
\begin{aligned}
P_{\varnothing}(t,x)&:=x,\\
P_{wt}(t,x)&:=\partial_tP_w(t,x)
+\frac{g(x)}{\Delta(t,x)}\partial_xP_w(t,x),\\
P_{wy}(t,x)&:=\frac1{\Delta(t,x)}\partial_xP_w(t,x).
\end{aligned}
\label{eq:inverse-derivative-recurrence}
\end{equation}
The two recursion rules are the chain rule for parameter and
spatial differentiation, respectively, after substituting
\(x=F_g(t,y)\). If \(D_w\) denotes successive differentiation
in the order recorded by \(w\), induction using
\eqref{eq:inverse-first-derivatives} gives
\[
D_wF_g(t,y)=P_w(t,F_g(t,y))
\qquad(t\in I_g,\ 0<y<1).
\]
For the first derivatives the recursion gives
\(P_t=g/\Delta\) and \(P_y=1/\Delta\), in agreement with
\eqref{eq:inverse-first-derivatives}. For the induction step,
write \(F=F_g(t,y)\). The chain rule gives
\[
\begin{aligned}
\partial_t\bigl(P_w(t,F)\bigr)
&=(\partial_tP_w)(t,F)+(\partial_xP_w)(t,F)\,\partial_tF\\
&=(\partial_tP_w)(t,F)
+(\partial_xP_w)(t,F)\frac{g(F)}{\Delta(t,F)}
=P_{wt}(t,F),\\
\partial_y\bigl(P_w(t,F)\bigr)
&=(\partial_xP_w)(t,F)\,\partial_yF\\
&=\frac{(\partial_xP_w)(t,F)}{\Delta(t,F)}
=P_{wy}(t,F).
\end{aligned}
\]

Every \(P_w\) is a finite sum of smooth numerators divided by
powers of \(\Delta\). This property is preserved by both recursion
rules, as is seen from
\[
\begin{aligned}
\partial_t\!\left(\frac{a}{\Delta^j}\right)
&=\frac{a_t}{\Delta^j}
+\frac{jag'}{\Delta^{j+1}},\\
\partial_x\!\left(\frac{a}{\Delta^j}\right)
&=\frac{a_x}{\Delta^j}
+\frac{jta g''}{\Delta^{j+1}}
\qquad(j\ge0).
\end{aligned}
\]
Thus these expressions extend continuously to \(x=0,1\)
wherever \(\Delta>0\). Their compositions with \(F_g\) give
continuous extensions of all interior derivatives to \(y=0,1\).

These extensions are the required endpoint derivatives.
For example, the fundamental theorem of calculus, first on an
interior interval and then by continuity at zero, gives
\[
\frac{P_w(t,F_g(t,v))-P_w(t,F_g(t,0))}{v}
=\int_0^1P_{wy}(t,F_g(t,rv))\,dr
\longrightarrow P_{wy}(t,F_g(t,0))
\quad(v\downarrow0).
\]
At the endpoint \(1\), use the increment \(-v\) and the path
\(1-rv\); the same argument gives \(P_{wy}(t,F_g(t,1))\).
For parameter derivatives at an endpoint \(p\in\{0,1\}\),
use \(F_g(t,p)=p\): differentiation of \(P_w(t,p)\) agrees
with \(P_{wt}(t,p)\), since \(g(p)=0\).
This completes the smoothness argument on the closed spatial
interval.

\smallskip
\noindent\emph{(iii) The linear and quadratic Taylor coefficients.}
The smoothness established in part~(ii) now justifies differentiating
the inverse identity with respect to \(t\), while holding \(y\) fixed.
We compute the derivatives at zero that enter the second-order
expansion. Write \(F=F_g(t,y)\); the defining identity is
\(F-tg(F)=y\). The product and chain rules give
\[
\begin{aligned}
\partial_tF-g(F)-t g'(F)\partial_tF&=0,\\
(1-tg'(F))\partial_tF&=g(F).
\end{aligned}
\]
Here \(g'\) and \(g''\) denote derivatives of \(g\) with respect
to its argument. To differentiate the second identity once more,
we must also differentiate the factor \(1-tg'(F)\), since \(F\)
depends on \(t\):
\[
\partial_t\bigl(1-tg'(F)\bigr)
=-g'(F)-t g''(F)\partial_tF.
\]
Applying the product rule to the left-hand side and the chain rule
to the right-hand side therefore yields
\[
\begin{aligned}
&(1-tg'(F))\partial_t^2F
-g'(F)\partial_tF-t g''(F)(\partial_tF)^2\\
&\qquad=g'(F)\partial_tF.
\end{aligned}
\]
Moving the last two terms on the left to the right combines the two
copies of \(g'(F)\partial_tF\), giving
\[
(1-tg'(F))\partial_t^2F
=2g'(F)\partial_tF+t g''(F)(\partial_tF)^2.
\]
At \(t=0\), the inverse identity gives \(F_g(0,y)=y\).
The factor \(1-tg'(F)\) is then one, and the term multiplied by
\(t\) in the second-derivative identity vanishes. Substituting first
into the first-derivative identity and then into the second gives
\begin{equation}
\begin{aligned}
\partial_tF_g(0,y)
&=g(F_g(0,y))=g(y),\\
\partial_t^2F_g(0,y)
&=2g'(F_g(0,y))\partial_tF_g(0,y)\\
&=2g'(y)g(y).
\end{aligned}
\label{eq:inverse-derivatives-at-zero}
\end{equation}
Thus the linear and quadratic Taylor terms are \(t g(y)\) and
\(\tfrac{t^2}{2}\partial_t^2F_g(0,y)=t^2g(y)g'(y)\), respectively.

The two coefficients will play the central role in the rest of whole proof project.
The coefficient \(g\) will identify the normalized linear
functionals; the coefficient \(gg'\) will yield the quadratic
identity. The remainder estimate justifies both
coefficient comparisons.

It remains to derive the integral remainder and prove a bound
that holds uniformly on a fixed parameter interval.

\smallskip
\noindent\emph{(iv) One constant on a fixed parameter interval.}
Fix a compact interval \(K=[a,b]\) with \(a\le0\le b\) and
\(K\subset I_g\).
Its slope margin
\[
\begin{aligned}
\delta(g,K)
&:=\min_{\substack{s\in K\\x\in[0,1]}}\Delta_g(s,x)
=\min_{s\in K}\mu_g(s)\\
&=\min\{1-a m_g,\,1-b M_g\}>0
\end{aligned}
\]
follows from \eqref{eq:inverse-exact-parameter-domain}. In particular,
on a one-sided interval it is exactly
\[
\delta(g,[0,b])=1-bM_g,
\qquad 0\le b<1/M_g\quad(g\ne0).
\]
The recurrence in
part~(ii) shows that the two derivatives required for a \(C^1\)
cubic remainder are
\[
\partial_t^3F_g(s,y)=P_{ttt}(s,F_g(s,y)),\qquad
\partial_y\partial_t^3F_g(s,y)=P_{ttty}(s,F_g(s,y)).
\]
Their numerators are bounded on \(K\times[0,1]\), and every
denominator is bounded away from zero by a power of \(\delta(g,K)\).
Since \(F_g(s,\cdot)\) maps \([0,1]\) onto itself,
\[
C(g,K)=\frac16
\max_{\substack{s\in K\\x\in[0,1]}}
\max\{|P_{ttt}(s,x)|,|P_{ttty}(s,x)|\}<\infty.
\]
This proves the finiteness claimed in
\eqref{eq:inverse-remainder-constant}. By its definition, the
same number bounds both derivatives at every point of the rectangle:
\[
\begin{aligned}
|\partial_t^3F_g(s,y)|&\le6C(g,K),\\
|\partial_y\partial_t^3F_g(s,y)|&\le6C(g,K)
\qquad(s\in K,\ y\in[0,1]).
\end{aligned}
\]
The constant is determined by \(g\) and the fixed interval \(K\).
It remains unchanged when the evaluation point \((s,y)\) varies.

For later comparisons, choose \(\rho>0\) with
\(\rho\|g'\|_\infty<1\). Such a radius always exists, and
\[
\delta(g,[-\rho,\rho])
=1-\rho\max\{|m_g|,|M_g|\}
=1-\rho\|g'\|_\infty>0.
\]
We may therefore define
\begin{equation}
C(g,\rho):=C(g,[-\rho,\rho])<\infty.
\label{eq:compact-third-derivative-bound}
\end{equation}
The radius is fixed before \(t\) varies. More generally, for any
\(t\in I_g\), the interval
\(K_t:=[\min\{0,t\},\max\{0,t\}]\) is compact and contained in
\(I_g\), so \(C(g,K_t)\) is finite. The symmetric-radius
estimate supplies uniformity near zero while the intervals \(K_t\)
retain the expansion on the full parameter domain.

\smallskip
\noindent\emph{(v) The integral remainder and the resulting scalar error.}
For fixed \(y\), Taylor's formula in the parameter follows by
integrating \(\partial_t^3F_g\) three times. Explicitly, two
integrations by parts give
\[
\begin{aligned}
&\frac12\int_0^t(t-s)^2\partial_t^3F_g(s,y)\,ds\\
&\quad=-\frac{t^2}{2}\partial_t^2F_g(0,y)
+\int_0^t(t-s)\partial_t^2F_g(s,y)\,ds\\
&\quad=F_g(t,y)-F_g(0,y)-t\partial_tF_g(0,y)
-\frac{t^2}{2}\partial_t^2F_g(0,y).
\end{aligned}
\]
Oriented integrals give the same identity for \(t<0\).
Using \eqref{eq:inverse-derivatives-at-zero} and \(F_g(0,y)=y\),
the last expression is exactly
\(F_g(t,y)-y-tg(y)-t^2g(y)g'(y)\).
Thus this difference defines the remainder \(R_g(t,y)\).
In the integral, the substitution \(s=rt\) gives
\((t-s)^2=t^2(1-r)^2\) and \(ds=t\,dr\), so
\[
R_g(t,y)
=\frac12\int_0^t(t-s)^2\partial_t^3F_g(s,y)\,ds
=\frac{t^3}{2}\int_0^1(1-r)^2\partial_t^3F_g(rt,y)\,dr.
\]
This proves \eqref{eq:inverse-taylor} and
\eqref{eq:inverse-integral-remainder}, also at \(t=0\).
Here \(\partial_t^3F_g(rt,y)\) denotes the third derivative in
the first argument of \(F_g\), evaluated at \((rt,y)\).

We need the same control after spatial differentiation.
The continuous mixed derivatives established above permit
Lemma~\ref{lem:parameter-integration} to be applied in \(y\), giving
\[
\partial_yR_g(t,y)
=\frac{t^3}{2}\int_0^1(1-r)^2
\partial_y\partial_t^3F_g(rt,y)\,dr.
\]
Fix \(t\in K\). Since \(K\) is an interval containing zero,
\(rt\in K\) for every \(0\le r\le1\). The two derivative bounds
from part~(iv) therefore apply throughout both integrals.
For every \(y\in[0,1]\), they give
\[
\begin{aligned}
|R_g(t,y)|
&\le\frac{|t|^3}{2}\int_0^1(1-r)^2
|\partial_t^3F_g(rt,y)|\,dr\\
&\le\frac{|t|^3}{2}\,6C(g,K)\int_0^1(1-r)^2\,dr
=C(g,K)|t|^3,\\
|\partial_yR_g(t,y)|
&\le\frac{|t|^3}{2}\int_0^1(1-r)^2
|\partial_y\partial_t^3F_g(rt,y)|\,dr\\
&\le\frac{|t|^3}{2}\,6C(g,K)\int_0^1(1-r)^2\,dr
=C(g,K)|t|^3.
\end{aligned}
\]
Here \(\int_0^1(1-r)^2\,dr=1/3\), which explains the factor
\(1/6\) in the definition of \(C(g,K)\).
The right-hand sides are independent of \(y\). Taking suprema
and then the maximum of the two bounds yields
\[
\begin{aligned}
\|R_g(t,\cdot)\|_{C^1}
&=\max\{\|R_g(t,\cdot)\|_\infty,\,
\|\partial_yR_g(t,\cdot)\|_\infty\}\\
&\le C(g,K)|t|^3.
\end{aligned}
\]
This proves \eqref{eq:inverse-C1-remainder} for every \(t\in K\)
with the same constant.
Moreover, the expansion defines a smooth function of \(y\),
and \(F_g(t,p)=p\), \(g(p)=0\) for \(p=0,1\) give
\(R_g(t,0)=R_g(t,1)=0\).
Hence \(R_g(t,\cdot)\in V\).

We can now use \eqref{eq:positive-weighted-bound}, since its input
\(h=R_g(t,\cdot)\) belongs to \(V\). For fixed \(t\), the derivative
\(h'\) appearing in that bound is 
\(\partial_yR_g(t,\cdot)\). Hence
\[
\begin{aligned}
\max\{|T(R_g(t,\cdot))|,
|m(R_g(t,\cdot))|\}
&\le C_q\|\partial_yR_g(t,\cdot)\|_\infty\\
&\le C_q C(g,K)|t|^3,
\end{aligned}
\]
where the last step uses the derivative estimate proved above.
This is \eqref{eq:inverse-scalar-remainder}.
This is the estimate needed in the next subsection: applying
either functional to the inverse expansion preserves a cubic
error with a constant fixed on the chosen parameter interval.
The estimate is applied to the remainder as a function; it
requires no interchange of a functional with an integral.
\end{proof}

\subsection{Coefficient comparison and the bilinear identity}
\label{subsec:second-order-compatibility}

This is where the two mixture structures impose a restriction on each other that helps avoid the single-sided blow-up counterexample. We will need some classical techniques that are standard in functional analysis, see \cite{rudin1991,conway1990,kreyszig1978,yosida1980}. For our application, these techniques are adjusted in terms so explicit  remainder estimates for inversion expansion could be made use of.

Given the two affine functionals represent the same ordering of Lorenz curve, We would
compare the first- and second-order coefficients of their gap
expansions by choosing curves with the same normalized gap on the primal side as
their dual gaps must agree as well. The functionals \(m,T\) and
the reference direction \(q\) remain fixed throughout this argument.
For each admissible direction \(g\), we obtain an identity valid
for every sufficiently small positive \(t\). This family of exact
identities forces agreement first of the linear coefficients and
then of the quadratic coefficients. Perturbations of \(q\) extend
these conclusions to all smooth directions, and polarization gives
the bilinear identity below.

\begin{proposition}[A bilinear identity for the normalized functional]
\label{prop:second-order-compatibility}
For \(h\in V\), the normalized functionals of
Proposition~\ref{prop:positive-directions}:
\[
m(h)=T(h).
\]
With \(c:=2m(qq')=2T(qq')\),
\begin{equation}
T((hk)')=c T(h)T(k)
\qquad(h,k\in V).
\label{eq:bilinear-compatibility}
\end{equation}
\end{proposition}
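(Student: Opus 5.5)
The plan is to compare the two normalized gaps $d_U$ and $d_W$ along one-parameter families of test curves $E-tg$, for directions $g$ close to the reference gap $q$. We choose two such curves with the same primal gap, which forces equal dual gaps by \eqref{eq:normalized-fibres}. Expanding the dual gap with Lemma~\ref{lem:inverse-variation} and matching powers of $t$ then gives the two identities.

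\emph{Step 1 (admissible directions).} Fix $h\in V$ and set $g_\varepsilon:=q+\varepsilon h$. For $|\varepsilon|\|h''\|_\infty<a_q/2$ we have $g_\varepsilon''\le -a_q/2<0$ on $[0,1]$. Hence for $0<t<\rho$, with $\rho\|g_\varepsilon'\|_\infty<1$, the curve $L_t:=E-tg_\varepsilon$ has positive slope, positive second derivative and endpoint values $0,1$. By the curvature criterion \eqref{eq:curvature-shape} it lies in $\mathcal{L}^\uparrow\cap C^\infty([0,1])$. By Lemma~\ref{lem:inverse-variation} its inverse $F_{g_\varepsilon}(t,\cdot)$ is smooth. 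The same holds for the comparison curve $M_t:=E-t\,T(g_\varepsilon)q$. For $\varepsilon$ small, $T(g_\varepsilon)$ is close to $T(q)=1$ by \eqref{eq:positive-weighted-bound}, so it is positive and the curve is admissible. By \eqref{eq:normalized-directional-gaps},
\[
d_U(L_t)=tT(g_\varepsilon)=d_U(M_t),
\]
and therefore $d_W(L_t)=d_W(M_t)$ for all small $t>0$.

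\emph{Step 2 (coefficient matching).} Apply \eqref{eq:normalized-directional-gaps} on the dual side together with \eqref{eq:inverse-taylor} and \eqref{eq:inverse-scalar-remainder}. This gives
\[
d_W(L_t)=t\,m(g_\varepsilon)+t^2m(g_\varepsilon g_\varepsilon')+O(t^3),
\]
and similarly
\[
d_W(M_t)=tT(g_\varepsilon)m(q)+t^2T(g_\varepsilon)^2m(qq')+O(t^3),
\]
with constants fixed on $[0,\rho]$. Divide the equality by $t$ and let $t\downarrow0$; using $m(q)=1$ this gives $m(g_\varepsilon)=T(g_\varepsilon)$. Subtract the linear terms, divide by $t^2$ and let $t\downarrow 0$; this gives
\[
m(g_\varepsilon g_\varepsilon')=T(g_\varepsilon)^2m(qq').
\]
Linearity and $m(q)=T(q)=1$ turn the first identity into $\varepsilon m(h)=\varepsilon T(h)$, so $m=T$ on $V$. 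In particular $c=2m(qq')=2T(qq')$ is well defined.

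\emph{Step 3 (polarization).} Using $g g'=\tfrac12 (g^2)'$ and $m=T$, the second identity reads
\[
T\bigl(((q+\varepsilon h)^2)'\bigr)=c\,T(q+\varepsilon h)^2
\]
for all sufficiently small $|\varepsilon|$. Both sides are polynomials in $\varepsilon$ of degree at most two, by linearity of $T$ (note $(q+\varepsilon h)^2-q^2\in V$, so every piece is a legitimate argument). Agreement on an interval forces agreement of the coefficients. The $\varepsilon^2$ coefficient gives $T((h^2)')=cT(h)^2$ for every $h\in V$. Replacing $h$ by $h+k$ and subtracting the identities for $h$ and $k$ yields $2T((hk)')=2cT(h)T(k)$, which is \eqref{eq:bilinear-compatibility}.

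The main obstacle is Step 2. We must make sure the $O(t^3)$ errors are genuinely uniform on a fixed parameter interval, so that the order-by-order limits are legitimate. This is exactly what \eqref{eq:inverse-scalar-remainder} with $K=[0,\rho]$ supplies, applied separately to $g_\varepsilon$ and to $T(g_\varepsilon)q$. We must also verify that every curve used actually lies in $\mathcal{L}^\uparrow$ with a smooth inverse, so that \eqref{eq:normalized-fibres} and \eqref{eq:normalized-directional-gaps} apply; this is why the directions are kept near $q$.
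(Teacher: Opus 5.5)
Your proposal is correct and follows essentially the paper's route. You compare $E-tg$ with the reference curve $E-tT(g)q$ for uniformly concave $g$, match the $t$ and $t^2$ coefficients using the uniform cubic remainder of Lemma~\ref{lem:inverse-variation}, extend to all of $V$ through the perturbations $q+\varepsilon h$, and polarize. The only difference is cosmetic: you extract $T((h^2)')=cT(h)^2$ as the $\varepsilon^2$ coefficient of a quadratic identity valid on an interval, whereas the paper works on the whole cone $V_{\mathrm{conc}}$ and isolates the same term with the symmetric second difference at $\varepsilon=\pm s$.
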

\begin{proof}
\noindent\emph{(i) The reference family and its equality fibres.}
For \(0\le s<s_q:=1/q'(0)\), set \(R_s:=E-sq\).
Since \(q(0)=q(1)=0\), one has \(R_s(0)=0\) and \(R_s(1)=1\).
The derivative bounds are
\[
\begin{aligned}
R_s'&\ge1-sq'(0)>0,\\
R_s''&=-sq''\ge sa_q\ge0.
\end{aligned}
\]
Thus \(R_s\in\mathcal{L}^\uparrow\cap C^\infty([0,1])\), and its
inverse satisfies
\[
R_s^{-1}=F_q(s,\cdot)\in Y\cap C^\infty([0,1]).
\]
The normalized primal-gap formula and \(T(q)=1\) give
\[
d_U(R_s)=sT(q)=s.
\]
Define
\begin{equation}
\Phi:[0,s_q)\to[0,\infty),\qquad
\Phi(s):=d_W(R_s)=m(F_q(s,\cdot)-E).
\label{eq:reference-transfer}
\end{equation}
The identity \(d_U(R_s)=s\) makes the reference family a convenient
choice of one curve for each primal gap in \([0,s_q)\). Thus \(\Phi\)
determines the dual gap of any curve with the same primal gap,
not just the dual gaps along the reference family.

For \(L\in\mathcal{L}^\uparrow\) and \(s=d_U(L)<s_q\),
\begin{equation}
d_U(L)=s=d_U(R_s)
\ \Longrightarrow\ d_W(L)=d_W(R_s)=\Phi(s).
\label{eq:all-curve-transfer}
\end{equation}
This is \eqref{eq:normalized-fibres}. It also gives
\[
0\le s<t<s_q\ \Longrightarrow\
d_U(R_s)<d_U(R_t)\ \Longrightarrow\ \Phi(s)<\Phi(t),
\qquad \Phi(0)=0.
\]
For \(s_0\in[0,s_q)\), \(\varepsilon>0\), continuity of \(W\)
at \(R_{s_0}\) gives \(\delta>0\) with
\[
L\in\mathcal{L}^\uparrow,\quad\|L-R_{s_0}\|_\infty<\delta
\ \Longrightarrow\ |W(L)-W(R_{s_0})|<\sigma_W\varepsilon.
\]
Since
\[
\|R_s-R_{s_0}\|_\infty=\|q\|_\infty|s-s_0|,
\qquad
|s-s_0|<\delta/\|q\|_\infty
\ \Longrightarrow\ |\Phi(s)-\Phi(s_0)|<\varepsilon,
\]
\(\Phi\) is continuous, including at zero.

\noindent\emph{(ii) Exact scalar expansions.}
Linear functional will be applied to the Taylor expansion of inversion.
The remainder estimate controls the resulting scalar error and supplies an
expansion of \(\Phi\) at zero; no smoothness of the transfer map
has been assumed.

For \(g\in V\) and \(t\in I_g\), let
\(r_g(t):=m(R_g(t,\cdot))\).
Here \(R_g(t,\cdot)\) is a function of the spatial variable,
whereas \(r_g(t)\) is a real number.
All terms lie in the domain \(V\) of \(m\): the inverse fixes
both endpoints, so \(F_g(t,\cdot)-E\in V\), and \(gg'\) vanishes
there because \(g\) does. Lemma~\ref{lem:inverse-variation}
also gives \(R_g(t,\cdot)\in V\).
For each fixed \(t\), the numbers \(t\) and \(t^2\) are scalars.
Thus \eqref{eq:inverse-taylor}, additivity, and homogeneity of \(m\)
give the exact identity
\begin{equation}
\begin{aligned}
m(F_g(t,\cdot)-E)
&=m\bigl(tg+t^2gg'+R_g(t,\cdot)\bigr)\\
&=t m(g)+t^2m(gg')+m(R_g(t,\cdot))\\
&=t m(g)+t^2m(gg')+r_g(t).
\end{aligned}
\label{eq:scalar-inverse-expansion}
\end{equation}
For every fixed \(\rho>0\) with \([0,\rho]\subset I_g\),
Lemma~\ref{lem:inverse-variation} supplies
\(C(g,[0,\rho])<\infty\).
To bound the scalar error, apply \eqref{eq:positive-weighted-bound}
to the function \(R_g(t,\cdot)\), then use
\eqref{eq:inverse-C1-remainder}:
\[
\begin{aligned}
|r_g(t)|
&=|m(R_g(t,\cdot))|\\
&\le C_q\|\partial_yR_g(t,\cdot)\|_\infty\\
&\le C_q\|R_g(t,\cdot)\|_{C^1}\\
&\le C_q C(g,[0,\rho])t^3
\qquad(0\le t\le\rho).
\end{aligned}
\]
The middle inequality uses the definition
\(\|f\|_{C^1}=\max\{\|f\|_\infty,\|f'\|_\infty\}\).
Thus the expansion holds on \(I_g\), and its error
is controlled by one fixed constant on each such smaller interval.

We now turn to the other direction of functional: \(g=q\).
Set \(c:=2m(qq')\). This constant is fixed by \(m\) and \(q\),
independently of the comparison direction \(g\).
Recall from \eqref{eq:reference-transfer} that
\(\Phi(s)=m(F_q(s,\cdot)-E)\).
Taking \(g=q\), \(t=s\) in \eqref{eq:scalar-inverse-expansion},
and using \(m(q)=1\) and \(m(qq')=c/2\), gives
\begin{equation}
\begin{aligned}
\Phi(s)
&=m(F_q(s,\cdot)-E)\\
&=s\,m(q)+s^2m(qq')+r_q(s)\\
&=s+\frac{c}{2}s^2+r_q(s)
\qquad(0\le s<s_q).
\end{aligned}
\label{eq:transfer-second-order}
\end{equation}
For every fixed \(0<\eta<s_q=1/q'(0)\), the derivative \(q'\)
is decreasing, so
\[
1-sq'(y)\ge1-\eta q'(0)>0
\qquad(0\le s\le\eta,\ 0\le y\le1).
\]
Hence \([0,\eta]\subset I_q\). Applying the preceding scalar
remainder bound with \(g=q\) and \(\rho=\eta\) gives
\[
|r_q(s)|\le C_q C(q,[0,\eta])s^3
\qquad(0\le s\le\eta).
\]
We will use this bound with \(\eta=\rho T(g)\), so the reference
expansion is controlled throughout the same comparison interval;
the required inequality \(0<\rho T(g)<s_q\) is verified in part~(iii).

\noindent\emph{(iii) Admissible one-sided comparisons.}
To make \(E-tg\) a Lorenz curve for small \(t>0\), we initially
restrict \(g\) to directions with uniformly strict concavity.
This also guarantees a positive primal gap. Later, the curvature
margin of \(q\) will let us recover identities for every smooth
direction from these admissible comparisons.

Define
\[
V_{\mathrm{conc}}:=\{g\in V:\gamma_g:=\min_{[0,1]}(-g'')>0\}.
\]
Fix an arbitrary \(g\in V_{\mathrm{conc}}\). To verify that \(T(g)>0\),
compare \(g\) with a positive multiple of \(q\):
\[
\lambda_g:=\min_{[0,1]}\frac{-g''}{-q''}>0,\qquad
w_g:=g-\lambda_gq.
\]
The endpoint values satisfy \(w_g(0)=w_g(1)=0\), and
\[
w_g''=g''-\lambda_gq''
=-(-q'')\left(\frac{-g''}{-q''}-\lambda_g\right)\le0.
\]
Concavity, by \eqref{eq:curvature-shape} applied to \(-w_g\), gives
\[
w_g(x)\ge(1-x)w_g(0)+xw_g(1)=0
\qquad(0\le x\le1).
\]
Thus \(g\ge\lambda_gq\). Positivity and normalization of \(T\) now give
\[
T(g)\ge\lambda_gT(q)=\lambda_g>0.
\]
The derivatives of \(g\) and \(q\) are decreasing, so their maximal
slopes are \(g'(0)\) and \(q'(0)\), respectively.
For this fixed \(g\), choose one radius \(\rho\) satisfying
\begin{equation}
0<\rho<\frac1{\max\{g'(0),q'(0)T(g)\}}.
\label{eq:comparison-radius}
\end{equation}
Keep both \(g\) and \(\rho\) fixed in the coefficient comparisons
below. Every \(t\) with \(0<t\le\rho\) then satisfies
\[
\begin{aligned}
(E-tg)'&\ge1-\rho g'(0)>0,\\
(E-tg)''&=-tg''\ge t\gamma_g>0.
\end{aligned}
\]
The endpoint values are \((E-tg)(0)=0\) and \((E-tg)(1)=1\).
Consequently,
\[
E-tg\in\mathcal{L}^\uparrow\cap C^\infty([0,1]),\qquad
F_g(t,\cdot)\in Y\cap C^\infty([0,1]).
\]
Since \(E-tg\) is a smooth Lorenz curve, its normalized primal gap
can be evaluated by \eqref{eq:normalized-directional-gaps}:
\begin{equation}
\begin{aligned}
d_U(E-tg)&=T\bigl(E-(E-tg)\bigr)
&&\text{(\eqref{eq:normalized-directional-gaps}, \(L=E-tg\))}\\
&=T(tg)
&&\text{(since \(E-(E-tg)=tg\))}\\
&=tT(g)
&&\text{(linearity of \(T\)).}
\end{aligned}
\label{eq:comparison-primal-gap}
\end{equation}
Together with \(T(g)>0\) and the choice of \(\rho\), this gives
\(0<d_U(E-tg)=tT(g)\le\rho T(g)<s_q\).
The curves \(E-tg\) and \(R_{tT(g)}\) have the same primal gap,
so their dual gaps agree for every \(0<t\le\rho\).
Expanding these two values gives two expressions for the same scalar
quantity throughout this interval.  

For every \(0<t\le\rho\), recall that
\(F_g(t,\cdot)=(E-tg)^{-1}\) and
\(d_U(E-tg)=tT(g)\in(0,s_q)\).
Thus both the transfer identity and the reference expansion apply:
\begin{equation}
\begin{aligned}
&t m(g)+t^2m(gg')+r_g(t)\\
&\quad=m(F_g(t,\cdot)-E)
&&\text{(scalar expansion \eqref{eq:scalar-inverse-expansion})}\\
&\quad=d_W(E-tg)
&&\text{(dual gap \eqref{eq:normalized-directional-gaps})}\\
&\quad=\Phi(d_U(E-tg))
&&\text{(transfer identity \eqref{eq:all-curve-transfer})}\\
&\quad=\Phi(tT(g))
&&\text{(primal gap \eqref{eq:comparison-primal-gap})}\\
&\quad=tT(g)+\frac{c}{2}(tT(g))^2+r_q(tT(g))
&&\text{(\eqref{eq:transfer-second-order}, \(s=tT(g)\))}\\
&\quad=tT(g)+\frac{c}{2}t^2T(g)^2+r_q(tT(g))
&&\text{(expand the square).}
\end{aligned}
\label{eq:comparison-of-variations}
\end{equation}
We now compare the coefficients of \(t\) and \(t^2\) in
\eqref{eq:comparison-of-variations}. For the fixed direction
\(g\), write
\[
\alpha_g:=m(g)-T(g),
\qquad
\beta_g:=m(gg')-\frac{c}{2}T(g)^2.
\]
These are the differences between the linear and quadratic
coefficients, respectively. Both are fixed real numbers independent
of \(t\); our next task is to prove that both are zero.
Rearranging the exact identity gives
\[
t\alpha_g+t^2\beta_g=r_q(tT(g))-r_g(t)
\qquad(0<t\le\rho).
\]

To compare the coefficients, we need one remainder bound valid
for every \(0<t\le\rho\).
Recall that \(q\) was chosen with \(\|q'\|_\infty=1\), and that
\(T,m\) were normalized by \(T(q)=m(q)=1\).
These choices have already been made and are kept throughout the
proof. For the present comparison, \(g\) and \(\rho\) have also
been fixed. Thus \(C_q\), \(T(g)>0\), and \(c=2m(qq')\) are fixed
numbers; only \(t\) will vary.

We first check that the two whole parameter intervals admit
uniform remainder bounds. Since \(g'\) and \(q'\) are decreasing,
for every \(x\in[0,1]\) we have
\[
\begin{aligned}
1-u g'(x)
&\ge1-u g'(0)\ge1-\rho g'(0)>0
&&(0\le u\le\rho),\\
1-s q'(x)
&\ge1-s q'(0)\ge1-\rho T(g)q'(0)>0
&&(0\le s\le\rho T(g)).
\end{aligned}
\]
The final strict inequalities follow from
\eqref{eq:comparison-radius}. By the definitions of \(I_g,I_q\),
\[
[0,\rho]\subset I_g,\qquad [0,\rho T(g)]\subset I_q.
\]
Thus \eqref{eq:inverse-remainder-constant} gives finite constants
\(C(g,[0,\rho])\) and \(C(q,[0,\rho T(g)])\).
Each is defined by taking a maximum over its entire fixed
parameter interval and the spatial interval \([0,1]\).

Now let \(0<t\le\rho\). Since \(T(g)>0\), the reference parameter
satisfies \(0<tT(g)\le\rho T(g)\).
Applying the scalar remainder bound from part~(ii) first at
\(t\), then at \(s=tT(g)\), gives
\[
\begin{aligned}
|r_g(t)|
&\le C_q C(g,[0,\rho])t^3,\\
|r_q(tT(g))|
&\le C_q C(q,[0,\rho T(g)])\,(tT(g))^3\\
&=C_q T(g)^3C(q,[0,\rho T(g)])\,t^3.
\end{aligned}
\]
In particular, the factor \(T(g)^3\) comes from the change of
parameter \(s=tT(g)\). Define
\[
C_{\mathrm{comp}}(g,\rho)
:=C_q\bigl(C(g,[0,\rho])+T(g)^3C(q,[0,\rho T(g)])\bigr)<\infty.
\]
Both intervals in this definition remain fixed as \(t\) varies,
so \(C_{\mathrm{comp}}(g,\rho)\) is independent of \(t\).

Combining the exact coefficient identity with these bounds yields
\[
\begin{aligned}
|t\alpha_g+t^2\beta_g|
&=|r_q(tT(g))-r_g(t)|\\
&\le |r_q(tT(g))|+|r_g(t)|\\
&\le C_q C(q,[0,\rho T(g)])T(g)^3t^3
+C_q C(g,[0,\rho])t^3\\
&=t^3C_{\mathrm{comp}}(g,\rho).
\end{aligned}
\]
Since \(t>0\), dividing this inequality by \(t\) gives
\begin{equation}
|\alpha_g+t\beta_g|
=\frac{|t\alpha_g+t^2\beta_g|}{t}
\le t^2C_{\mathrm{comp}}(g,\rho)
\qquad(0<t\le\rho).
\label{eq:first-variation-comparison}
\end{equation}
Given the \(g,\rho\) are fixed, the same finite constant \(\alpha_g,\beta_g,C_{\mathrm{comp}}(g,\rho)\) therefore holds for every
admissible \(t\). 

\noindent\emph{(iv) Equality of the linear functionals.}
Suppose, for a contradiction, that
\(\delta:=|\alpha_g|=|m(g)-T(g)|>0\).
Because \(\beta_g\) and \(C_{\mathrm{comp}}(g,\rho)\) are fixed terms independent of \(t\),
we can choose a single \(t\in(0,\rho]\) small enough that
\[
t|\beta_g|+t^2C_{\mathrm{comp}}(g,\rho)<\delta.
\]
The comparison inequality holds at this chosen \(t\). Applying
it and the triangle inequality gives
\[
\begin{aligned}
\delta=|\alpha_g|
&\le|\alpha_g+t\beta_g|+t|\beta_g|\\
&\le t^2C_{\mathrm{comp}}(g,\rho)+t|\beta_g|\\
&<\delta,
\end{aligned}
\]
which contradicts \(\delta>0\).
Thus \(\alpha_g=0\), which by its definition means \(m(g)=T(g)\).
Since the initial choice of \(g\in V_{\mathrm{conc}}\) was arbitrary,
we have proved
\begin{equation}
m(g)=T(g)\qquad(g\in V_{\mathrm{conc}}).
\label{eq:first-coefficient-on-cone}
\end{equation}

We will now extend the linear-part equality relationship, currently restricted to the concave directions of \(g\), to all directions. For an arbitrary fixed
\(h\in V\), both \(q\) and \(q+sh\) retain strict concavity
when \(|s|\) is small. Subtracting their identities recovers the
linear equality in the direction \(h\).

For \(h\in V\), let
\[
\theta_h:=\frac{h''}{-q''},\qquad
M_h^\pm:=\max_{x\in[0,1]}\max\{\pm\theta_h(x),0\}.
\]
With the convention \(1/0:=+\infty\), define
\[
J_h:=\left(-\frac1{M_h^-},\frac1{M_h^+}\right)
=\{s\in\mathbb R:\min_{[0,1]}(1-s\theta_h)>0\}.
\]
For \(s\ge0\), strict positivity of \(1-s\theta_h\) everywhere
is equivalent to \(sM_h^+<1\); for \(s\le0\), it is equivalent to
\((-s)M_h^-<1\). These are precisely the right and left
bounds in \(J_h\); when one of \(M_h^\pm\) is zero, that
bound imposes no restriction.
For \(s\in J_h\),
\[
q+sh\in V,\qquad
-(q+sh)''=(-q'')(1-s\theta_h)
\ge a_q\min_{[0,1]}(1-s\theta_h)>0.
\]
The interval \(J_h\) contains nonzero values of \(s\),
because \(a_q>0\) and \(\|h''\|_\infty<\infty\).
Choose one such \(s\). Both \(q\) and \(q+sh\) belong to
\(V_{\mathrm{conc}}\), so \eqref{eq:first-coefficient-on-cone} gives
\[
m(q+sh)=T(q+sh),\qquad m(q)=T(q).
\]
Linearity of \(m\) and \(T\) rewrites the first equality as
\[
m(q)+s\,m(h)=T(q)+s\,T(h).
\]
Subtracting the equality at \(q\) yields
\(s(m(h)-T(h))=0\). Since the chosen \(s\) is nonzero, \(m(h)=T(h)\).
This proves \(m=T\) on all of \(V\), as \(h\) was arbitrary.
In particular, \(qq'\in V\), so the earlier definition of
\(c\) now also gives
\[
c=2m(qq')=2T(qq').
\]

\noindent\emph{(v) Equality of the quadratic coefficients.}
Return to a fixed \(g\in V_{\mathrm{conc}}\) and its chosen radius \(\rho\).
We have already proved \(\alpha_g=0\). Substituting this value into
\eqref{eq:first-variation-comparison}, and dividing by \(t>0\), gives
\[
|t\beta_g|\le t^2C_{\mathrm{comp}}(g,\rho),
\qquad
|\beta_g|\le tC_{\mathrm{comp}}(g,\rho)
\qquad(0<t\le\rho).
\]
Again this bound holds for every \(t\) in the interval, with the
same \(\beta_g\) and \(C_{\mathrm{comp}}(g,\rho)\).
If \(\eta:=|\beta_g|>0\), choose \(t\in(0,\rho]\) so small that
\(tC_{\mathrm{comp}}(g,\rho)<\eta\). The bound at that \(t\) would give
\[
\eta=|\beta_g|\le tC_{\mathrm{comp}}(g,\rho)<\eta,
\]
a contradiction. Hence \(\beta_g=0\), and its definition gives
\[
m(gg')=\frac{c}{2}T(g)^2.
\]
To express this identity entirely in terms of \(T\), observe that
\(gg'\in V\), since \(g(0)=g(1)=0\).
The equality \(m=T\) on \(V\), proved in part~(iv),
therefore gives
\begin{equation}
T(gg')=\frac{c}{2}T(g)^2
\qquad(g\in V_{\mathrm{conc}}).
\label{eq:quadratic-on-cone}
\end{equation}

\noindent\emph{(vi) Symmetric finite difference.}
It remains to extend the quadratic identity from
\(V_{\mathrm{conc}}\) to all of \(V\). For an arbitrary direction
\(h\), the two perturbations \(q+sh\) and \(q-sh\) remain
uniformly concave when \(s>0\) is small. Adding their identities
will cancel the terms involving both \(q\) and \(h\), leaving
the quadratic expression in \(h\) alone.

For \(v\in V\), define
\[
Z(v):=T(vv')-\frac{c}{2}T(v)^2.
\]
The product \(vv'\) belongs to \(V\), since \(v\) is smooth
and vanishes at both endpoints. Our goal is to prove \(Z(h)=0\).
Equation~\eqref{eq:quadratic-on-cone} already gives \(Z(v)=0\)
for every \(v\in V_{\mathrm{conc}}\), including \(v=q\).

Fix \(h\in V\), without assuming concavity, and choose
\(s>0\) with \(s\|\theta_h\|_\infty<1\), so that \(s,-s\in J_h\).
Both perturbations vanish at the endpoints, and
\[
-(q\pm sh)''=(-q'')(1\mp s\theta_h)
\ge a_q(1-s\|\theta_h\|_\infty)>0.
\]
Thus \(q+sh,q-sh\in V_{\mathrm{conc}}\), and
\(Z(q+sh)=Z(q-sh)=Z(q)=0\).

The reason for using both signs is visible in the expansions.
Since \(s\) is a scalar and \(T(q)=1\),
\[
\begin{aligned}
(q\pm sh)(q\pm sh)'
&=qq'\pm s(qh'+hq')+s^2hh',\\
T(q\pm sh)^2&=1\pm2sT(h)+s^2T(h)^2.
\end{aligned}
\]
Although \(T\) is linear, \(Z\) contains the products \(vv'\)
and \(T(v)^2\). We therefore expand these products before
collecting terms by their powers of \(s\):
\[
\begin{aligned}
Z(q+sh)
&=T(qq')+sT(qh'+hq')+s^2T(hh') -\frac c2-csT(h)-\frac c2s^2T(h)^2\\
&=\underbrace{\left(T(qq')-\frac c2(T(q)^2\right)}_{Z(q)}+s\bigl(T(qh'+hq')-cT(h)\bigr)+s^2\underbrace{\left(T(hh')-\frac c2T(h)^2\right)}_{Z(h)}\\
&= Z(q)+ s\bigl(T(qh'+hq')-cT(h)\bigr)+s^2Z(h)
\end{aligned}
\]
Note the first bracket is \(Z(q)\) because \(T(q)=1\) so $\frac c2=\frac c2(T(q)^2$, and
the coefficient of \(s^2\) is exactly the defining expression
for \(Z(h)\). Replacing \(s\) by \(-s\) gives
\[
Z(q-sh)
=Z(q)-s\bigl(T(qh'+hq')-cT(h)\bigr)+s^2Z(h).
\]
The terms linear in \(s\) have opposite signs, while the
quadratic terms agree. Adding the two identities and subtracting
\(2Z(q)\) therefore yields
\[
0=Z(q+sh)+Z(q-sh)-2Z(q)=2s^2Z(h).
\]
This is the symmetric second finite difference: it isolates
the quadratic term by canceling the constant and linear terms.
The calculation is exact at the chosen nonzero \(s\), so
division by \(2s^2>0\) gives \(Z(h)=0\).
Since \(h\in V\) was arbitrary, we have proved
\(T(hh')=(c/2)T(h)^2\) for every \(h\in V\).

\noindent\emph{(vii) Polarization.}
We can now apply the quadratic identity to \(h+k\) and subtract
its values at \(h\) and \(k\). The mixed terms that remain are
\(hk'+kh'=(hk)'\), producing the stated bilinear identity.

For \(h,k\in V\) and \(p\in\{0,1\}\),
\[
(hk)'=h'k+hk',\qquad
(hk)'(p)=h'(p)k(p)+h(p)k'(p)=0.
\]
Thus \((hk)'\in V\). Since \(h+k\in V\) as well, the quadratic
identity from part~(vi) applies to each of \(h+k,h,k\).
Using these three identities and the linearity of \(T\), we obtain
\[
\begin{aligned}
T((hk)')
&=T(hk'+kh')\\
&=T((h+k)(h'+k'))-T(hh')-T(kk')\\
&=\frac{c}{2}\bigl(T(h+k)^2-T(h)^2-T(k)^2\bigr)\\
&=\frac{c}{2}\bigl((T(h)+T(k))^2
-T(h)^2-T(k)^2\bigr)\\
&=c T(h)T(k).
\end{aligned}
\]
All representative evaluations in this proof occur on
\(R_s\), \(0\le s<s_q\), or \(E-tg\),
\(g\in V_{\mathrm{conc}}\), \(0\le t\le\rho\) with
\eqref{eq:comparison-radius}.
\end{proof}

\section{Classification and Lorenz area representation}
\label{sec:proof-classification}

\subsection{Classification of the positive linear functional}
\label{subsec:classification}

The bilinear identity reduces the problem to a classification
of one positive linear functional. We will obtain three possibilities:
normalized area and two endpoint-slope functionals. This step uses
the identity algebraically; continuity of the original representative
will distinguish the alternatives in the next subsection.

The relevant features of \(q\) that will be used are its positivity
in the interior and its simple zeros at the endpoints. The curvature
bounds were needed to obtain admissible comparisons.

\begin{lemma}[Area and the two endpoint alternatives]
\label{lem:functional-classification}
Let \(T:V\to\mathbb R\) be positive and linear,
\(T(q)=1\), and let \(c\in\mathbb R\) satisfy
\eqref{eq:bilinear-compatibility}. Exactly one of the following holds:
\[
\begin{array}{c|c}
c&T(h)\quad(h\in V)\\ \hline
0&A(h)/A(q)\\[3pt]
2q'(0)&h'(0)/q'(0)\\[3pt]
2q'(1)&h'(1)/q'(1).
\end{array}
\]
Each displayed pair satisfies all the hypotheses.
\end{lemma}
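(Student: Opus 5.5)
The plan is to turn \(T\) into a probability measure and then read \eqref{eq:bilinear-compatibility} as a first-order differential equation for that measure. By Lemma~\ref{lem:quadratic-division}, the map \(h\mapsto\varphi:=h/q\) is a linear bijection from \(V\) onto \(C^\infty([0,1])\); its inverse is \(\varphi\mapsto q\varphi\). Put \(S(\varphi):=T(q\varphi)\). Then \(S\) is positive, \(S(1)=1\), and \(|S(\varphi)|\le\|\varphi\|_\infty\) by \eqref{eq:positive-weighted-bound}. Since \(C^\infty([0,1])\) is dense in \(C([0,1])\), \(S\) extends to a positive functional of norm one. The Riesz representation theorem then gives a Borel probability measure \(\mu\) on \([0,1]\) with
\[
T(h)=\int_{[0,1]}\frac{h}{q}\,d\mu\qquad(h\in V).
\]
Write \(\mu=a\delta_0+b\delta_1+\mu_{\rm int}\), with \(\mu_{\rm int}\) carried by \((0,1)\). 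The whole classification then reduces to identifying \(a\), \(b\), \(\mu_{\rm int}\) and \(c\).

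\emph{Interior analysis.} First take \(h,k\in V\) with disjoint compact supports in \((0,1)\). Then \(hk\equiv0\), so \eqref{eq:bilinear-compatibility} gives \(cT(h)T(k)=0\). Next take \(k=q\) and \(h=q\varphi\) with \(\varphi\in C_c^\infty((0,1))\). Then \((hk)'=(q^2\varphi)'=q(2q'\varphi+q\varphi')\), and the identity reads
\[
\int(2q'\varphi+q\varphi')\,d\mu_{\rm int}=c\int\varphi\,d\mu_{\rm int}.
\]
In the sense of distributions on \((0,1)\), this says \((q\mu_{\rm int})'=(2q'-c)\mu_{\rm int}\). Dividing by \(q>0\), the measure \(\nu:=q\mu_{\rm int}\) satisfies the linear equation \(\nu'=\tfrac{2q'-c}{q}\nu\) with a smooth coefficient. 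Hence \(\nu\) has a smooth density, equal to \(\kappa\exp\int\frac{2q'-c}{q}\) for one constant \(\kappa\ge0\). In particular the density is either identically zero or strictly positive on all of \((0,1)\).

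\emph{Case split.} If \(c\ne0\) and \(\kappa>0\), choose two nonnegative bumps \(\varphi_1,\varphi_2\) with disjoint supports in \((0,1)\). Then \(T(q\varphi_i)>0\) for both, contradicting \(cT(h)T(k)=0\); so \(\mu=a\delta_0+b\delta_1\). At the endpoints, \(T(h)=a\,h'(0)/q'(0)+b\,h'(1)/q'(1)\), using the endpoint values of \(h/q\) from \eqref{eq:quadratic-division-bound}. Since \(h(0)=k(0)=0\), one has \((hk)''(0)=2h'(0)k'(0)\). Testing \eqref{eq:bilinear-compatibility} with \(h\) having \(h'(0)\ne0=h'(1)\), and with \(k\) of the opposite type, should force \(ab=0\), together with \(c=2q'(0)\) when \(a=1\) and \(c=2q'(1)\) when \(b=1\).

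If \(c=0\), the equation becomes \(\nu'=(2q'/q)\nu\). This fixes the interior density up to \(\kappa\), and I expect it to reduce to \(d\mu=\kappa\,q\,dx\), i.e.\ \(T\propto A\); I would verify this form directly against \eqref{eq:bilinear-compatibility} with \(c=0\). The endpoint masses \(a,b\) must then vanish. To see this, apply the \(c=0\) identity \(T((hk)')=0\) to \(h,k\) with \(h'(0)k'(0)\ne0\) and supports near \(0\); this isolates the contribution \(2a\,h'(0)k'(0)/q'(0)\), and similarly at \(1\). Normalizing by \(T(q)=1\) gives \(A(h)/A(q)\).

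\emph{Remaining checks and main obstacle.} Exclusivity holds because the three values of \(c\) are distinct: \(0\), \(2q'(0)>0\) and \(2q'(1)<0\). That each displayed pair satisfies the hypotheses is a direct computation; for the area case it is \(A((hk)')=(hk)(1)-(hk)(0)=0\). I expect the main obstacle to be the \(c=0\) step. There one must pin the interior density exactly, since my distributional bookkeeping may have a sign or power-of-\(q\) error to fix, and one must show the endpoint atoms vanish. That second point needs test functions concentrating at an endpoint, for which the integrand \((hk)'/q\) is only bounded because \((hk)'\) vanishes there.
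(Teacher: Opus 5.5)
Your argument is correct, but it follows a different route from the paper's. The paper never passes to measures.
\begin{itemize}
\item For \(c=0\), it writes every zero-area \(v\in V\) as \((qk)'\) with \(k=\bigl(\int_0^x v\bigr)/q\in V\). So \(T\) kills zero-area directions and must equal \(A/A(q)\).
\item For \(c\ne0\), it applies \eqref{eq:bilinear-compatibility} to \(q^2fg\), read once as \((qf)(qg)\) and once as \(q(qfg)\). This makes \(S(f)=T(qf)\) multiplicative, hence evaluation at \(p=S(E)\) via the factorization \(f-f(p)=(x-p)r\) with \(r\) smooth. The pair \(q,qf\) then forces \(q(p)=0\) and \(c=2q'(p)\).
\end{itemize}
Your disjoint-support step is a weak shadow of that multiplicativity, since \(S(f)S(g)=S(fg)=0\) whenever \(fg=0\).

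What your route buys is an explicit form of every positive normalized \(T\): \(T(h)=a\,h'(0)/q'(0)+b\,h'(1)/q'(1)+\int_{(0,1)}(h/q)\,d\mu_{\rm int}\). This shows plainly that the alternatives are interior mass (area) versus endpoint atoms (slopes), and it treats both values of \(c\) in one framework. The price is the Riesz representation, a distributional ODE, and separate bookkeeping at the atoms, all of which the paper's algebraic argument avoids.

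The steps you hedged do close.
\begin{itemize}
\item Case \(c=0\). Since \(\exp\int 2q'/q=q^2\), you get \(\nu=\kappa q^2\), i.e.\ \(d\mu_{\rm int}=\kappa q\,dx\); there is no sign or power error. The atoms then vanish without any concentrating test functions. Indeed \(T((hk)')=\kappa\int_0^1(hk)'\,dx+2a\,h'(0)k'(0)/q'(0)+2b\,h'(1)k'(1)/q'(1)\), and the first term is exactly zero. Taking \(h=k\) with \(h'(0)=1\), \(h'(1)=0\) (or the reverse) gives \(a=0\) (or \(b=0\)). Then \(T(q)=1\) gives \(\kappa=1/A(q)\).
\item Case \(c\ne0\). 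Your mixed test gives \(c\,ab=0\). Since \(a,b\ge0\) and \(a+b=S(1)=1\), this forces \((a,b)\in\{(1,0),(0,1)\}\). Then \(h=k=q\) gives \(c=2aq'(0)+2bq'(1)\).
\end{itemize}

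One citation point: \eqref{eq:positive-weighted-bound} was proved for the particular \(T\) of Proposition~\ref{prop:positive-directions}, not for the arbitrary positive normalized \(T\) of the lemma. The bound \(|S(\varphi)|\le\|\varphi\|_\infty\) you need still follows from positivity alone, because \(-\|\varphi\|_\infty\le\varphi\le\|\varphi\|_\infty\) and \(S(1)=1\).
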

\begin{proof}
Parts~(i)--(iv) determine the possible forms of a solution.
Part~(v) then checks that all three candidates satisfy the
hypotheses and that they give distinct functionals.

\noindent\emph{(i) The zero-coefficient case.}
When \(c=0\), the bilinear identity makes \(T\) vanish on every
derivative \((qk)'\) with \(k\in V\). Every zero-area
direction can be written in this form: its primitive vanishes
together with its first derivative at both endpoints. It follows
that area is the only information that \(T\) can retain.

Suppose \(c=0\). For \(v\in V\) with \(A(v)=0\), set
\[
H(x):=\int_0^xv(s)\,ds.
\]
Since \(v\) is smooth, \(H\in C^\infty([0,1])\) and \(H'=v\).
Its endpoint values and derivatives satisfy
\[
\begin{aligned}
H(0)&=0,& H(1)&=A(v)=0,\\
H'(0)&=v(0)=0,& H'(1)&=v(1)=0.
\end{aligned}
\]
In particular, \(H\in V\), so division by \(q\) is covered by
the earlier smooth-division result.
Lemma~\ref{lem:quadratic-division} gives
\[
k:=H/q\in C^\infty([0,1]),\qquad
k(0)=\frac{H'(0)}{q'(0)}=0,\quad
k(1)=\frac{H'(1)}{q'(1)}=0.
\]
Hence
\[
k\in V,\qquad qk=H,\qquad(qk)'=v,\qquad
T(v)=T((qk)')=c T(q)T(k)=0.
\]
For a general direction, subtracting a multiple of \(q\) with
the same area leaves a zero-area direction. The preceding
calculation then determines its value under \(T\).
For arbitrary \(h\in V\),
\[
\begin{aligned}
v_h&:=h-\frac{A(h)}{A(q)}q\in V,\\
A(v_h)&=A(h)-\frac{A(h)}{A(q)}A(q)=0,\\
0=T(v_h)&=T(h)-\frac{A(h)}{A(q)}T(q),\\
T(h)&=\frac{A(h)}{A(q)}.
\end{aligned}
\]

\noindent\emph{(ii) A multiplicative functional when \(c\ne0\).}
Multiplication by \(q\) turns any smooth function into an
endpoint-vanishing direction. This lets us work on the full algebra
\(C^\infty([0,1])\), which contains the constant function \(1\).
The key calculation applies the bilinear identity to the same product
\(q^2fg\) in two ways: as \((qf)(qg)\) and as \(q(qfg)\).

Suppose \(c\ne0\), and define
\[
S:C^\infty([0,1])\to\mathbb R,\qquad S(f):=T(qf).
\]
For \(a,b\in\mathbb R\), \(f,g\in C^\infty([0,1])\),
the products \(qf,qg,qfg\) belong to \(V\).
Linearity and normalization of \(T\) give
\[
S(af+bg)=aS(f)+bS(g),\qquad S(1)=T(q)=1.
\]
If \(f\ge0\), then \(qf\ge0\), so positivity of \(T\) gives
\(S(f)=T(qf)\ge0\).
The products \((qf)(qg)\) and \(q(qfg)\) are the same function.
Applying the bilinear identity to these two pairs gives two expressions
for the same scalar:
\[
\begin{aligned}
T((q^2fg)')
&=cT(qf)T(qg)=cS(f)S(g),\\
T((q^2fg)')
&=cT(q)T(qfg)=cS(fg).
\end{aligned}
\]
Division by \(c\ne0\) proves
\begin{equation}
S(fg)=S(f)S(g)\qquad(f,g\in C^\infty([0,1])).
\label{eq:multiplicative-functional}
\end{equation}

\noindent\emph{(iii) Evaluation at one point.}
A positive normalized multiplicative functional on this algebra must evaluate
functions at a single point.
Its value on the coordinate function
\(E(x)=x\) identifies that point. The factorization of \(f-f(p)\)
below gives a direct proof for every smooth \(f\).

Let \(p:=S(E)\), where \(E(x)=x\). Positivity gives
\[
p=S(E)\ge0,\qquad 1-p=S(1-E)\ge0,
\qquad p\in[0,1].
\]
For \(f\in C^\infty([0,1])\), define
\[
r_{f,p}(x):=\int_0^1f'(p+s(x-p))\,ds.
\]
For \(x\in[0,1]\), \(s\in[0,1]\),
\[
p+s(x-p)=(1-s)p+sx\in[0,1].
\]
The fundamental theorem of calculus and
Lemma~\ref{lem:parameter-integration} give
\[
\begin{aligned}
f(x)-f(p)&=(x-p)r_{f,p}(x),\\
r_{f,p}^{(j)}(x)&=\int_0^1s^jf^{(j+1)}(p+s(x-p))\,ds
\qquad(j\ge0),\\
r_{f,p}&\in C^\infty([0,1]).
\end{aligned}
\]
The identity includes \(x=p\) and \(p\in\{0,1\}\).
By the choice of \(p\), \(S(E-p\,1)=0\); multiplicativity
therefore makes the entire remainder \((E-p\,1)r_{f,p}\)
vanish under \(S\).
Thus
\[
\begin{aligned}
S(f)&=S(f(p)\,1+(E-p\,1)r_{f,p})\\
&=f(p)S(1)+S(E-p\,1)S(r_{f,p})\\
&=f(p)+(S(E)-pS(1))S(r_{f,p})\\
&=f(p).
\end{aligned}
\]
Therefore
\begin{equation}
S(f)=f(p)\qquad(f\in C^\infty([0,1])).
\label{eq:point-evaluation}
\end{equation}

\noindent\emph{(iv) The point must be an endpoint.}
Multiplicativity locates a point, but the derivative in the original
bilinear identity further restricts its position. Choosing
a function whose value at \(p\) is zero and whose derivative there
is one will isolate \(q(p)\), forcing \(p\) to an endpoint.

Apply \eqref{eq:bilinear-compatibility} to \(q,qf\):
\[
\begin{aligned}
(q^2f)'&=2qq'f+q^2f'=q(2q'f+qf'),\\
S(2q'f+qf')&=T((q^2f)')
=c T(q)T(qf)=c S(f).
\end{aligned}
\]
Together with \eqref{eq:point-evaluation}, this yields
\begin{equation}
2q'(p)f(p)+q(p)f'(p)=c f(p)
\qquad(f\in C^\infty([0,1])).
\label{eq:evaluation-derivative-constraint}
\end{equation}
Taking \(f=E-p\,1\) gives \(f(p)=0\) and \(f'(p)=1\), so the
preceding identity implies \(q(p)=0\).
Since \(q>0\) on \((0,1)\) and \(p\in[0,1]\), this forces
\(p\in\{0,1\}\).
Taking \(f=1\) in the same identity then gives
\[
c=2q'(p)=
\begin{cases}
2q'(0)>0,&p=0,\\
2q'(1)<0,&p=1.
\end{cases}
\]
For the final recovery of \(T\), the quotient \(h/q\) is
understood as its smooth extension to both endpoints.
Since \(q\) has simple endpoint zeros, the extended quotient
has value \(h'(p)/q'(p)\) at an endpoint \(p\).
Finally, for \(h\in V\),
\[
T(h)=T(q(h/q))=S(h/q)=(h/q)(p)
=\begin{cases}
h'(0)/q'(0),&p=0,\\
h'(1)/q'(1),&p=1,
\end{cases}
\]
by Lemma~\ref{lem:quadratic-division}.

\noindent\emph{(v) Verification and exclusivity.}
So far, every solution has been shown to have one of the three
forms in the table. We now check the converse: each candidate
must be positive, linear, normalized at \(q\), and satisfy
the bilinear identity with its listed coefficient \(c\).
Define
\[
T_A(h):=\frac{A(h)}{A(q)},\qquad
T_{\rm left}(h):=\frac{h'(0)}{q'(0)},\qquad
T_{\rm right}(h):=\frac{h'(1)}{q'(1)}.
\]
\smallskip
\noindent\emph{Positivity and normalization.}
All three are linear. For \(a,b\in\mathbb R\) and \(h,k\in V\),
\(ah+bk\in V\), and
\[
\begin{aligned}
T_A(ah+bk)
&=\frac{aA(h)+bA(k)}{A(q)}
=aT_A(h)+bT_A(k),\\
T_{\rm left}(ah+bk)
&=\frac{ah'(0)+bk'(0)}{q'(0)}
=aT_{\rm left}(h)+bT_{\rm left}(k),\\
T_{\rm right}(ah+bk)
&=\frac{ah'(1)+bk'(1)}{q'(1)}
=aT_{\rm right}(h)+bT_{\rm right}(k).
\end{aligned}
\]
If \(h\in V\), \(h\ge0\), then
\[
\begin{aligned}
A(h)&=\int_0^1h(x)\,dx\ge0,\\
h'(0)&=\lim_{x\downarrow0}\frac{h(x)-h(0)}x
=\lim_{x\downarrow0}\frac{h(x)}x\ge0,\\
h'(1)&=\lim_{x\uparrow1}\frac{h(x)-h(1)}{x-1}
=-\lim_{x\uparrow1}\frac{h(x)}{1-x}\le0.
\end{aligned}
\]
At the right endpoint, \(h'(1)\le0\) is divided by
\(q'(1)<0\), so this candidate also takes nonnegative values
on nonnegative directions.
Together with \(A(q)>0\) and \(q'(0)>0>q'(1)\), these inequalities
prove positivity of all three functionals:
\[
T_A(h)=\frac{A(h)}{A(q)}\ge0,\qquad
T_{\rm left}(h)=\frac{h'(0)}{q'(0)}\ge0,\qquad
T_{\rm right}(h)=\frac{-h'(1)}{-q'(1)}\ge0.
\]
Their normalizations are
\[
T_A(q)=\frac{A(q)}{A(q)}=1,\qquad
T_{\rm left}(q)=\frac{q'(0)}{q'(0)}=1,\qquad
T_{\rm right}(q)=\frac{q'(1)}{q'(1)}=1.
\]
\smallskip
\noindent\emph{The bilinear identity for the area and endpoint candidates.}
For \(h,k\in V\), the input \((hk)'\) belongs to \(V\):
at either endpoint \(p\), both \(h(p)\) and \(k(p)\) vanish,
so
\[
(hk)'(p)=h'(p)k(p)+h(p)k'(p)=0
\qquad(p\in\{0,1\}).
\]
Since \(h,k\) are smooth, \((hk)'\in C^\infty([0,1])\).
Together with \((hk)'(0)=(hk)'(1)=0\), this proves \((hk)'\in V\).
For the area candidate,
\[
\begin{aligned}
T_A((hk)')
&=\frac{\int_0^1(hk)'(x)\,dx}{A(q)}\\
&=\frac{h(1)k(1)-h(0)k(0)}{A(q)}
=0=0\cdot T_A(h)T_A(k).
\end{aligned}
\]
This verifies the bilinear identity with \(c=0\).

An endpoint candidate takes the derivative of its input.
Here that input is already \((hk)'\), so the calculation
requires the second derivative of the product:
\[
\begin{aligned}
(hk)''&=h''k+2h'k'+hk'',\\
(hk)''(0)&=2h'(0)k'(0),\qquad
(hk)''(1)=2h'(1)k'(1).
\end{aligned}
\]
Given that
\(h(0)=h(1)=k(0)=k(1)=0\) by definition,
the endpoint conditions \(h(p)=k(p)=0\) remove the two product
terms, whereas \(T(q)=1\) fixes the scale of the functional.
Indeed, for an endpoint candidate \(T(h)=a h'(p)\), where
\(p\in\{0,1\}\) and \(a\in\mathbb R\),
\[
q'(p)\ne0,\qquad
1=T(q)=a q'(p)
\quad\Longrightarrow\quad
a=\frac1{q'(p)}.
\]
Hence \(T(h)=h'(p)/q'(p)\). This is the existing normalization
at \(q\); no additional endpoint normalization is imposed.
Combining the endpoint conditions with this normalized scale gives
\[
\begin{aligned}
T_{\rm left}((hk)')
&=\frac{(hk)''(0)}{q'(0)}
=\frac{(h''k)(0)+2h'(0)k'(0)+(hk'')(0)}{q'(0)}\\
&=2q'(0)\frac{h'(0)}{q'(0)}\frac{k'(0)}{q'(0)}
=2q'(0)T_{\rm left}(h)T_{\rm left}(k),\\[3pt]
T_{\rm right}((hk)')
&=\frac{(hk)''(1)}{q'(1)}
=\frac{(h''k)(1)+2h'(1)k'(1)+(hk'')(1)}{q'(1)}\\
&=2q'(1)\frac{h'(1)}{q'(1)}\frac{k'(1)}{q'(1)}
=2q'(1)T_{\rm right}(h)T_{\rm right}(k).
\end{aligned}
\]
\smallskip
\noindent\emph{Why the three candidates are distinct.}
For any normalized solution, the pair \(h=k=q\) determines
\(c=T((q^2)')=2T(qq')\).
Indeed, normalization makes the right-hand side of the bilinear identity
equal to \(cT(q)^2=c\). Explicitly,
\[
c=cT(q)^2=T((q^2)')=T(2qq')=2T(qq').
\]
Thus \(c\) is determined by \(T\) itself:
if two candidates were the same functional, they would necessarily
have the same coefficient.
For the same input \((q^2)'\in V\), the candidate values are
\[
\begin{aligned}
T_A((q^2)')
&=\frac{q(1)^2-q(0)^2}{A(q)}=0,\\
T_{\rm left}((q^2)')
&=\frac{(q^2)''(0)}{q'(0)}
=\frac{2(q'(0))^2}{q'(0)}=2q'(0)>0,\\
T_{\rm right}((q^2)')
&=\frac{(q^2)''(1)}{q'(1)}
=\frac{2(q'(1))^2}{q'(1)}=2q'(1)<0.
\end{aligned}
\]
Hence
\[
T_{\rm right}((q^2)')<T_A((q^2)')<T_{\rm left}((q^2)').
\]
In particular,
\[
T_A\ne T_{\rm left},\qquad T_A\ne T_{\rm right},\qquad
T_{\rm left}\ne T_{\rm right}.
\]
Since \(2q'(0)>0>2q'(1)\), the three values
\(0,2q'(0),2q'(1)\) are distinct, proving exclusivity.
The two endpoint candidates are valid solutions of this identity;
the next subsection excludes them using continuity of the original
representative.
\end{proof}

\subsection{Boundary layers, endpoint exclusion, and the area formula}
\label{subsec:area-identification}

Endpoint derivatives are sensitive to changes that can be very
small in the uniform norm. To exclude them within the actual
Lorenz domain, we construct smooth curves whose distance from
\(E\) tends to zero while a fixed endpoint-slope deviation persists.
Either endpoint functional would assign these curves a fixed
positive gap, contradicting continuity at \(E\).

\begin{proposition}[Continuity excludes both endpoint alternatives]
\label{prop:endpoint-exclusion}
Neither \(T(h)=h'(0)/q'(0)\) nor \(T(h)=h'(1)/q'(1)\) is compatible
with \eqref{eq:representative-continuity} at \(E\).
\end{proposition}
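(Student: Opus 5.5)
We argue by contradiction, treating the left endpoint; the right endpoint is symmetric under \(x\mapsto 1-x\) with the signs of \(q'(1)<0\) and \(h'(1)\le 0\) tracked. Suppose \(T(h)=h'(0)/q'(0)\) on \(V\). By \eqref{eq:normalized-directional-gaps}, every smooth Lorenz curve \(L\) then satisfies
\[
d_U(L)=T(E-L)=\frac{1-L'(0)}{q'(0)} .
\]
The plan is to build smooth curves \(L_\varepsilon\in\mathcal{L}\cap C^\infty([0,1])\) with \(\|L_\varepsilon-E\|_\infty\to0\) and \(L_\varepsilon'(0)\le 1/2\). For these, \(U(E)-U(L_\varepsilon)=\sigma_U d_U(L_\varepsilon)\ge \sigma_U/(2q'(0))>0\), while continuity \eqref{eq:representative-continuity} of \(U\) at \(E\) forces \(U(L_\varepsilon)\to U(E)\). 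This contradicts \(\sigma_U>0\) from Proposition~\ref{prop:positive-directions}.

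The construction is a boundary layer at \(0\). Fix a smooth nondecreasing \(\psi:\mathbb R\to[0,1]\) with \(\psi=0\) on \((-\infty,1]\) and \(\psi=1\) on \([2,\infty)\). For instance, \(\psi\) can be built from the function \(f(x)=e^{-1/x}\) of Section~\ref{sec:weighted-obstruction} in the usual way, \(\psi(x)=f(x-1)/(f(x-1)+f(2-x))\). Define a slope profile
\[
\sigma_\varepsilon(x):=\tfrac12+\tfrac12\psi(x/\varepsilon)\qquad(0\le x\le1),
\]
which is smooth, nondecreasing, equal to \(1/2\) near \(0\) and equal to \(1\) on \([2\varepsilon,1]\). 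Set
\[
L_\varepsilon(x):=\frac{\int_0^x\sigma_\varepsilon}{\int_0^1\sigma_\varepsilon} .
\]
This curve is smooth, satisfies \(L_\varepsilon(0)=0\) and \(L_\varepsilon(1)=1\), is increasing, and is convex because its derivative is nondecreasing; hence \(L_\varepsilon\in\mathcal{L}\cap C^\infty([0,1])\), in fact in \(\mathcal{L}^\uparrow\). Writing \(Z_\varepsilon:=\int_0^1\sigma_\varepsilon\in[1-\varepsilon,1]\), we get \(L_\varepsilon'(0)=1/(2Z_\varepsilon)\le 1/(2(1-\varepsilon))\). This stays bounded below \(1\), so \(1-L_\varepsilon'(0)\) has a positive lower bound for, say, \(\varepsilon\le1/4\). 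The uniform distance follows from \(|\int_0^x\sigma_\varepsilon-x|\le\varepsilon\) and \(|Z_\varepsilon-1|\le\varepsilon\), which give \(\|L_\varepsilon-E\|_\infty\le 2\varepsilon/(1-\varepsilon)\to0\).

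For the right endpoint we reflect, taking \(\tilde L_\varepsilon(x):=1-L^\ast_\varepsilon(1-x)\) with \(L^\ast_\varepsilon\) concave. Equivalently, we use the slope profile \(\tilde\sigma_\varepsilon(x)=1+\psi(x_\varepsilon)\) that rises to \(2\) near \(x=1\), suitably normalized, so that \(\tilde L_\varepsilon'(1)\ge 3/2\cdot(\text{normalizer})\) stays bounded above \(1\). Then
\[
T(E-\tilde L_\varepsilon)=\frac{1-\tilde L_\varepsilon'(1)}{q'(1)}
\]
has a negative numerator and a negative denominator, so it is bounded below by a positive constant. The same continuity contradiction at \(E\) follows.

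The main obstacle is bookkeeping rather than depth. We must keep the curves convex and smooth up to the endpoints so that the identity \eqref{eq:original-primal-direction} applies. We also need the endpoint-slope deviation to survive the normalization by \(Z_\varepsilon\), which is why the layer width \(\varepsilon\) and the fixed slope gap are controlled separately. No inversion or dual-side argument is needed, since the contradiction lives entirely on the primal representative \(U\).
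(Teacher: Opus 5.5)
Your argument is correct and is essentially the paper's. A boundary layer of width \(\varepsilon\) keeps the endpoint-slope deviation bounded away from zero while \(\|L_\varepsilon-E\|_\infty\to0\), so \(D_U(L_\varepsilon)=\sigma_U T(E-L_\varepsilon)\) stays bounded below and contradicts continuity of \(U\) at \(E\). The paper prescribes the curvature of the gap \(b_r\) so the deviation is exactly \(\eta\) and reflects that gap for the right endpoint, whereas you prescribe the slope of \(L\) and normalize by \(Z_\varepsilon\). One slip needs fixing: \(\sigma_\varepsilon\le1\) gives \(Z_\varepsilon\le1\), hence \(L_\varepsilon'(0)=1/(2Z_\varepsilon)\ge1/2\), not \(\le1/2\). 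So the bound \(\sigma_U/(2q'(0))\) in your plan is false and should be replaced by the one your construction actually yields, e.g.\ \(1-L_\varepsilon'(0)\ge1/3\) and \(D_U(L_\varepsilon)\ge\sigma_U/(3q'(0))\) for \(\varepsilon\le1/4\); on the right, state it plainly as \(\tilde L_\varepsilon'(1)=2/\tilde Z_\varepsilon\) with \(\tilde Z_\varepsilon=1+O(\varepsilon)\) rather than the garbled ``\(3/2\cdot(\text{normalizer})\)''.
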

\begin{proof}
Fix an endpoint-slope deviation \(\eta\in(0,1)\).
The construction will concentrate the curvature of a gap near
one endpoint. This allows the gap to have arbitrarily small
height while its slope at that endpoint stays fixed.

\smallskip
\noindent\emph{(i) Constructing a gap from its curvature.}
Choose any nonnegative \(\omega\in C_c^\infty((0,1))\) with
\(\int_0^1\omega(u)\,du=1\), and extend it by zero outside
\((0,1)\). Such a profile is obtained by normalizing a nonzero
nonnegative smooth bump; its particular shape will not matter.
Compact support in \((0,1)\) means that \(\omega\) vanishes
near both endpoints, so the extension by zero is smooth.
Put \(\mu:=\int_0^1u\omega(u)\,du\in(0,1)\).
For \(0<r<1\), put
\[
\omega_r(x):=\frac1r\omega(x/r),\qquad
\mu_r:=\int_0^1x\omega_r(x)\,dx.
\]
The parameter \(r\) is the width of the region carrying curvature.
The number \(\mu_r\) is the first moment of \(\omega_r\); it determines
the normalization needed for the endpoint condition at \(1\).
A change of variables gives
\[
\supp\omega_r\subset(0,r),\qquad
\int_0^1\omega_r(x)\,dx=1,\qquad
\mu_r=r\mu\in(0,r).
\]

We seek a concave gap \(b_r\) with initial slope \(1\) and
zero values at both endpoints. The positive scale \(a_r\)
will be chosen to meet the condition at \(1\).
Prescribing its curvature by
\[
b_r''=-a_r\omega_r,\qquad b_r(0)=0,\qquad b_r'(0)=1
\]
gives, by two integrations,
\[
\begin{aligned}
b_r'(x)&=1-a_r\int_0^x\omega_r(u)\,du,\\
b_r(x)&=\int_0^x b_r'(v)\,dv
=x-a_r\int_0^x\int_0^v\omega_r(u)\,du\,dv\\
&=x-a_r\int_0^x(x-u)\omega_r(u)\,du
\qquad\text{(reverse the integration order)},\\
b_r(1)&=1-a_r\left(
\int_0^1\omega_r(u)\,du-\int_0^1u\omega_r(u)\,du
\right)\\
&=1-a_r(1-\mu_r).
\end{aligned}
\]
Thus the remaining endpoint condition \(b_r(1)=0\) determines
the normalization \(a_r=1/(1-\mu_r)>0\).
In particular, \(b_r\) is smooth, vanishes at both endpoints,
and
\[
b_r'(x)=1-\frac{\int_0^x\omega_r(u)\,du}{1-\mu_r},
\qquad
b_r''(x)=-\frac{\omega_r(x)}{1-\mu_r}\le0.
\]
Recall that \(V=\{h\in C^\infty([0,1]):h(0)=h(1)=0\}\).
The smoothness and endpoint conditions just proved give \(b_r\in V\).
Concavity and the endpoint values give \(b_r\ge0\).
Moreover, \(0\le\int_0^x\omega_r\le1\) and
\(\mu_r<r<1\) imply
\[
-\frac{\mu_r}{1-\mu_r}
\le b_r'(x)\le1,\qquad
b_r'(0)=1,\quad b_r'(1)=-\frac{\mu_r}{1-\mu_r}.
\]

The two endpoint slopes control the height through the first moment:
\[
\begin{aligned}
b_r(x)&=\int_0^x b_r'(u)\,du\le x,\\
b_r(x)&=-\int_x^1 b_r'(u)\,du
\le\frac{\mu_r}{1-\mu_r}(1-x).
\end{aligned}
\]
These affine upper bounds meet at \(x=\mu_r\), and hence
\[
0\le b_r(x)\le
\min\left\{x,\frac{\mu_r}{1-\mu_r}(1-x)\right\}
\le\mu_r\qquad(0\le x\le1).
\]
We have therefore obtained the two properties needed below:
\[
\|b_r\|_\infty\le\mu_r=\mu r,\qquad b_r'(0)=1.
\]
The width controls the height, independently of the fixed
endpoint slope.

\smallskip
\noindent\emph{(ii) Admissible curves for both endpoints.}
Recall that \(E(x)=x\), and that \(\mathcal{L}^\uparrow\)
consists of continuous, strictly increasing, convex curves
joining \((0,0)\) to \((1,1)\). For smooth curves, positivity
of the first derivative and nonnegativity of the second
derivative verify the required shape conditions.
Use this gap and its reflection to define
\[
L_r^{\mathrm{left}}(x):=x-\eta b_r(x),\qquad
L_r^{\mathrm{right}}(x):=x-\eta b_r(1-x).
\]
Reflection moves the fixed gap slope \(b_r'(0)=1\) to a
gap slope \(-1\) at the right endpoint.
Both curves fix \(0\) and \(1\). Since \(b_r''\le0\),
\(b_r'\) attains its maximum at \(0\) and its minimum at \(1\).
Their minimum slopes are therefore
\[
\begin{aligned}
\min_x(L_r^{\mathrm{left}})'(x)
&=\min_x\bigl(1-\eta b_r'(x)\bigr)
=1-\eta b_r'(0)=1-\eta,\\
\min_x(L_r^{\mathrm{right}})'(x)
&=\min_x\bigl(1+\eta b_r'(1-x)\bigr)\\
&=1+\eta b_r'(1)
=1-\frac{\eta\mu_r}{1-\mu_r}\\
&=\frac{1-(1+\eta)\mu r}{1-\mu r}.
\end{aligned}
\]
Thus both slopes are positive precisely when
\((1+\eta)\mu r<1\), since \(0<\eta<1\).
Set
\[
R_\eta:=\min\left\{1,\frac1{(1+\eta)\mu}\right\},\qquad
s_r:=\min\left\{1-\eta,
\frac{1-(1+\eta)\mu r}{1-\mu r}\right\}>0
\quad(0<r<R_\eta).
\]
For these widths, their slopes and curvatures satisfy
\[
\begin{aligned}
(L_r^{\mathrm{left}})'(x)
&=1-\eta b_r'(x)\ge1-\eta\ge s_r>0,\\
(L_r^{\mathrm{right}})'(x)
&=1+\eta b_r'(1-x)
\ge\frac{1-(1+\eta)\mu r}{1-\mu r}\ge s_r>0,\\
(L_r^{\mathrm{left}})''(x)
&=-\eta b_r''(x)\ge0,\\
(L_r^{\mathrm{right}})''(x)
&=-\eta b_r''(1-x)\ge0.
\end{aligned}
\]
For \(e\in\{\mathrm{left},\mathrm{right}\}\),
the fact that \(b_r\in V\) gives
\[
L_r^e\in C^\infty([0,1]),\qquad L_r^e(0)=0,\quad L_r^e(1)=1.
\]
The slope bound also gives strict increase:
\[
L_r^e(y)-L_r^e(x)
=\int_x^y(L_r^e)'(u)\,du
\ge s_r(y-x)>0
\qquad(0\le x<y\le1).
\]
For \(0\le x<y<z\le1\), \eqref{eq:curvature-shape}
applied to \(P=L_r^e\) gives
\[
\begin{aligned}
&\frac{z-y}{z-x}L_r^e(x)+\frac{y-x}{z-x}L_r^e(z)-L_r^e(y)\\
&\quad=\frac1{z-x}\int_x^y\int_y^z\int_s^t
(L_r^e)''(v)\,dv\,dt\,ds\ge0.
\end{aligned}
\]
Thus \(L_r^e\) is convex. Together with the endpoint values,
strict increase, and smoothness established above, this yields
\(L_r^e\in\mathcal{L}^\uparrow\cap C^\infty([0,1])\)
for \(e\in\{\mathrm{left},\mathrm{right}\}\).
The height estimate gives
\begin{equation}
\begin{aligned}
\|L_r^{\mathrm{left}}-E\|_\infty
&=\max_{0\le x\le1}|-\eta b_r(x)|
=\eta\|b_r\|_\infty\le\eta\mu r,\\
\|L_r^{\mathrm{right}}-E\|_\infty
&=\max_{0\le x\le1}|-\eta b_r(1-x)|\\
&\overset{u=1-x}{=}\eta\max_{0\le u\le1}|b_r(u)|
=\eta\|b_r\|_\infty\le\eta\mu r,
\end{aligned}
\label{eq:boundary-layer-uniform-limit}
\end{equation}
whereas the relevant endpoint slopes are exactly
\begin{equation}
(L_r^{\mathrm{left}})'(0)=1-\eta,\qquad
(L_r^{\mathrm{right}})'(1)=1+\eta.
\label{eq:fixed-slope-deviations}
\end{equation}
Thus \(r\) can make the uniform distance as small as required
without changing these slope deviations.

\smallskip
\noindent\emph{(iii) A direct contradiction with continuity.}
Recall that \(D_U(L):=U(E)-U(L)\),
\(\sigma_U:=T_0(q)>0\), and \(T=T_0/\sigma_U\).
For a smooth Lorenz curve, \(E-L\in V\); linearity and
\eqref{eq:original-primal-direction} therefore give
\[
\begin{aligned}
D_U(L)&=-T_0(L-E)=T_0(E-L)\\
&=\sigma_U T(E-L).
\end{aligned}
\]
Recall also the two candidate formulas from
Lemma~\ref{lem:functional-classification}:
\[
T_{\rm left}(h)=\frac{h'(0)}{q'(0)},\qquad
T_{\rm right}(h)=\frac{h'(1)}{q'(1)}
\qquad(h\in V),
\]
where \(q'(0)>0>q'(1)\). Here the input is \(h=E-L\),
so \(h'(p)=1-L'(p)\) at either endpoint \(p\).
If \(T=T_{\rm left}\), the primal gap of the left curve is
\[
\begin{aligned}
D_U(L_r^{\mathrm{left}})
&=\sigma_U T(E-L_r^{\mathrm{left}})\\
&=\sigma_U\frac{1-(L_r^{\mathrm{left}})'(0)}{q'(0)}
=\frac{\sigma_U\eta}{q'(0)}
=:G_{\mathrm{left}}>0.
\end{aligned}
\]
If \(T=T_{\rm right}\), the reflected curve instead gives
\[
\begin{aligned}
D_U(L_r^{\mathrm{right}})
&=\sigma_U T(E-L_r^{\mathrm{right}})\\
&=\sigma_U\frac{1-(L_r^{\mathrm{right}})'(1)}{q'(1)}
=\sigma_U\frac{1-(1+\eta)}{q'(1)}\\
&=\frac{-\sigma_U\eta}{q'(1)}
=\frac{\sigma_U\eta}{-q'(1)}
=:G_{\mathrm{right}}>0.
\end{aligned}
\]
Fix \(e\in\{\mathrm{left},\mathrm{right}\}\) and suppose \(T=T_e\).
The preceding formulas give, for every \(0<r<R_\eta\),
\[
L_r^e\in\mathcal{L},\qquad
\|L_r^e-E\|_\infty\le\eta\mu r.
\]
At the same time, its gap is independent of \(r\):
\[
D_U(L_r^e)=G_e>0,\qquad
G_e=\begin{cases}
\dfrac{\sigma_U\eta}{q'(0)},&e=\mathrm{left},\\[2pt]
\dfrac{\sigma_U\eta}{-q'(1)},&e=\mathrm{right}.
\end{cases}
\]
Fix \(0<\alpha\le1\) and apply continuity of \(U\) at \(E\)
with \(\varepsilon=\alpha G_e>0\). There is a \(\delta_e>0\)
such that, for every \(L\in\mathcal{L}\),
\[
\|L-E\|_\infty<\delta_e
\quad\Longrightarrow\quad
|U(L)-U(E)|<\alpha G_e.
\]
Choose any width
\[
0<r_e<\min\left\{R_\eta,\frac{\delta_e}{\eta\mu}\right\}.
\]
The interval is nonempty because \(R_\eta,\delta_e,\eta,\mu>0\).
The two restrictions respectively ensure admissibility and proximity to \(E\):
\[
\begin{aligned}
0<r_e<R_\eta
&\quad\Longrightarrow\quad L_{r_e}^{\,e}\in\mathcal{L},\\
r_e<\frac{\delta_e}{\eta\mu}
&\quad\Longrightarrow\quad
\|L_{r_e}^{\,e}-E\|_\infty
\le\eta\mu r_e<\delta_e.
\end{aligned}
\]
The endpoint formula and continuity therefore give
\[
\begin{aligned}
G_e&=D_U(L_{r_e}^{\,e})
=|U(L_{r_e}^{\,e})-U(E)|\\
&<\alpha G_e
\end{aligned}
\]
Contradiction, hence \(T\ne T_{\rm left}\) and \(T\ne T_{\rm right}\).
\end{proof}

Only normalized integration remains in the classification.
The directional formula therefore determines \(U\) on smooth Lorenz
curves. Shape-preserving approximation and continuity extend this
identity to the entire original domain.

\begin{proposition}[Area representation on the original domain]
\label{prop:all-curves-area}
For every \(L\in\mathcal{L}\),
\begin{equation}
U(L)=U(E)+\frac{\sigma_U}{A(q)}\bigl(A(L)-A(E)\bigr).
\label{eq:all-curves-area}
\end{equation}
\end{proposition}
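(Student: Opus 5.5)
The plan is to first settle the formula on smooth Lorenz curves and then extend it to all of \(\mathcal{L}\) by shape-preserving approximation and continuity of \(U\).

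\emph{Smooth curves.} By Proposition~\ref{prop:second-order-compatibility}, \(T\) is positive, linear, normalized by \(T(q)=1\), and satisfies the bilinear identity \eqref{eq:bilinear-compatibility}. Lemma~\ref{lem:functional-classification} therefore applies, so \(T\) is one of the three listed functionals. Proposition~\ref{prop:endpoint-exclusion} excludes the two endpoint alternatives, leaving \(T(h)=A(h)/A(q)\) for every \(h\in V\). Now take \(L\in\mathcal{L}\cap C^\infty([0,1])\). Then \(E-L\in V\), and \eqref{eq:original-primal-direction} together with \(T_0=\sigma_U T\) gives
\[
U(L)-U(E)=T_0(L-E)=\sigma_U\frac{A(L-E)}{A(q)}=\frac{\sigma_U}{A(q)}\bigl(A(L)-A(E)\bigr).
\]
This is \eqref{eq:all-curves-area} on smooth curves. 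The smooth class includes curves with horizontal segments, since \eqref{eq:original-primal-direction} was proved on all of \(X_{-1}\).

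\emph{Extension to all of \(\mathcal{L}\).} Fix an arbitrary \(L\in\mathcal{L}\) and let \(L_n=B_nL\) be its Bernstein approximants. Lemma~\ref{lem:bernstein-density} was already used in part~(vii) of Proposition~\ref{prop:positive-directions} to give \(L_n\in\mathcal{L}\cap C^\infty([0,1])\) and \(\|L_n-L\|_\infty\to0\) by \eqref{eq:bernstein-uniform-bound}. The smooth case gives \(U(L_n)=U(E)+\frac{\sigma_U}{A(q)}(A(L_n)-A(E))\) for every \(n\). As \(n\to\infty\), the left side tends to \(U(L)\) by the continuity \eqref{eq:representative-continuity} of \(U\) at \(L\). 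On the right, \(|A(L_n)-A(L)|\le\|L_n-L\|_\infty\to0\). Passing to the limit yields \eqref{eq:all-curves-area}.

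\emph{Main obstacle.} The only step needing care is that the approximants stay inside \(\mathcal{L}\cap C^\infty([0,1])\), where the directional identity is valid. Bernstein operators preserve endpoint values, monotonicity and convexity, but this property is supplied by Lemma~\ref{lem:bernstein-density}, so I take it as given. Otherwise the proof is bookkeeping. Since \(\sigma_U>0\) and \(A(q)>0\), the resulting formula exhibits \(U\) as a positive affine transform of \(A\). This is exactly what Theorem~\ref{thm:direct-gini} needs, with \(c_1=\sigma_U/A(q)\) and \(c_0=U(E)-c_1A(E)\).
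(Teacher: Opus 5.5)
Your proposal is correct and follows the paper's proof essentially step for step: Lemma~\ref{lem:functional-classification} together with Proposition~\ref{prop:endpoint-exclusion} forces \(T(h)=A(h)/A(q)\); the directional identity \eqref{eq:original-primal-direction} with \(T_0=\sigma_U T\) then gives the formula on \(\mathcal{L}\cap C^\infty([0,1])\); and Bernstein approximation, continuity of \(U\), and \(|A(L_n)-A(L)|\le\|L_n-L\|_\infty\) extend it to all of \(\mathcal{L}\) (the paper only phrases this last limit as an explicit \(\varepsilon\)-estimate on the residual \(R(L)\)). One small attribution point: positivity of \(T\) and \(T(q)=1\) come from Proposition~\ref{prop:positive-directions}, whereas Proposition~\ref{prop:second-order-compatibility} supplies the bilinear identity \eqref{eq:bilinear-compatibility} with \(c=2T(qq')\).
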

\begin{proof}
Recall that \(A(f)=\int_0^1f(x)\,dx\), \(A(q)>0\), and
\(T_0=\sigma_U T\), with the same fixed scale \(\sigma_U>0\)
used throughout the proof.
Proposition~\ref{prop:second-order-compatibility},
Lemma~\ref{lem:functional-classification}, and
Proposition~\ref{prop:endpoint-exclusion} give
\[
c=0,\qquad T(h)=\frac{A(h)}{A(q)}
\qquad(h\in V).
\]
The condition \(T(q)=1\) fixes the area coefficient as \(1/A(q)\);
the relation \(T_0=\sigma_U T\) restores the scale of \(U\).
For \(L\in\mathcal{L}\cap C^\infty([0,1])\), the difference
\(L-E\) is smooth and vanishes at both endpoints, so it belongs
to \(V\). The identity \eqref{eq:original-primal-direction}
applies on this whole smooth domain and gives
\[
\begin{aligned}
L-E&\in V,\\
U(L)-U(E)&=T_0(L-E)=\sigma_U T(L-E)\\
&=\frac{\sigma_U}{A(q)}A(L-E)
=\frac{\sigma_U}{A(q)}\bigl(A(L)-A(E)\bigr).
\end{aligned}
\]
For arbitrary \(L\in\mathcal{L}\), put \(L_n=B_nL\).
Recall that \(B_nL\) is the Bernstein polynomial constructed
in Lemma~\ref{lem:bernstein-density}; it preserves the Lorenz
shape conditions. Each approximant therefore satisfies the
smooth area formula for the same representative \(U\) and
the same coefficient \(\sigma_U/A(q)\).
Lemma~\ref{lem:bernstein-density} yields
\[
L_n\in\mathcal{L}\cap C^\infty([0,1]),\qquad
\|L_n-L\|_\infty\longrightarrow0,
\]
and
\[
|A(L_n)-A(L)|
=\left|\int_0^1(L_n-L)\right|
\le\int_0^1|L_n-L|
\le\|L_n-L\|_\infty.
\]
Let
\[
R(L):=U(L)-U(E)-\frac{\sigma_U}{A(q)}\bigl(A(L)-A(E)\bigr).
\]
Since \(R(L_n)=0\),
\[
\begin{aligned}
|R(L)|&=|R(L)-R(L_n)|\\
&\le|U(L)-U(L_n)|
+\frac{\sigma_U}{A(q)}|A(L)-A(L_n)|\\
&\le|U(L)-U(L_n)|+\frac{\sigma_U}{A(q)}\|L-L_n\|_\infty.
\end{aligned}
\]
Continuity at \(L\) controls the first term through the local modulus
\[
\omega_{U,L}(r):=
\sup_{\substack{P\in\mathcal{L}\\\|P-L\|_\infty\le r}}
|U(P)-U(L)|.
\]
It is finite for sufficiently small \(r>0\), and
\(\omega_{U,L}(r)\to0\) as \(r\downarrow0\).
Given \(\varepsilon>0\), choose any \(r>0\) satisfying the total error budget
\[
\omega_{U,L}(r)+\frac{\sigma_U}{A(q)}r<\varepsilon.
\]
Such radii exist because both terms tend to zero.
Choose \(n\ge2\) with \(\|L_n-L\|_\infty\le r\). The preceding estimate gives
\[
|R(L)|
\le\omega_{U,L}(r)+\frac{\sigma_U}{A(q)}r<\varepsilon.
\]
Thus \((\forall\varepsilon>0)\ |R(L)|<\varepsilon\), so
\(R(L)=0\), which is \eqref{eq:all-curves-area}.
\end{proof}

\begin{proof}[Completion of the proof of Theorem~\ref{thm:direct-gini}]
Set
\[
c_1:=\frac{\sigma_U}{A(q)}>0,\qquad c_0:=U(E)-c_1A(E).
\]
For \(L_1,L_2\in\mathcal{L}\),
\[
\begin{aligned}
L_1\succsim L_2
&\Longleftrightarrow U(L_1)\ge U(L_2)\\
&\Longleftrightarrow c_0+c_1A(L_1)\ge c_0+c_1A(L_2)\\
&\Longleftrightarrow c_1\bigl(A(L_1)-A(L_2)\bigr)\ge0\\
&\Longleftrightarrow A(L_1)\ge A(L_2)\\
&\Longleftrightarrow 1-2A(L_1)\le1-2A(L_2)\\
&\Longleftrightarrow J(L_1)\le J(L_2).
\end{aligned}
\]
\end{proof}

\appendix
\section{Auxiliary domain and finite-dimensional facts}
\label{app:domain-facts}
These facts justify the domain and topology checks used in the text.
They do not assume the main theorem and do not impose any additional preference axiom.

\begin{lemma}[Domain geometry, density, and inversion]
\label{lem:domain-inversion}
The sets \(\mathcal{L}\), \(\mathcal{L}^\uparrow\), and \(Y\) are convex.
Every \(L\in\mathcal{L}\) satisfies \(0\le L(x)\le x\) on \([0,1]\).
The set \(\mathcal{L}^\uparrow\) is uniformly dense in \(\mathcal{L}\), and
\(\iota:\mathcal{L}^\uparrow\to Y\), \(\iota(L)=L^{-1}\), is a homeomorphism for the relative uniform topologies.
Under Axioms~\ref{ax:weak-order} and~\ref{ax:continuity}, nontriviality on \(\mathcal{L}\) implies nontriviality on \(\mathcal{L}^\uparrow\).
\end{lemma}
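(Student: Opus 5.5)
The lemma bundles five elementary claims, and I would prove them in the order stated, since the inversion homeomorphism uses the bounds and monotonicity facts established first.

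\emph{Convexity and the bound \(0\le L(x)\le x\).} Convexity of \(\mathcal{L}\) is immediate. Each defining condition is preserved by convex combinations:
\begin{itemize}
\item the endpoint values, continuity and monotonicity are linear or order conditions;
\item the convexity inequality is closed under nonnegative combinations;
\item strict increase is preserved as well, so \(\mathcal{L}^\uparrow\) is convex.
\end{itemize}
The same argument gives \(Y\), with concavity in place of convexity. For the bound, monotonicity gives \(L(x)\ge L(0)=0\). Convexity along the chord from \((0,0)\) to \((1,1)\) gives \(L(x)=L((1-x)\cdot0+x\cdot1)\le x\).

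\emph{Density.} For \(L\in\mathcal{L}\) set \(L_\lambda=(1-\lambda)L+\lambda E\). By convexity \(L_\lambda\in\mathcal{L}\). It is strictly increasing for \(\lambda>0\), being the sum of a nondecreasing function and a strictly increasing one. Moreover \(\|L_\lambda-L\|_\infty\le\lambda\|E-L\|_\infty\le\lambda\to0\).

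\emph{Inversion maps \(\mathcal{L}^\uparrow\) onto \(Y\).} A continuous strictly increasing \(L\) with \(L(0)=0\), \(L(1)=1\) is a bijection of \([0,1]\). Its inverse is continuous, strictly increasing, and fixes the endpoints. Concavity of \(L^{-1}\) follows from convexity of \(L\): the hypograph of \(L^{-1}\) is the reflection of the epigraph of \(L\) in the diagonal, so it is convex. Equivalently, apply the increasing map \(L\) to the desired inequality and use convexity. The symmetric argument sends \(Y\) into \(\mathcal{L}^\uparrow\), and inversion is an involution, so \(\iota\) is a bijection.

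\emph{Continuity of \(\iota\) and of \(\iota^{-1}\).} This is the step I expect to require the most care, because uniform continuity of inversion is not uniform in the curve: it depends on a modulus at the limit point. Suppose \(L_n\to L\) uniformly in \(\mathcal{L}^\uparrow\), and write \(F=L^{-1}\), \(F_n=L_n^{-1}\). I would use the modulus
\[
\omega_F(\varepsilon)=\sup_{|a-b|\le\varepsilon}|F(a)-F(b)|,
\]
which tends to zero by uniform continuity of \(F\) on a compact set. Given \(y\), put \(x_n=F_n(y)\). Then
\[
|L(x_n)-y|=|L(x_n)-L_n(x_n)|\le\|L-L_n\|_\infty=:\varepsilon_n .
\]
Applying \(F\) gives
\[
|x_n-F(y)|=|F(L(x_n))-F(y)|\le\omega_F(\varepsilon_n),
\]
uniformly in \(y\). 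The inverse map \(Y\to\mathcal{L}^\uparrow\) is again inversion, so the same argument with the roles of \(L\) and \(F\) swapped proves continuity of \(\iota^{-1}\). Both domains carry metric topologies, so sequential continuity suffices.

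\emph{Nontriviality passes to \(\mathcal{L}^\uparrow\).} Suppose \(L\succ M\) in \(\mathcal{L}\), and suppose instead that all curves in \(\mathcal{L}^\uparrow\) were indifferent. Approximate \(L\) and \(M\) by \(L_\lambda,M_\lambda\in\mathcal{L}^\uparrow\). Then \(M_\lambda\succsim L_\lambda\) for every \(\lambda\), and the closed graph of Axiom~\ref{ax:continuity} gives \(M\succsim L\), contradicting \(L\succ M\). Completeness from Axiom~\ref{ax:weak-order} is what lets the failure of strict preference be phrased as this weak preference.
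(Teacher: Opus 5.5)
Your proposal is correct and follows the paper's proof essentially step for step: you use the same convex-combination checks, the same approximants $(1-\lambda)L+\lambda E$ for both density and the closed-graph nontriviality argument, and the identical estimate $|L(x_n)-y|\le\|L-L_n\|_\infty$ combined with the modulus of continuity of $L^{-1}$ for continuity of inversion. The only differences are that you cite continuity of the inverse of a continuous strictly increasing bijection as standard, where the paper proves it with an explicit separation modulus, and that you state explicitly the use of completeness in passing from ``no strict preference'' to indifference, which the paper leaves implicit.
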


\begin{proof}
For \(L,M\in\mathcal{L}\) and \(\lambda\in[0,1]\), let \(N=\lambda L+(1-\lambda)M\).
It has the required endpoints and is continuous. For \(x\le y\),
\(N(y)-N(x)=\lambda(L(y)-L(x))+(1-\lambda)(M(y)-M(x))\ge0\).
For \(x,y,t\in[0,1]\), applying convexity to each summand gives
\[
\begin{aligned}
N(tx+(1-t)y)
&\le\lambda\bigl(tL(x)+(1-t)L(y)\bigr)\\
&\quad +(1-\lambda)\bigl(tM(x)+(1-t)M(y)\bigr)
=tN(x)+(1-t)N(y).
\end{aligned}
\]
When \(L,M\) are strictly increasing, the increment for \(x<y\) is strictly positive, including when \(\lambda=0\) or \(1\).
The same increment calculation and the reversed convexity inequality prove convexity of \(Y\).
Monotonicity and the endpoints give \(L(x)\ge0\), while convexity gives
\(L(x)\le(1-x)L(0)+xL(1)=x\).
For \(0<\varepsilon<1\), define \(L^\varepsilon=(1-\varepsilon)L+\varepsilon E\).
If \(x<y\), then
\[
L^\varepsilon(y)-L^\varepsilon(x)\ge\varepsilon(y-x)>0,
\qquad
\|L^\varepsilon-L\|_\infty=\varepsilon\|E-L\|_\infty.
\]
Thus \(L^\varepsilon\in\mathcal{L}^\uparrow\) and the asserted density holds.

Let \(f\) be a continuous strictly increasing function with \(f(0)=0\), \(f(1)=1\).
The intermediate value theorem and strict monotonicity give a bijection of \([0,1]\).
For \(0<r\le1\), the compact nonempty set
\(K_r=\{(x,y)\in[0,1]^2:|x-y|\ge r\}\)
defines the separation modulus
\[
m_f(r):=\min_{K_r}|f(x)-f(y)|
=\min_{0\le x\le1-r}\bigl(f(x+r)-f(x)\bigr)>0.
\]
Indeed, the minimum is attained, and a zero minimum would contradict injectivity.
The second minimum agrees with the first because strict increase
makes every increment over a longer interval larger than the
increment over its initial subinterval of length \(r\).
If \(|s-t|<m_f(r)\), then \(|f^{-1}(s)-f^{-1}(t)|<r\).
Thus, for an inverse tolerance \(0<\varepsilon\le1\), the choice
\(\delta=m_f(\varepsilon)\) proves uniform continuity directly.
For \(\varepsilon>1\), the conclusion is automatic since the inverse
takes values in \([0,1]\). The threshold \(m_f(r)\) is exact:
an interval attaining the minimum has inverse endpoint separation
\(r\), so no larger threshold ensures a separation strictly below \(r\).

For \(L\in\mathcal{L}^\uparrow\), set \(F=L^{-1}\).
Given \(s,t,\lambda\in[0,1]\), convexity yields
\[
L\bigl(\lambda F(s)+(1-\lambda)F(t)\bigr)
\le\lambda s+(1-\lambda)t.
\]
Applying the increasing map \(F\) proves concavity of \(F\), so \(F\in Y\).
Conversely, for \(F\in Y\) and \(L=F^{-1}\), concavity gives
\[
F\bigl(\lambda L(x)+(1-\lambda)L(y)\bigr)
\ge\lambda x+(1-\lambda)y;
\]
applying \(L\) proves convexity of \(L\). Therefore inversion is a bijection between the stated domains.

For any continuous \(g:[0,1]\to\mathbb R\), write
\(\omega_g(r):=\sup\{|g(s)-g(t)|:s,t\in[0,1],\ |s-t|\le r\}\).
Uniform continuity gives \(\omega_g(r)\to0\) as \(r\downarrow0\).
If \(L_n,L\in\mathcal{L}^\uparrow\), put \(\delta_n=\|L_n-L\|_\infty\).
For \(x_n=L_n^{-1}(t)\),
\[
|L(x_n)-t|=|L(x_n)-L_n(x_n)|\le\delta_n,
\qquad
|L_n^{-1}(t)-L^{-1}(t)|\le\omega_{L^{-1}}(\delta_n).
\]
Consequently \(\|L_n^{-1}-L^{-1}\|_\infty\le\omega_{L^{-1}}(\delta_n)\to0\) when \(\delta_n\to0\).
The identical estimate, with \(F_n,F\in Y\), proves continuity of \(F\mapsto F^{-1}\).
The dual mixture is therefore well-defined by convexity of \(Y\).

Finally, suppose that all curves in \(\mathcal{L}^\uparrow\) are indifferent.
For arbitrary \(L,M\in\mathcal{L}\), the curves \(L^\varepsilon,M^\varepsilon\), \(0<\varepsilon<1\), belong to \(\mathcal{L}^\uparrow\) and approach \(L,M\) uniformly as \(\varepsilon\downarrow0\).
Both \(L^\varepsilon\succsim M^\varepsilon\) and \(M^\varepsilon\succsim L^\varepsilon\) hold.
Closedness of the graph gives both \(L\succsim M\) and \(M\succsim L\).
Thus the relation on \(\mathcal{L}\) would be trivial, proving the contrapositive.
\end{proof}

\begin{lemma}[Convexity from polygonal slopes]
\label{lem:polygonal-slope-criterion}
Let \(P\) be continuous and piecewise affine on a compact interval, with nondecreasing successive slopes.
Then \(P\) is convex. If all slopes are nonnegative, \(P\) is nondecreasing; if all are positive, \(P\) is strictly increasing.
\end{lemma}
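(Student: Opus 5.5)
Let \(P\) be defined on \([\alpha,\beta]\), with breakpoints \(\alpha=x_0<x_1<\dots<x_N=\beta\) and slope \(s_i\) on \([x_{i-1},x_i]\), where \(s_1\le s_2\le\dots\le s_N\). The plan is to write \(P\) as an explicit maximum of affine functions and to get the monotonicity claims by summing increments.

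\emph{Step 1: convexity.} Let \(\ell_i\) be the affine extension of the \(i\)-th piece, \(\ell_i(x)=P(x_i)+s_i(x-x_i)\). We will show that \(P=\max_{1\le i\le N}\ell_i\) on \([\alpha,\beta]\). Convexity then follows at once, since a pointwise maximum of finitely many affine functions satisfies the defining chord inequality.

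To prove \(P\ge\ell_j\), fix \(j\) and \(x\in[x_{k-1},x_k]\). We argue by induction over the intermediate pieces, using continuity at each breakpoint so that consecutive pieces agree there.
\begin{itemize}
\item If \(k>j\), then \(P(x)-\ell_j(x)\) is the integral from \(x_j\) to \(x\) of the piecewise-constant slope of \(P\) minus \(s_j\). On each piece beyond the \(j\)-th, the integrand is \(s_i-s_j\ge0\), so the difference is nonnegative.
\item If \(k<j\), the same integral is taken backwards from \(x_j\). The integrand is now \(s_i-s_j\le0\) and the orientation is reversed, so the difference is again nonnegative.
\item If \(k=j\), then \(P=\ell_j\) on that piece.
\end{itemize}
Equality \(P(x)=\ell_k(x)\) on the \(k\)-th piece then gives \(P=\max_i\ell_i\).

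\emph{Step 2: monotonicity.} For \(\alpha\le x<y\le\beta\), write
\[
P(y)-P(x)=\sum_i s_i\,\bigl|[x,y]\cap[x_{i-1},x_i]\bigr|.
\]
This identity holds by continuity and telescoping at the breakpoints. If every \(s_i\ge0\), the sum is nonnegative. If every \(s_i>0\), it is at least \(\bigl(\min_i s_i\bigr)(y-x)>0\), because the overlap lengths add up to \(y-x\).

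I expect no step to be a genuine obstacle. The only care needed is the bookkeeping at the breakpoints, where continuity is what allows the affine pieces to be glued together in Step 1.
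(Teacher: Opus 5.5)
Your proof is correct, but for the convexity claim it takes a different route from the paper. The monotonicity part (your Step 2) matches the paper's, since your overlap sum is just \(\int_x^y v\) for the step function \(v\) of slopes.

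For convexity, you write \(P\) as the upper envelope \(\max_i\ell_i\) of the affine extensions of its own pieces. You check \(P\ge\ell_j\) from the sign of \(\int(v-s_j)\) on either side of the \(j\)-th piece, and then use the general fact that a finite maximum of affine functions is convex.

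The paper instead proves the three-point chord inequality directly. For \(x<y<z\), monotone slopes give \(v(t)\ge v(r)\) for almost every \((r,t)\in(x,y)\times(y,z)\). Hence \(0\le\int_x^y\int_y^z\bigl(v(t)-v(r)\bigr)\,dt\,dr=(y-x)\bigl(P(z)-P(y)\bigr)-(z-y)\bigl(P(y)-P(x)\bigr)\), which rearranges to the convexity inequality.

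Your route gives extra structural information: each piece lies on a supporting line of \(P\). The cost is a case split on where \(x\) sits relative to the \(j\)-th piece. The paper's route needs no extensions and no cases. It is also the same double-integral mechanism as the curvature identity \eqref{eq:curvature-shape}, with \(P'\) replaced by \(v\), so the polygonal and smooth convexity criteria in the appendix are handled the same way.

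A small wording point: you announce an induction over intermediate pieces, but the integral of the piecewise-constant slope already does the work, so no induction is needed.
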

\begin{proof}
Let \(v\) be the step function equal to the slope on each open affine piece, with arbitrary values at the finitely many knots.
For \(x<y\), summing affine increments gives \(P(y)-P(x)=\int_x^y v(t)\,dt\).
For \(x<y<z\), ordered slopes give \(v(t)\ge v(r)\) for almost every \((r,t)\in(x,y)\times(y,z)\).
Therefore
\[
\begin{aligned}
0&\le\int_x^y\int_y^z(v(t)-v(r))\,dt\,dr\\
&=(y-x)(P(z)-P(y))-(z-y)(P(y)-P(x)).
\end{aligned}
\]
Rearrangement gives
\(P(y)\le\frac{z-y}{z-x}P(x)+\frac{y-x}{z-x}P(z)\), the convexity inequality for every interior point \(y\) of \([x,z]\).
The endpoint cases are equalities.
Nonnegative slopes give nonnegative increments by the same integral formula.
If every slope is positive, their finite minimum \(m\) is positive, and \(P(y)-P(x)\ge m(y-x)>0\).
\end{proof}

\begin{lemma}[Coordinate form of an affine function]
\label{lem:finite-affine}
Let \(D\subset\mathbb R^m\) be nonempty, open, and convex.
If \(f:D\to\mathbb R\) is affine, there exist \(a_0\in\mathbb R\), \(a\in\mathbb R^m\) with \(f(x)=a_0+a\cdot x\) for all \(x\in D\).
\end{lemma}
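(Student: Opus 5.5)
The plan is to reduce to a Cauchy-type additivity statement and then use the convexity of \(D\) to propagate it. Fix a base point \(x_0\in D\) and set \(g(y):=f(x_0+y)-f(x_0)\) on the translated set \(D-x_0\). This set is open, convex and contains \(0\), and \(g\) is affine with \(g(0)=0\). The goal is to show that \(g\) is the restriction of a linear map \(\ell:\mathbb R^m\to\mathbb R\). Once that is done, we may take \(a_i:=\ell(e_i)\) and \(a_0:=f(x_0)-a\cdot x_0\).

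First I will prove homogeneity on a small ball. Since \(0\) is interior, there is \(r>0\) with the open ball \(B_r(0)\subset D-x_0\). For \(y\in D-x_0\) and \(\lambda\in[0,1]\), affinity applied to the pair \((y,0)\) gives \(g(\lambda y)=\lambda g(y)\). For \(y\in B_r(0)\), both \(y\) and \(-y\) lie in the ball, so the midpoint identity \(0=g(0)=\tfrac12 g(y)+\tfrac12 g(-y)\) gives \(g(-y)=-g(y)\). Additivity on a smaller ball comes from midpoints as well: for \(y,z\in B_{r/2}(0)\), the points \(2y,2z\) lie in \(B_r(0)\), and so \(g(y+z)=g(\tfrac12(2y)+\tfrac12(2z))=\tfrac12 g(2y)+\tfrac12 g(2z)\). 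Then \(g(2y)=2g(y)\) follows from \(g(y)=\tfrac12 g(2y)\), which is the case \(\lambda=\tfrac12\) above. Hence \(g\) is additive and homogeneous near the origin.

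Next define \(\ell(v):=\tfrac{1}{t}g(tv)\) for any \(t>0\) with \(tv\in B_{r/2}(0)\). The step I regard as the main point is checking that this value does not depend on \(t\). For \(0<s<t\), write \(sv=(s/t)(tv)\) and apply homogeneity with the factor \(s/t\in(0,1)\). Negative scalars are handled by the oddness on the ball. Linearity of \(\ell\) then follows from the local additivity, after choosing a common small \(t\) for two vectors; this mirrors part (iv) of Proposition~\ref{prop:positive-directions}.

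Finally I will show that \(g=\ell\) on all of \(D-x_0\), not merely near \(0\). For arbitrary \(y\in D-x_0\), pick \(\lambda\in(0,1)\) small enough that \(\lambda y\in B_{r/2}(0)\). Then \(g(\lambda y)=\lambda g(y)\) by affinity, and \(g(\lambda y)=\lambda\ell(y)\) by the definition of \(\ell\). Dividing by \(\lambda\) gives \(g(y)=\ell(y)\), and this completes the representation.
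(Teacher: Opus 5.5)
Your proof is correct and uses the same ingredients as the paper's proof. Those ingredients are segment affinity for positive rescaling, the symmetric midpoint identity for negative steps, a midpoint identity for additivity on a small ball, and a final shrinking toward the base point to pass from the ball to all of \(D\). The only difference is packaging: the paper defines \(a_i\) directly as axis difference quotients and checks the coordinate formula on the ball through coordinate partial sums, whereas you first extend to a coordinate-free linear map \(\ell\) on all of \(\mathbb R^m\), in the manner of part~(iv) of Proposition~\ref{prop:positive-directions}, and only then set \(a_i=\ell(e_i)\).
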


\begin{proof}
Choose \(p\in D\) and \(r>0\) with \(B_2(p,r)\subset D\).
Let \(e_i\) be the coordinate vectors, choose any \(h\in(0,r)\), and set
\(a_i=(f(p+he_i)-f(p))/h\).
Affinity on the segment from \(p\) to \(p+he_i\) gives
\(f(p+te_i)=f(p)+a_it\) for \(0\le t\le h\).
For \(h<t<r\), write \(p+he_i=(1-h/t)p+(h/t)(p+te_i)\) and rearrange the affine identity to obtain the same formula.
For \(-r<t<0\), use
\(f(p+te_i)+f(p-te_i)=2f(p)\).
Hence the coordinate formula holds for \(|t|<r\).

Whenever \(p,p+u,p+v,p+u+v\in D\), equality of the two midpoints gives
\(f(p+u+v)+f(p)=f(p+u)+f(p+v)\).
If \(\|w\|_2<r\), every coordinate partial sum of \(w\) has Euclidean norm at most \(\|w\|_2<r\).
Repeated application of this identity therefore yields
\(f(p+w)=f(p)+\sum_{i=1}^m a_iw_i\).
For arbitrary \(q\in D\), choose
any \(t\in(0,1]\) satisfying \(t\|q-p\|_2<r\).
Such a \(t\) exists, and convexity and affinity give
\((1-t)f(p)+tf(q)=f(p+t(q-p))=f(p)+t\,a\cdot(q-p)\).
Dividing by \(t>0\) gives the assertion with \(a_0=f(p)-a\cdot p\).
\end{proof}

\clearpage
\section{Parameter integration}
\label{app:parameter-integration}

\begin{lemma}[Parameter integration]
\label{lem:parameter-integration}
Let \(J\) be a nondegenerate compact interval and let
\(K_j:[0,1]\times J\to\mathbb R\), \(j\ge0\), be continuous,
with \(\partial_xK_j=K_{j+1}\) on the interior in \(x\).
Then
\[
I_j(x):=\int_0^1K_j(r,x)\,dr,\qquad
I_j'=I_{j+1},\qquad I_0^{(j)}=I_j\quad(j\ge0).
\]
The derivatives extend continuously to the endpoints of \(J\).
\end{lemma}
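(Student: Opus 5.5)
The approach is induction on \(j\), reducing everything to the single-step claim \(I_j'=I_{j+1}\) on \(J\), including one-sided derivatives at the endpoints of \(J\). Once that claim holds for every \(j\), the identity \(I_0^{(j)}=I_j\) follows by iterating it. The claimed continuous endpoint extensions are then automatic: each \(I_j\) is defined and continuous on all of the compact interval \(J\). So the plan has two parts, first continuity of each \(I_j\), then the derivative identity.

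For continuity, I would use compactness. Each \(K_j\) is continuous on the compact set \([0,1]\times J\), hence uniformly continuous and bounded. Therefore
\[
|I_j(x)-I_j(y)|\le\sup_{r\in[0,1]}|K_j(r,x)-K_j(r,y)|\to0 \qquad (y\to x),
\]
and \(I_j\) is continuous on \(J\).

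For differentiability, fix \(x\in J\) and a nonzero \(h\) with \(x+h\in J\). For each fixed \(r\), the map \(x\mapsto K_j(r,x)\) is continuous on \(J\) and has derivative \(K_{j+1}(r,\cdot)\) on the interior. Since \(K_{j+1}(r,\cdot)\) is continuous up to the endpoints, the fundamental theorem of calculus applies on the closed segment, first on interior subsegments and then by passing to the limit. It gives
\[
\frac{K_j(r,x+h)-K_j(r,x)}{h}=\int_0^1K_{j+1}(r,x+\tau h)\,d\tau .
\]
Integrating in \(r\) and subtracting \(I_{j+1}(x)\), the difference quotient of \(I_j\) differs from \(I_{j+1}(x)\) by at most \(\sup_{r,\tau}|K_{j+1}(r,x+\tau h)-K_{j+1}(r,x)|\). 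This tends to zero by uniform continuity of \(K_{j+1}\). If \(x\) is an endpoint of \(J\), only increments pointing into \(J\) are allowed, so the limit is the one-sided derivative. Hence \(I_j'=I_{j+1}\) on all of \(J\).

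The only delicate point is the endpoint step. The hypothesis gives \(\partial_xK_j=K_{j+1}\) only in the interior, so the mean-value or fundamental-theorem representation must be justified up to the boundary. I would do this by applying the fundamental theorem of calculus on \([x+\epsilon,x+h]\) and letting \(\epsilon\downarrow0\), which uses continuity of \(K_j\) and of \(K_{j+1}\) at the endpoint. I would also note that, as the lemma is used in the paper, the variable playing the role of \(x\) is the spatial variable in \([0,1]\). In that case \(J=[0,1]\), and nondegeneracy of \(J\) ensures one-sided increments exist.
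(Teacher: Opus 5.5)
Your proposal is correct and follows essentially the same route as the paper. Continuity of each \(I_j\) comes from uniform continuity of \(K_j\) on the compact rectangle, and the derivative identity comes from writing the difference quotient as \(\int_0^1\int_0^1 K_{j+1}(r,x+\theta u)\,d\theta\,dr\), which the modulus of continuity of \(K_{j+1}\) controls; one-sided increments handle the endpoints, and induction gives \(I_0^{(j)}=I_j\). The paper uses the fundamental-theorem representation up to the boundary of \(J\) without comment, and your justification of it (integrate on interior subsegments, then pass to the limit) is a useful clarification rather than a different argument.
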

\begin{proof}
For \(j\ge0\), put
\[
\omega_j(d):=
\max_{\substack{r\in[0,1],\ x,z\in J\\|x-z|\le d}}
|K_j(r,x)-K_j(r,z)|.
\]
Compactness and continuity give uniform continuity of each \(K_j\), hence
\[
0\le\omega_j(d)<\infty\quad(d\ge0),\qquad
\lim_{d\downarrow0}\omega_j(d)=0.
\]
For \(u\ne0\), \(x,x+u\in J\),
\[
\begin{aligned}
\frac{I_j(x+u)-I_j(x)}u-I_{j+1}(x)
&=\int_0^1\int_0^1
\bigl(K_{j+1}(r,x+\theta u)-K_{j+1}(r,x)\bigr)
\,d\theta\,dr,\\
\left|\frac{I_j(x+u)-I_j(x)}u-I_{j+1}(x)\right|
&\le\int_0^1\omega_{j+1}(\theta|u|)\,d\theta
\le\omega_{j+1}(|u|)\longrightarrow0,\\
|I_j(x)-I_j(z)|&\le\omega_j(|x-z|)\longrightarrow0
\quad(z\to x).
\end{aligned}
\]
Thus \(I_j\) is continuous, \(I_j'=I_{j+1}\), and induction
gives \(I_0^{(j)}=I_j\). The same inequalities hold for the
one-sided endpoint increments.
\end{proof}

\clearpage
\section{Shape-preserving smooth density}
\label{app:bernstein-density}

\begin{lemma}[Shape-preserving smooth density]
\label{lem:bernstein-density}
For \(L\in\mathcal{L}\) and \(n\ge2\), define
\[
B_nL(x):=\sum_{j=0}^nL(j/n)b_{n,j}(x),\qquad
b_{n,j}(x):=\binom njx^j(1-x)^{n-j}.
\]
Then \(B_nL\in\mathcal{L}\cap C^\infty([0,1])\), and, with
\[
\omega_L(\delta):=
\max_{\substack{x,y\in[0,1]\\|x-y|\le\delta}}|L(x)-L(y)|,
\]
one has
\begin{equation}
\|B_nL-L\|_\infty
\le\inf_{\delta>0}
\left\{\omega_L(\delta)+\frac1{4n\delta^2}\right\},\qquad
\|B_nL-L\|_\infty\longrightarrow0.
\label{eq:bernstein-uniform-bound}
\end{equation}
\end{lemma}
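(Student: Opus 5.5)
The plan is to establish the three assertions of Lemma~\ref{lem:bernstein-density} in order: membership in \(\mathcal{L}\), smoothness, and the quantitative uniform estimate. Smoothness is immediate, since \(B_nL\) is a polynomial, so all one-sided endpoint derivatives exist. The endpoint values follow from \(b_{n,j}(0)=\delta_{j0}\) and \(b_{n,j}(1)=\delta_{jn}\), which give \(B_nL(0)=L(0)=0\) and \(B_nL(1)=L(1)=1\).

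For the shape conditions I will use the classical finite-difference formulas for Bernstein derivatives:
\[
(B_nL)'(x)=n\sum_{j=0}^{n-1}\bigl(L(\tfrac{j+1}n)-L(\tfrac jn)\bigr)b_{n-1,j}(x),
\qquad
(B_nL)''(x)=n(n-1)\sum_{j=0}^{n-2}\Delta^2_j\,b_{n-2,j}(x),
\]
where \(\Delta^2_j=L(\tfrac{j+2}n)-2L(\tfrac{j+1}n)+L(\tfrac jn)\). I will derive these by differentiating \(b_{n,j}\) and reindexing, with \(n\ge2\) ensuring that \(b_{n-2,j}\) makes sense. Monotonicity of \(L\) gives nonnegative first differences, so \((B_nL)'\ge0\). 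Convexity of \(L\) at the midpoint \(\tfrac{j+1}n\) gives \(\Delta^2_j\ge0\), so \((B_nL)''\ge0\).

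The step I expect to need the most care is converting these pointwise derivative signs into the defining inequalities of \(\mathcal{L}\) without circularity. For monotonicity, the fundamental theorem of calculus suffices. For convexity, I will invoke the curvature criterion \eqref{eq:curvature-shape} already cited in the text: the triple-integral identity expresses the convexity defect of a \(C^2\) function through \(P''\), and nonnegativity of \(P''\) then gives convexity.

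For the estimate, I will use \(\sum_j b_{n,j}=1\), \(\sum_j (j/n)b_{n,j}(x)=x\), and \(\sum_j (j/n-x)^2b_{n,j}(x)=x(1-x)/n\le 1/(4n)\). Fix \(x\) and \(\delta>0\), and write
\[
|B_nL(x)-L(x)|\le\sum_j|L(j/n)-L(x)|b_{n,j}(x).
\]
I then split the indices into two sets. For \(|j/n-x|\le\delta\), each term is at most \(\omega_L(\delta)\), and since the weights sum to one this part contributes at most \(\omega_L(\delta)\). For \(|j/n-x|>\delta\), the bound \(|L|\le1\) gives \(|L(j/n)-L(x)|\le1\). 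Chebyshev's inequality \(1\le (j/n-x)^2/\delta^2\) then bounds this part by \(1/(4n\delta^2)\). Taking the infimum over \(\delta\) gives \eqref{eq:bernstein-uniform-bound}.

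Finally, uniform convergence follows by choosing \(\delta=n^{-1/4}\), which sends both terms to zero, using uniform continuity of \(L\) on \([0,1]\).
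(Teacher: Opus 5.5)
Your proposal matches the paper's proof essentially step for step: the endpoint values of $b_{n,j}$, the finite-difference formulas for $(B_nL)'$ and $(B_nL)''$, the curvature identity \eqref{eq:curvature-shape}, and the moment/Chebyshev split. The paper actually proves \eqref{eq:curvature-shape} inside this very lemma, but it is an elementary $C^2$ fact, so citing it causes no circularity; your choice $\delta=n^{-1/4}$ at the end, instead of fixing $\delta$ first and then enlarging $n$, is cosmetic. One small correction: the bound $|L(j/n)-L(x)|\le1$ needs $0\le L\le1$, which holds on $\mathcal{L}$ because $0\le L(x)\le x$; the weaker $|L|\le1$ would only give $2$ and would spoil the constant $1/(4n\delta^2)$.
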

\begin{proof}
Set \(b_{m,j}=0\) for \(j\notin\{0,\ldots,m\}\). The binomial
identities give
\[
b_{n,j}\ge0,\qquad \sum_{j=0}^nb_{n,j}(x)=(x+1-x)^n=1.
\]
At the endpoints,
\[
b_{n,j}(0)=\mathbf1_{\{j=0\}},\qquad
b_{n,j}(1)=\mathbf1_{\{j=n\}}.
\]
Consequently, \(B_nL(0)=L(0)=0\) and \(B_nL(1)=L(1)=1\).
For the derivatives, the coefficient identities
\[
j\binom nj=n\binom{n-1}{j-1},\qquad
(n-j)\binom nj=n\binom{n-1}{j}
\]
give
\[
b_{n,j}'=n(b_{n-1,j-1}-b_{n-1,j}).
\]
Thus
\[
\begin{aligned}
(B_nL)'
&=n\sum_{j=0}^nL(j/n)(b_{n-1,j-1}-b_{n-1,j})\\
&=n\sum_{j=0}^{n-1}
\bigl(L((j+1)/n)-L(j/n)\bigr)b_{n-1,j}\ge0,\\
(B_nL)''
&=n(n-1)\sum_{j=0}^{n-2}
\bigl(L((j+2)/n)-2L((j+1)/n)+L(j/n)\bigr)b_{n-2,j}\ge0.
\end{aligned}
\]
The last inequality uses
\[
2L((j+1)/n)\le L(j/n)+L((j+2)/n)
\qquad(0\le j\le n-2).
\]
For \(P\in C^2([0,1])\) and \(0\le a<b<c\le1\),
\begin{equation}
\begin{aligned}
&\frac{c-b}{c-a}P(a)+\frac{b-a}{c-a}P(c)-P(b)\\
&\quad=\frac{(b-a)(P(c)-P(b))-(c-b)(P(b)-P(a))}{c-a}\\
&\quad=\frac1{c-a}\int_a^b\int_b^c
(P'(t)-P'(s))\,dt\,ds\\
&\quad=\frac1{c-a}\int_a^b\int_b^c\int_s^tP''(r)\,dr\,dt\,ds.
\end{aligned}
\label{eq:curvature-shape}
\end{equation}
Hence \(P''\ge0\) implies convexity, and \(P''\le0\) implies
concavity. Also
\[
P(y)-P(x)=\int_x^yP'(r)\,dr
\begin{cases}
\ge0,&P'\ge0,\\
\ge(y-x)\min P'>0,&\min P'>0,
\end{cases}
\quad(x<y).
\]
Applied to \(P=B_nL\), these identities prove
\(B_nL\in\mathcal{L}\cap C^\infty([0,1])\).

For the approximation,
\[
\begin{aligned}
\sum_{j=0}^njb_{n,j}(x)
&=nx\sum_{j=0}^{n-1}b_{n-1,j}(x)=nx,\\
\sum_{j=0}^nj(j-1)b_{n,j}(x)
&=n(n-1)x^2\sum_{j=0}^{n-2}b_{n-2,j}(x)=n(n-1)x^2,\\
\sum_{j=0}^n(j/n-x)^2b_{n,j}(x)
&=\frac{n(n-1)x^2+nx}{n^2}-2x\frac{nx}{n}+x^2\\
&=\frac{x(1-x)}n,\\
\sum_{|j/n-x|>\delta}b_{n,j}(x)
&\le\frac1{\delta^2}\sum_{j=0}^n(j/n-x)^2b_{n,j}(x)
=\frac{x(1-x)}{n\delta^2}.
\end{aligned}
\]
Since \(0\le L\le1\),
\[
\begin{aligned}
|B_nL(x)-L(x)|
&\le\sum_{j=0}^n|L(j/n)-L(x)|b_{n,j}(x)\\
&\le\omega_L(\delta)\sum_{|j/n-x|\le\delta}b_{n,j}(x)
+\sum_{|j/n-x|>\delta}b_{n,j}(x)\\
&\le\omega_L(\delta)+\frac{x(1-x)}{n\delta^2}
\le\omega_L(\delta)+\frac1{4n\delta^2}.
\end{aligned}
\]
Here \(\max_{x\in[0,1]}x(1-x)=1/4\), attained at \(x=1/2\),
so the constant \(1/4\) is the exact uniform variance bound.
The cutoff \(\delta>0\) is arbitrary; taking its infimum proves
the estimate in \eqref{eq:bernstein-uniform-bound}.
The compactness argument for the moduli in
Lemma~\ref{lem:parameter-integration}, applied to
\(K(r,x)=L(x)\), gives \(\omega_L(\delta)\to0\).
For a prescribed \(\varepsilon>0\), choose any \(\delta>0\)
with \(\omega_L(\delta)<\varepsilon\). Then
\[
n\ge2,\qquad
n>\frac1{4\delta^2(\varepsilon-\omega_L(\delta))}
\ \Longrightarrow\ 
\|B_nL-L\|_\infty<\varepsilon.
\]
\end{proof}

\clearpage
\printbibliography[title={References}]
\end{document}